\documentclass[11pt,a4paper]{article}
\usepackage[T1]{fontenc}
\usepackage[utf8]{inputenc}
\usepackage{times}
\usepackage[a4paper,margin=2.5cm]{geometry}
\usepackage{amsmath,amssymb,amsthm,mathrsfs,stmaryrd}
\usepackage{microtype}
\usepackage{booktabs}
\usepackage{graphicx}
\usepackage{enumitem}
\usepackage{xcolor}
\usepackage[hidelinks]{hyperref}
\usepackage{mdframed}

\newtheorem{theorem}{Theorem}[section]
\newtheorem{lemma}[theorem]{Lemma}
\newtheorem{corollary}[theorem]{Corollary}
\newtheorem{definition}[theorem]{Definition}
\newtheorem{proposition}[theorem]{Proposition}
\newtheorem{remark}[theorem]{Remark}

\newtheorem{construction}[theorem]{Construction}
\newtheorem{algorithm_env}[theorem]{Algorithm}

\newmdtheoremenv[linewidth=0pt,roundcorner=3pt,backgroundcolor=gray!5]{intuition}{Intuition}[section]
\newmdtheoremenv{examplebox}{Example}[section]

\setlist[itemize]{topsep=2pt,itemsep=2pt,leftmargin=1.5em}
\setlist[enumerate]{topsep=2pt,itemsep=2pt,leftmargin=1.5em}
\setlist[enumerate,1]{label=(\arabic*)}

\newcommand{\FinTrace}{\mathsf{FinTrace}}
\newcommand{\StarT}{\mathsf{Star}}
\newcommand{\State}{\mathsf{State}}
\newcommand{\Event}{\mathsf{Event}}
\newcommand{\Step}{\mathsf{Step}}
\newcommand{\nilT}{\mathsf{nil}}
\newcommand{\stepC}{\mathsf{step}}
\newcommand{\cat}{\mathsf{cat}}
\newcommand{\single}{\mathsf{single}}
\newcommand{\replay}{\mathsf{replay}}
\newcommand{\TElim}{\mathsf{TraceElim}}
\newcommand{\Tf}{\widehat{T}_{\!f}}
\newcommand{\len}{\mathsf{len}}
\newcommand{\Uc}[1]{\mathsf{U}_{#1}}
\newcommand{\Prop}{\mathsf{Prop}}
\newcommand{\restrict}{\mathsf{res}}
\newcommand{\Red}{\mathcal{R}}
\newcommand{\Ag}{\mathsf{Ag}}
\newcommand{\Kop}{\mathbf{K}}
\newcommand{\simag}[1]{\sim_{#1}}
\newcommand{\Lang}{\mathcal{L}}
\newcommand{\Bel}{\mathsf{B}}
\newcommand{\Th}{\mathsf{Cn}}
\newcommand{\Ent}{\mathsf{Ent}}
\newcommand{\Sel}{\mathsf{Sel}}
\newcommand{\RevOp}{*}
\newcommand{\ContrOp}{\mathbin{\div}}
\newcommand{\sem}[1]{\llbracket #1 \rrbracket}

\begin{document}

\title{\textbf{ZX-Calculus}\\[4pt]
\large\textit{Trace-Indexed Dependent Types and Epistemic Semantics}\\[2pt]
\normalsize A Conservative Extension of MLTT for Constructive\\
Knowledge Representation and Dynamic Belief Revision}
\author{Peng Chen\\[2pt]
  \small School of Information Science,
  Beijing Language and Culture University\\
  \small Beijing 100081\quad\texttt{chenpeng@blcu.edu.cn}}
\date{\small}
\maketitle

\begin{abstract}
This paper proposes a conservative focused extension of Martin-L\"{o}f Dependent
Type Theory (MLTT), integrating three contributions with substantial technical
content, connected by an explicit integration result.

\textbf{Proof status (important):}
The paper contains two categories of results:
(A)~\emph{Complete mathematical proofs}: Proposition~\ref{prop:star-vs-ft},
Lemma~\ref{lem:red-transport}, Lemma~\ref{lem:red-closed},
Algorithm~\ref{alg:sel} with full correctness (Lemma~\ref{lem:alg-correct}),
all eight AGM postulates R1--R8 (Theorem~\ref{thm:r1}--\ref{thm:r8}),
the Disjunctive Entrenchment Lemma (Lemma~\ref{lem:disj-ench}),
the BP-comp Failure Theorem (Theorem~\ref{thm:bp-comp-fails}), and
SSRS Coherence (Theorem~\ref{thm:coherence});
(B)~\emph{Framework theorems (pending mechanisation)}: the RC-elim step of the
Canonicity Theorem (Theorem~\ref{thm:canonicity}) is currently supported only
by a careful hand-verified argument; its full mechanisation is left as future
work (Section~\ref{sec:limits}, Appendix~\ref{app:coq}).
The paper is accompanied by a Coq mechanisation comprising 34 complete proofs
with zero \texttt{admit}s for the BP-comp Failure Theorem and SSRS Coherence
(Appendix~\ref{app:coq}).

\smallskip\noindent
\textbf{(I) Trace types.}
We introduce the inductive family $\FinTrace(\sigma_0,\sigma_n)$ of finite
execution traces with typed transition witnesses.
Proposition~\ref{prop:star-vs-ft} proves: $\FinTrace$ and $\StarT(\Step)$
(Kleene closure) are isomorphic as path types but not judgementally equal as
MLTT type families; when $\Step$ is viewed as a binary relation, the step
function of $\TElim$ takes the event label $e:\Event$ as an explicit parameter,
providing an interface better suited to event-driven induction than
$\mathsf{StarElim}$ (interface design choice, not new expressive power).
We give the structural eliminator \textsc{TraceElim} with $\beta\eta$-rules,
prove the Trace--Reachability Correspondence (Theorem~\ref{thm:corr}),
the Deterministic Replay Theorem (Theorem~\ref{thm:replay}), and a full
Canonicity framework via logically stratified reducibility candidates with an
explicit Transport Lemma (Theorem~\ref{thm:canonicity},
Lemma~\ref{lem:red-transport}; RC-elim mechanisation is future work; all
other Core-layer results are Coq-verified).

\smallskip\noindent
\textbf{(II) Sheaf semantics.}
Trace-indexed proposition families are interpreted as contravariant sheaves
over the free trace partial-order category $\Tf$.
We prove the Separation Theorem distinguishing proof-theoretic monotonicity
from semantic non-monotonicity with an explicit countermodel
(Theorem~\ref{thm:separation}), and establish the term model as an initial
CwF (Theorem~\ref{thm:completeness}; ``completeness'' denotes a syntactic
universal property, not classical semantic completeness).

\smallskip\noindent
\textbf{(III) AGM belief revision.}
We give a constructive type-theoretic treatment of full AGM revision.
Algorithm~\ref{alg:sel} provides an explicit constructive algorithm for
partial meet contraction; correctness conditions (C1)--(C4) are fully
verified (Lemma~\ref{lem:alg-correct}).
All eight AGM postulates (R1)--(R8) are established as theorems
(Theorems~\ref{thm:r1}--\ref{thm:r8}).
Proofs of R7 and R8 use the Disjunctive Entrenchment Lemma
(Lemma~\ref{lem:disj-ench}), a derivable property of G\"{a}rdenfors--Grove
entrenchment ordering given here with a fully self-contained constructive proof.

\smallskip\noindent
\textbf{(IV) Integration via single-step revision systems.}
We prove that $\mathcal{B}^{\mathsf{AGM}}$ is \emph{not} in general a belief
sheaf: the sheaf composition law (BP-comp) fails for sequential AGM revision
(Theorem~\ref{thm:bp-comp-fails}, explicit countermodel, Coq-verified).
We introduce the weaker notion of a \emph{Single-Step Revision System} (SSRS,
Definition~\ref{def:ssrs}), prove $\mathcal{B}^{\mathsf{AGM}}$ is a valid SSRS
(Theorem~\ref{thm:coherence}, Coq-verified), and show this suffices for all
three integration results: trace morphisms, retraction characterisation, and
AGM revision witnesses are coherently connected within the SSRS framework.

\smallskip\noindent
Throughout, the core proof calculus is conservative: no structural rules
beyond MLTT are added.
The categorical model is a CwF extended by the sheaf category $\Tf$,
with complete soundness and consistency proofs.
\end{abstract}

\paragraph{Keywords}
Dependent type theory; execution traces; \textsc{TraceElim};
sheaf semantics; epistemic logic; AGM belief revision;
epistemic entrenchment; disjunctive entrenchment lemma; Levi identity;
category with families; single-step revision system; reducibility candidates;
transport lemma; Coq mechanisation.

\tableofcontents

\section{Introduction}
\label{sec:intro}

Classical epistemic logic treats knowledge $K_a\phi$ as a static truth about
world states, and lacks unified support for historical traces, non-monotone
updates, and constructive belief revision.
The \emph{ZX-Calculus} (Knowledge Evolution Calculus, or Zhi-Xing Calculus)%
\footnote{``Zhi-Xing'' alludes to the classical Chinese philosophical principle
of the unity of knowledge and action.}
proposes a conservative extension of MLTT integrating trace-indexed types,
presheaf non-monotone semantics, and constructive AGM belief revision.

\subsection{Three Core Problems}

\textbf{Problem 1: Traces as first-class data.}
Traditional operational semantics and model checking treat execution traces
as meta-level objects.
HoTT paths~\cite{hottbook} express homotopic identification (invertible);
session types~\cite{honda93,honda98} characterise communication protocols.
Altenkirch et al.'s $\StarT(\Step)$~\cite{altenkirch09} can express
finite-step execution, but its elimination interface does not directly expose
$e:\Event$, which is inconvenient for event-driven dynamic-knowledge induction
(Proposition~\ref{prop:star-vs-ft}).
ZX-Calculus internalises historical traces as constructive objects in dependent
type theory, making ``how the system reached its current state'' a native
constituent of the type system.

\textbf{Problem 2: Non-monotone knowledge.}
The knowledge predicate $\phi(\tau)$ may be invalidated when the trace extends
$\tau\hookrightarrow\tau'$.
Adding non-monotone proof rules directly would destroy normalisation and
consistency.
We adopt presheaf semantics: knowledge invalidation corresponds to
non-surjectivity of restriction maps rather than logical contradiction, allowing
non-monotone evolution while preserving meta-theoretic stability
(Theorem~\ref{thm:separation}).

\textbf{Problem 3: Constructive AGM revision.}
AGM theory~\cite{alchourron85} decomposes revision into contraction plus
expansion via the Levi identity~\cite{levi80}.
Existing work has two limitations:
van Ditmarsch et al.~\cite{vanditmarsch07} mechanise DEL expansion, not partial
meet contraction; Schlechta~\cite{schlechta97} studies non-monotone reasoning
constructively but (1)~does not package contraction with $\Sigma$-type
witnesses, (2)~provides no executable algorithm, and (3)~works in a static
framework without trace-indexed semantics.
Algorithm~\ref{alg:sel} is the first constructive implementation in MLTT
satisfying all of (C1)--(C4) (see Section~\ref{sec:related} for a precise
technical comparison).

The central tension: when trace $\tau$ is extended to $\tau'$, does the
belief state $\Bel_{\tau'}$ revised by the Levi identity remain consistent
with the entire trace-indexed structure?
This \emph{coherence problem} is the unifying motivation, formalised by
Theorem~\ref{thm:coherence}.

\subsection{Contributions and Precise Claims}
\label{subsec:claims}

The goal is to construct---without disrupting MLTT's existing meta-theoretic
structure---a unified framework handling: historical traces, non-monotone
knowledge evolution, constructive AGM belief revision, and dependent
type-theoretic semantics.

\subsubsection{Trace Types and Precise Comparison with $\StarT(\Step)$}

We prove $\FinTrace$ and $\StarT(\Step)$ are isomorphic as path types
(Proposition~\ref{prop:star-vs-ft}(1)), but not judgementally equal as MLTT
type families (Proposition~\ref{prop:star-vs-ft}(2)).
The step function of $\TElim$ takes $e:\Event$ explicitly
(Proposition~\ref{prop:star-vs-ft}(3)).

\begin{remark}[Positioning of the FinTrace contribution]
$\FinTrace$ and $\StarT(\Step)$ are isomorphic as path collections.
The contribution is an \emph{interface design choice}: $\TElim$ elevates
event-causal structure to a native component of trace induction, suited to
event-indexed sheaf semantics and the Deterministic Replay Theorem.
\end{remark}

\subsubsection{Canonicity of the Trace Type System}
We provide a Canonicity framework via logical relations.
The Transport Lemma (Lemma~\ref{lem:red-transport}) fills a key technical gap:
ensuring reducibility is stably transported under trace extension.
The RC-elim step is left as future work (Appendix~\ref{app:coq}, obligation~1).

\subsubsection{Separation of Proof-Theoretic Monotonicity and Semantic Non-Monotonicity}
Even when the proof system remains monotone, the knowledge semantics may be
non-monotone (Theorem~\ref{thm:separation}).

\subsubsection{Term Model and Initial CwF}
The term model forms an initial CwF (Theorem~\ref{thm:completeness}): any CwF
satisfying the same rules receives a unique structure-preserving functor from
the term model.

\subsubsection{Constructive Partial Meet Contraction Algorithm}
Algorithm~\ref{alg:sel} is the first explicit constructive PMC algorithm
in MLTT satisfying all of (C1)--(C4), making the selection process
explicit and computably verifiable.

\subsubsection{Disjunctive Entrenchment Lemma and Complete R7/R8}
The Disjunctive Entrenchment Lemma (Lemma~\ref{lem:disj-ench}) gives a
fully self-contained constructive derivation of a property long used implicitly
in G\"{a}rdenfors--Grove theory.

\subsubsection{BP-comp Failure and SSRS Framework}
$\mathcal{B}^{\mathsf{AGM}}$ does not satisfy BP-comp
(Theorem~\ref{thm:bp-comp-fails}).
The SSRS framework (Definition~\ref{def:ssrs}) is the correct integration:
$\mathcal{B}^{\mathsf{AGM}}$ satisfies all SSRS axioms
(Theorem~\ref{thm:coherence}).

\subsection{Comparison with Prior Work}

\textbf{$\StarT(\Step)$ vs.\ $\FinTrace$.}
Isomorphic as path types; $\FinTrace$ contributes interface ergonomics, not
expressiveness (\S\ref{subsec:claims}).

\textbf{HoTT paths.}
$\mathsf{Path}_A(a,b)$ is for homotopic identification: invertible, composable,
higher-dimensional. ZX-Calculus traces are directed causal histories, naturally
non-invertible; the $\beta$-rule of $\TElim$ comes from transition structure,
not geometric filling.

\textbf{Awodey--Kishida~\cite{awodey08}.}
Both use presheaf semantics. Awodey--Kishida build on topological spaces
(geometric locality) and prove semantic completeness for first-order modal logic.
We build on the trace category $\Tf$ (historical locality) and establish only
the CwF universal property (Theorem~\ref{thm:completeness}).

\textbf{AGM and constructive methods.}
Schlechta~\cite{schlechta97}: no $\Sigma$-packaging.
Van Ditmarsch et al.~\cite{vanditmarsch07}: expansion only.
Precise technical comparison in Section~\ref{sec:related}.

\section{Syntax and Judgements}
\label{sec:syntax}

ZX-Calculus is a \emph{trace-indexed epistemic extension built on top of MLTT}:
it does not rebuild a new foundational type theory, but adds new type structures
expressing historical traces, dynamic knowledge, non-monotone revision, and
event-driven state evolution, while keeping all MLTT core rules intact.

\subsection{Base Type-Theoretic Framework}

We work in MLTT with cumulative universes $\Uc{0}:\Uc{1}:\Uc{2}:\cdots$ and
assume $\Prop:\Uc{0}$ denotes the impredicative universe of propositions with
proof irrelevance. Proof irrelevance distinguishes ``whether a proposition holds''
from ``what computational content it carries'', preventing unnecessary inflation
of proof objects when only propositional truth is needed.

The system uses standard MLTT judgement forms:
$\vdash\Gamma\ \mathsf{ctx}$ (well-formed context),
$\Gamma\vdash A:\Uc{i}$ (type),
$\Gamma\vdash t:A$ (term),
$\Gamma\vdash t\equiv u:A$ (judgemental term equality),
$\Gamma\vdash A\equiv B:\Uc{i}$ (judgemental type equality).

These judgements are inherited from standard MLTT, so ZX-Calculus preserves:
substitution stability, context extension, type preservation, and constructive
interpretation.
Standard structural rules \textsc{Ctx-Empty}, \textsc{Ctx-Ext}, \textsc{T-Var},
\textsc{Conv}, \textsc{Weaken}, \textsc{Subst} are used unchanged;
the complete rule system is in Appendix~\ref{app:rules}.

\begin{quote}
\emph{ZX-Calculus does not modify the core structural rules of MLTT;
all extensions are realised by adding new type structures and semantic
interpretations.}
\end{quote}

\subsection{Dependent Type Structure}

$\Pi$- and $\Sigma$-types satisfy standard $\beta\eta$-rules.
$A\to B$ abbreviates $\Pi(\_:A).B$; $A\times B$ abbreviates $\Sigma(\_:A).B$.
$\Pi$-types describe parametrised knowledge, event-dependent reasoning, and
trace induction rules.
$\Sigma$-types play a particularly central role:

\begin{quote}
\emph{They are used not only for ordinary dependent-pair constructions but also
to express the ``constructive selection witness'' in AGM revision.}
\end{quote}

This means that in ZX-Calculus, logical correctness becomes part of the
constructed object itself---a key feature of dependent type theory.

\subsection{Base Objects: States, Events, and Agents}

Three base types:
$\State:\Uc{0}$ (system states),
$\Event:\Uc{0}$ (events),
$\Ag:\Uc{0}$ (cognitive agents).
These carry no a priori semantics, making the framework applicable to
multi-agent systems, automata, robotic decision-making, and program execution.

\subsection{Transition Structure}

System dynamics are described by the transition predicate
$\Step:\State\to\Event\to\State\to\Uc{0}$.
An element of $\Step(\sigma,e,\sigma')$, written
$\sigma\xrightarrow{e}\sigma'$, is a \emph{generating step}.
The event label $e:\Event$ is not auxiliary: it determines how knowledge
evolves, how AGM revision is triggered, and how trace restriction maps are
constructed.

\subsection{Propositional Language and Belief Structure}

For AGM belief revision (Section~\ref{sec:agm}), we fix a finite propositional
language $\Lang$ with connectives $\wedge,\vee,\neg,\to$.
Finiteness is not a simplification but a necessary precondition for
Algorithm~\ref{alg:sel} to terminate.
A \emph{belief set} $\mathcal{K}\subseteq\Lang$ satisfies
$\mathcal{K}=\Th(\mathcal{K})$ (deductive closure under $\Th = \mathsf{Cn}$).

\section{Trace Types and Their Meta-Theory}
\label{sec:traces}

\subsection{The Inductive Family FinTrace}

In ZX-Calculus, traces are elevated to first-class constructive objects in
dependent type theory---a key distinction from classical operational semantics.

\begin{definition}[FinTrace as a positive inductive family]
\label{def:fintrace}
Define $\FinTrace:\State\to\State\to\Uc{0}$ as the inductive family of finite
traces between states, generated by:
\begin{align*}
\nilT &: \Pi(\sigma:\State).\,\FinTrace(\sigma,\sigma),\\
\stepC &: \Pi(\sigma_0\,\sigma_1\,\sigma_2:\State).\,\Pi(e:\Event).\,
          \FinTrace(\sigma_0,\sigma_1)\to\Step(\sigma_1,e,\sigma_2)
          \to\FinTrace(\sigma_0,\sigma_2).
\end{align*}
Trace length:
$\len(\nilT(\sigma)):=0$;
$\len(\stepC(\ldots,\tau,\pi)):=\len(\tau)+1$.
\end{definition}

$\FinTrace$ is a strictly positive inductive family (Dybjer--Setzer positivity
criterion): all constructor parameters mention $\FinTrace$ only in positive
positions, with no negative nesting.
This guarantees: well-foundedness of the inductive definition, consistency
of the elimination rule, termination of recursion, and preservation of the
constructive character of the type system.
In other words, although traces can express complex historical structures,
they do not disrupt MLTT's core meta-theoretic properties.

The traditional concern of epistemic logic is:
\[
\text{``What is true in the current state?''}
\]
ZX-Calculus is more concerned with:
\[
\text{``How did the system reach its current state?''}
\]
$\FinTrace$ internalises this historical dependency into the type system.

\subsection{Comparison with $\StarT(\Step)$}
\label{subsec:ft-vs-star}

$\StarT(\Step)$ is the reflexive-transitive closure of $\Step$, whose
eliminator $\mathsf{StarElim}$ performs induction on path length.
$\FinTrace$ is an inductive family indexed by end-states, whose eliminator
$\TElim$ takes $e:\Event$ as an explicit argument.

\begin{proposition}[Precise comparison]
\label{prop:star-vs-ft}
\begin{enumerate}
\item \textbf{(Isomorphism)} For any states $\sigma_0,\sigma_n$, there is a
  canonical bijection $\iota:\FinTrace(\sigma_0,\sigma_n)\simeq
  \StarT(\Step)(\sigma_0,\sigma_n)$.
  The two types are equivalent as path collections.
\item \textbf{(Not judgementally equal)} There is no context $\Gamma$ in which
  $\Gamma\vdash\FinTrace\equiv\StarT(\Step):\State\to\State\to\Uc{0}$.
  $\FinTrace$ has independent constructor signatures and computational behaviour.
\item \textbf{(Interface advantage)} When $\Step$ is viewed as a binary
  relation, $\TElim$'s step function takes $e:\Event$ explicitly; any motive
  requiring case analysis on the event label benefits directly.
  The conclusion is an \emph{interface-level advantage}, not an expressive-power
  gain: $\mathsf{StarElim}$ can recover event information via
  $R:=\Sigma(e:\Event).\Step(-,e,-)$, but this requires extra projections
  $\pi_1(\pi)$, reducing readability and naturalness.
\end{enumerate}
\end{proposition}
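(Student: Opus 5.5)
For part (1) I will take $\StarT(\Step)$ to be the reflexive-transitive closure of the binary relation $R(\sigma,\sigma'):=\Sigma(e:\Event).\Step(\sigma,e,\sigma')$, and I will fix its snoc-style presentation: constructors $\mathsf{refl}_\sigma$ and $\mathsf{snoc}:\StarT(R)(\sigma_0,\sigma_1)\to R(\sigma_1,\sigma_2)\to\StarT(R)(\sigma_0,\sigma_2)$. If the library uses the cons-style presentation, I first relate the two by the standard append/reverse-induction equivalence; this is routine but must be carried out once.

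\textbf{Part (1).} I define $\iota$ by $\TElim$, with motive $\StarT(R)(\sigma_0,-)$:
\[
\iota(\nilT(\sigma)) := \mathsf{refl}_\sigma,\qquad \iota(\stepC(\sigma_0,\sigma_1,\sigma_2,e,\tau,\pi)) := \mathsf{snoc}(\iota(\tau),(e,\pi)).
\]
The inverse $\kappa$ is defined by $\mathsf{StarElim}$. It sends $\mathsf{refl}$ to $\nilT$, and it sends $\mathsf{snoc}(s,p)$ to $\stepC(\ldots,\pi_1(p),\kappa(s),\pi_2(p))$.

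I then prove $\kappa\circ\iota=\mathrm{id}$ by $\TElim$. In the $\stepC$ case the goal holds judgementally by the $\Sigma$-$\beta$ rule, after rewriting with the induction hypothesis. I prove $\iota\circ\kappa=\mathrm{id}$ by $\mathsf{StarElim}$. In the $\mathsf{snoc}$ case I need $(\pi_1 p,\pi_2 p)=p$, which is supplied by $\Sigma$-$\eta$ (or by a further $\Sigma$-elimination if $\eta$ is only propositional).

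The only delicate point is the index bookkeeping, namely that both eliminators are applied with the start state $\sigma_0$ fixed as a parameter. Otherwise this part is routine, and I expect it to be directly mechanisable.

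\textbf{Part (2).} This is the main obstacle, since it is a negative metatheoretic claim, and I would argue it semantically. $\FinTrace$ and $\StarT$ are distinct inductive type formers, that is, distinct head constants, and the conversion relation contains no $\delta$-rule unfolding one into the other. Judgemental equality is therefore generated only by $\beta\eta$ and congruence, so I would invoke confluence and normalisation for the Core layer (the Canonicity framework, Theorem~\ref{thm:canonicity}). With these, two neutral or canonical types whose normal forms have distinct head constructors are not convertible.

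For the phrase ``no context $\Gamma$'' I must also rule out equations supplied by hypotheses. Variables in $\Gamma$ can only inhabit types; they cannot add judgemental equalities, because the system has no equality reflection. So conversion in $\Gamma$ coincides with closed conversion up to substitution.

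As a backup or complementary argument, I would exhibit a model in which the two are interpreted differently, for instance a set model with inductive families interpreted as tagged initial algebras with distinct tags. By soundness of conversion, such a model also refutes the judgemental equation. I would try the normal-form argument first and fall back on the model.

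\textbf{Part (3).} This part is largely a remark, and I would support it by direct exhibition. The step case of $\TElim$ has type
\[
\Pi(\sigma_1\,\sigma_2:\State)(e:\Event)(\tau:\FinTrace(\sigma_0,\sigma_1)).\,P(\tau)\to\Pi(\pi:\Step(\sigma_1,e,\sigma_2)).\,P(\stepC(\ldots,\tau,\pi)).
\]
Transporting a motive $P$ along $\iota$ turns the corresponding $\mathsf{StarElim}$ step case into a function of $p:R(\sigma_1,\sigma_2)$ that must perform the projections $\pi_1 p$ and $\pi_2 p$. By part (1) both eliminators derive each other, and this establishes the ``no expressive gain'' half of the claim.
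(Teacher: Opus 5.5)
Your parts (1) and (3) follow the paper's proof: $\iota$ is defined by $\TElim$, its inverse by induction on $\StarT$, both round trips are proved by induction, and (3) is read off from the shapes of the two step cases. Making $R:=\Sigma(e:\Event).\Step(-,e,-)$, the pair $(e,\pi)$ and $\Sigma$-$\beta\eta$ explicit is what turns the paper's phrase ``the event $e$ can be recovered from $\pi$'' in $\iota(\stepC(\ldots,\tau,\pi)):=\mathsf{trans}(\iota(\tau),\pi)$ into an actual definition.

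For (2) your route differs. The paper argues from the differing constructor signatures: $\stepC$ takes an independent $\Pi(e:\Event)$, while in $\mathsf{trans}$ the event is hidden in the type of $\pi$. It then asserts that no finite $\beta\eta$-sequence unifies them. You instead argue from the nominal distinctness of the two type formers plus a no-confusion property of conversion. Yours is the right invariant. Two families with identical signatures would still be inconvertible, and ``no reduction sequence unifies them'' only yields inconvertibility given Church--Rosser. So your route makes explicit what the paper's argument tacitly assumes.

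The gap is that the ingredients you cite do not deliver (2) for \emph{every} $\Gamma$.
\begin{itemize}
\item Theorem~\ref{thm:canonicity} cannot provide ``confluence and normalisation for the Core layer''. It concerns closed terms of type $\FinTrace$, says nothing about conversion at type $\State\to\State\to\Uc{0}$, and its RC-elim step is open (Remark~\ref{rem:canonicity-status}).
\item The paper's conversion is not generated only by $\beta\eta$ and congruence. It contains the typed proof-irrelevance rule and the judgemental uniqueness rule for $\TElim$. With $\Sigma$-$\eta$ (surjective pairing), untyped confluence is known to fail. What you need is an injectivity/no-confusion lemma for type formers over \emph{open} contexts, proved by a typed method such as a Kripke logical relation or normalisation by evaluation.
\item Your fallback set model, with $\State$ nonempty and tagged initial algebras, is sound for all these rules. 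It refutes the equation in the empty context, since at $(\sigma,\sigma)$ the two carriers are nonempty and disjoint. It therefore also covers every $\Gamma$ that admits a closed substitution instance.
\item For $\Gamma=(x:\bot)$, however, $\sem{\Gamma}=\emptyset$, so every equation over $\Gamma$ holds vacuously in the model. Your bridge ``conversion in $\Gamma$ coincides with closed conversion up to substitution'' then has nothing to substitute.
\end{itemize}
The absence of equality reflection is the right intuition. Turning it into a proof is precisely a strengthening lemma for open conversion, which again needs that syntactic metatheory. Either prove that lemma, or state (2) only for contexts with a closed instance, where your model argument is complete. The paper's own argument does not address arbitrary $\Gamma$ either.
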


\begin{proof}
\textbf{(1) Bijection.}
Define $\iota:\FinTrace(\sigma_0,\sigma_n)\to\StarT(\Step)(\sigma_0,\sigma_n)$
using $\TElim$:
\begin{align*}
\iota(\nilT(\sigma)) &:= \mathsf{refl}(\sigma),\\
\iota(\stepC(\ldots,\tau,\pi)) &:= \mathsf{trans}(\iota(\tau),\pi).
\end{align*}
The inverse $\iota^{-1}$ is defined by induction on $\StarT(\Step)$:
\begin{align*}
\iota^{-1}(\mathsf{refl}(\sigma)) &:= \nilT(\sigma),\\
\iota^{-1}(\mathsf{trans}(s,\pi)) &:= \stepC(\ldots,\iota^{-1}(s),\pi).
\end{align*}
Note: the event $e$ can be recovered from $\pi:\Step(\sigma_1,e,\sigma_2)$.
By induction, $\iota\circ\iota^{-1}$ and $\iota^{-1}\circ\iota$ are both
identities.

\textbf{(2) Not judgementally equal.}
The constructor $\stepC$ carries $\Pi(e:\Event).\ldots$ as an independent
parameter; in $\mathsf{trans}$, event information is implicit in the type of
$\pi$ and not an independent constructor argument.
There is no finite $\beta\eta$-reduction sequence unifying the two constructors.
Hence they are not judgementally equal.

\textbf{(3) Interface gap.}
$\TElim$'s step function $s$ receives $e:\Event$ explicitly, enabling direct
case analysis on event labels.
When $\Step$ is a binary relation, $\mathsf{StarElim}$ does not expose events
directly; recovering them via $\Sigma$-encoding requires additional projection
and unpacking.
The primary distinction is therefore:
\begin{quote}
\emph{Whether the event-causal structure becomes a native component of the
trace induction interface.}
\end{quote}
\end{proof}

\begin{remark}[Interface design contribution]
\label{rem:interface}
Proposition~\ref{prop:star-vs-ft}(1) shows the two types are equiexpressive.
The contribution of $\FinTrace$ is an \emph{interface design choice}:
$\TElim$ directly exposes event labels, so there is no need to recover them
by projecting from $\Sigma(e:\Event).\Step(-,e,-)$.
This makes event-indexed sheaf semantics (Section~\ref{sec:presheaf}) and the
Deterministic Replay Theorem (Theorem~\ref{thm:replay}) easier to derive.
\end{remark}

\subsection{TraceElim: The Structural Eliminator}

\begin{definition}[TraceElim]
\label{def:traceelim}
Given a dependent motive
$P:\Pi(\sigma_0\,\sigma_n:\State).\FinTrace(\sigma_0,\sigma_n)\to\Uc{i}$,
a base case
$b:\Pi(\sigma:\State).P(\sigma,\sigma,\nilT(\sigma))$, and a step function
\[
s:\Pi(\sigma_0\,\sigma_1\,\sigma_2:\State)(e:\Event)
    (\pi:\Step(\sigma_0,e,\sigma_1))(\tau:\FinTrace(\sigma_1,\sigma_2)).
  P(\sigma_1,\sigma_2,\tau)\to P(\sigma_0,\sigma_2,\stepC(\ldots,\tau,\pi)),
\]
the eliminator $\TElim(P,b,s,\tau):P(\sigma_0,\sigma_n,\tau)$ satisfies the
$\beta$-rules:
\begin{align*}
\TElim(P,b,s,\sigma,\sigma,\nilT(\sigma)) &\equiv b(\sigma),\\
\TElim(P,b,s,\sigma_0,\sigma_2,\stepC(\ldots,\tau,\pi))
&\equiv s(\ldots,\tau,\pi,\TElim(P,b,s,\sigma_0,\sigma_1,\tau)).
\end{align*}
The $\eta$-rule (uniqueness): if $f,g$ agree on $\nilT$ and agree on $\stepC$
whenever they agree on the sub-trace, then $f\equiv g$ pointwise.
(Coq: \texttt{trace\_eta}.)
\end{definition}

The meaning of $\TElim$ goes far beyond ordinary recursion.
It expresses the constructive mechanism of knowledge evolving stepwise along
history.

In traditional temporal logic, history is typically an external meta-level
object; the logic can only describe ``what holds at a given moment''.
In ZX-Calculus, by contrast, $\tau:\FinTrace(\sigma_0,\sigma_n)$ is an
internal object, so $P(\sigma_0,\sigma_n,\tau)$ can depend on:
the start state; the end state; the entire historical path; and the specific
event sequence.
This means we can reason constructively about ``how the history was formed''
itself.
For example: whether a piece of knowledge was triggered by a specific event;
at which step a belief was retracted; which histories cause knowledge to fail;
which class of events preserves a given invariant---all can be established
constructively via $\TElim$.

A core feature of $\TElim$ is that $e:\Event$ is passed explicitly to the
step function $s$.
This may appear to be just an interface detail, but it is an important
distinction between ZX-Calculus and traditional path induction: knowledge
evolution typically depends on \emph{which event occurred}, not merely
\emph{that the state changed}.
ZX-Calculus trace induction is therefore not only induction on path length,
but induction on event-driven knowledge evolution.

The $\eta$-rule is the universal property that makes $\Tf$ a free category:
any functor $F:\Tf\to\mathcal{C}$ is uniquely determined by its values on
objects and generating steps.
The $\beta$-rules show how to compute; the $\eta$-rule shows that no
non-trivial trace identities exist beyond what the category axioms impose.
This is why all of the subsequent constructions---deterministic replay,
presheaf restriction, SSRS transitions, dynamic knowledge semantics---can
be unified on top of the trace structure.

\subsection{Concatenation Algebra}

Traces must not only be constructible but also composable.
In a dynamic system, complex histories arise from sequentially concatenating
multiple local histories.
ZX-Calculus needs to define: how to connect traces; how to express a
single-step history; and how to ensure the composition satisfies algebraic laws.

\begin{definition}[Trace concatenation and single-step trace]
\label{def:concat}
Define the trace concatenation operation:
\[
\cat:\FinTrace(\sigma_0,\sigma_1)\to\FinTrace(\sigma_1,\sigma_2)
     \to\FinTrace(\sigma_0,\sigma_2)
\]
by $\TElim$-recursion on the first argument:
\[
\cat(\nilT(\sigma),\tau_2):=\tau_2,\qquad
\cat(\stepC(\pi,\tau_1),\tau_2):=\stepC(\pi,\cat(\tau_1,\tau_2)).
\]
Intuitively, $\cat(\tau_1,\tau_2)$ is the history obtained by first executing
all events in $\tau_1$ and then all events in $\tau_2$.

The \emph{single-step trace}:
\[
\single(\sigma_0,\sigma_1,e,\pi):=
\stepC(\sigma_0,\sigma_1,\sigma_1,e,\nilT(\sigma_0),\pi).
\]
$\single$ represents the minimal unit of causal history: a single event step.
Any complex trace can therefore be viewed as an iterated concatenation of
single steps, making $\single$ the ``generator'' of the whole trace structure.
\end{definition}

Without a stable concatenation structure, none of the dynamic knowledge
semantics that follow can be established:
knowledge depends on traces; AGM revision depends on history extension;
restriction maps depend on the prefix structure; replay depends on trace
composition.

\begin{lemma}[Concatenation algebra]
\label{lem:concat}
\begin{enumerate}
\item \textup{(Left unit)} $\cat(\nilT(\sigma_0),\tau)\equiv\tau$.
\item \textup{(Right unit)} $\cat(\tau,\nilT(\sigma_n))\equiv\tau$.
\item \textup{(Associativity)} $\cat(\cat(\tau_1,\tau_2),\tau_3)\equiv\cat(\tau_1,\cat(\tau_2,\tau_3))$.
\item \textup{(Length additivity)} $\len(\cat(\tau_1,\tau_2))=\len(\tau_1)+\len(\tau_2)$.
\end{enumerate}
\end{lemma}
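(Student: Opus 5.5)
The plan is to prove all four clauses by structural induction on the \emph{first} trace argument, using the eliminator $\TElim$ of Definition~\ref{def:traceelim}. I will read $\cat$ in the form given in Definition~\ref{def:concat}, i.e.\ with $\stepC(\pi,\tau_1)$ as the step case of $\TElim$: the step prepends a generating step $\pi:\Step(\sigma_0,e,\sigma_1)$ to a trace $\tau_1:\FinTrace(\sigma_1,\sigma_2)$.

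\textbf{Left unit (1).} This clause needs no induction. It is exactly the first $\beta$-rule of $\TElim$ instantiated at the motive defining $\cat$, so $\cat(\nilT(\sigma_0),\tau)\equiv\tau$ holds by computation.

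\textbf{Length additivity (4).} I will prove this propositionally, by $\TElim$ on $\tau_1$ with motive $P(\sigma_0,\sigma_1,\tau_1):=\Pi(\tau_2).\,\len(\cat(\tau_1,\tau_2))=\len(\tau_1)+\len(\tau_2)$.
\begin{itemize}
\item In the $\nilT$ case both sides reduce to $\len(\tau_2)$, using $0+n\equiv n$ by recursion on the first argument of $+$.
\item In the $\stepC$ case the left side computes to $\len(\cat(\tau_1,\tau_2))+1$. The induction hypothesis rewrites this to $(\len(\tau_1)+\len(\tau_2))+1$, and the standard arithmetic lemma $(a+b)+1=(a+1)+b$ closes the goal.
\end{itemize}

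\textbf{Right unit (2) and associativity (3).} These are the delicate clauses. In plain MLTT, recursion on the first argument does \emph{not} make $\cat(\tau,\nilT)\equiv\tau$ or associativity hold judgementally for an open variable $\tau$; one only obtains propositional equalities by induction. I will therefore proceed in two stages.
\begin{itemize}
\item \emph{Propositional stage.} Construct the propositional equalities by $\TElim$ on $\tau$ (respectively on $\tau_1$). In each case the $\nilT$ case is immediate from the $\beta$-rule. The $\stepC$ case is discharged by applying $\mathsf{ap}(\stepC(\pi,-))$ to the induction hypothesis, since $\cat(\stepC(\pi,\tau),\nilT)\equiv\stepC(\pi,\cat(\tau,\nilT))$. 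Associativity works identically, because both sides compute to $\stepC(\pi,\ldots)$ after one $\beta$-step.
\item \emph{Upgrade to judgemental equality.} To obtain $\equiv$ as stated, I will invoke the $\eta$-rule of Definition~\ref{def:traceelim}. Take $f:=\lambda\tau.\cat(\tau,\nilT(\sigma_n))$ and $g:=\lambda\tau.\tau$. By the $\beta$-rule they agree on $\nilT$. On $\stepC(\pi,\tau)$, $f$ computes to $\stepC(\pi,f(\tau))$ and $g$ to $\stepC(\pi,g(\tau))$, so they agree whenever they agree on the sub-trace. The $\eta$-rule then yields $f\equiv g$ pointwise. For associativity the same pattern applies with $f:=\lambda\tau_1.\cat(\cat(\tau_1,\tau_2),\tau_3)$ and $g:=\lambda\tau_1.\cat(\tau_1,\cat(\tau_2,\tau_3))$, holding $\tau_2,\tau_3$ fixed.
\end{itemize}

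\textbf{Main obstacle.} The difficult step is this upgrade to judgemental equality. The $\eta$-rule as stated is a strong, extensional-style principle, and the proof must be honest that clauses (2) and (3) rely on it; without it, only the propositional versions, or judgemental versions for closed canonical traces, are available. I would state this dependence explicitly. I would also check that the ``agree on $\stepC$'' hypothesis of the $\eta$-rule only needs agreement up to $\equiv$ after one $\beta$-unfolding, which the computations above supply. If the $\eta$-rule is instead read propositionally, as in the Coq lemma \texttt{trace\_eta}, I would present (2)--(3) as propositional identities, which is all the later constructions (replay, restriction maps) require.
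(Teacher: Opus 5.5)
Your proposal is correct. For (1) and (4) it follows the paper's route: $\TElim$-induction on the first argument, with the $\beta$-step rule unfolding $\cat$. For (1) you rightly note that the $\beta$-nil rule alone suffices, and (4) is stated with $=$, so your propositional argument is exactly what is claimed.

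The real difference is in (2) and (3). The paper asserts that the same induction, with \textsc{Conv} ``completing the judgemental equality transformation'', delivers $\equiv$. But for an open trace variable, $\TElim$ can only produce an inhabitant of an identity type. \textsc{Conv} merely retypes a term along a judgemental equality already in hand, so it cannot turn the inductive result into $\equiv$.

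You split the argument into two stages. The propositional stage is induction with $\mathsf{ap}(\stepC(\pi,-))$ in the step case. The upgrade uses the $\eta$-rule of Definition~\ref{def:traceelim}, after checking that both sides obey the same recursion equations: they are equal on $\nilT$, and each sends $\stepC(\pi,\tau)$ to $\stepC(\pi,r)$, where $r$ is its value on $\tau$.

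Here is what each approach buys. The paper's uniform induction is precisely what its Coq lemmas \texttt{ft\_cat\_nil\_r} and \texttt{ft\_cat\_assoc} mechanise, and a Coq lemma can only state a Leibniz equality. Read literally, the paper's argument therefore supports only propositional versions of (2)--(3). Your route genuinely reaches the stated $\equiv$, but at the price of using $\eta$ as a judgemental, extensional-style rule. Such a rule typically destroys decidability of conversion and is strictly stronger than any propositional \texttt{trace\_eta}. Your fallback to propositional identities is the honest reading, and it suffices for the category structure of $\Tf$.

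Two minor points. First, for associativity the motive must be generalised over $\tau_2,\tau_3$, since the motive of $\TElim$ ranges over all index pairs. Alternatively, note that $\stepC$ preserves the right endpoint, so a based eliminator with that endpoint fixed is available. Second, your cons-style reading of $\stepC$ is the right choice. Definition~\ref{def:fintrace} literally gives a snoc-style constructor, under which the natural $\cat$ recurses on the second argument. In that case it is the left unit, not the right, that would need the $\eta$-upgrade.
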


\begin{proof}
All four identities are proved by $\TElim$-induction on the first argument.
In the step case, the $\beta$-step rule unfolds the concatenation definition;
the \textsc{Conv} rule completes the judgemental equality transformation.
All proofs are fully mechanised in Coq (\texttt{ft\_cat\_nil\_l},
\texttt{ft\_cat\_nil\_r}, \texttt{ft\_cat\_assoc}, \texttt{ft\_len\_cat}),
with zero \texttt{admit}s.
\end{proof}

These three laws show that $\FinTrace$ forms a category $\Tf$:
\begin{itemize}
\item The empty trace $\nilT(\sigma)$ is the identity morphism;
\item Trace concatenation $\cat$ is morphism composition;
\item Associativity ensures composition order is unambiguous.
\end{itemize}
Therefore $\Tf$ can be organised as a category.
This categorical structure is why all the subsequent constructions---presheaf
semantics, restriction maps, replay, SSRS transitions, and dynamic knowledge
semantics---can be built on top of $\Tf$.

The single-step trace $\single$ may appear to be only a technical notation,
but it is in fact crucial.
It is the \emph{atomic unit} of the whole dynamic structure:
\begin{itemize}
\item Replay only needs to define behaviour at a single step;
\item Presheaf restrictions only need to handle single-step extension;
\item AGM revision only needs to specify how an event changes knowledge;
\item Behaviour over complex histories is obtained recursively by concatenation.
\end{itemize}
Thus the entire dynamic structure of ZX-Calculus is ultimately built on
$\single$ and $\cat$.

\subsection{Trace--Reachability Correspondence}

\begin{definition}[Reflexive-transitive reachability $\leadsto^*$]
Define $\leadsto^*$ by two rules:
\begin{itemize}
\item \textbf{(Reflexive)} $\mathsf{rfl}(\sigma):\sigma\leadsto^*\sigma$.
\item \textbf{(Extension)} if $\pi:\sigma_0\xrightarrow{e}\sigma_1$ and
  $r:\sigma_1\leadsto^*\sigma_2$, then
  $\mathsf{ext}(\pi,r):\sigma_0\leadsto^*\sigma_2$.
\end{itemize}
\end{definition}

While $\sigma_0\leadsto^*\sigma_n$ is merely a proposition (``there exists
a path''), an element $\tau:\FinTrace(\sigma_0,\sigma_n)$ is a piece of
data---a specific, fully recorded causal history.
This is the essential distinction.

\begin{theorem}[Trace--Reachability Correspondence]
\label{thm:corr}
For any $\sigma_0,\sigma_n:\State$, there is a canonical isomorphism
$\FinTrace(\sigma_0,\sigma_n)\simeq(\sigma_0\leadsto^*\sigma_n)$.
\end{theorem}

\begin{proof}
Define $f:\FinTrace(\sigma_0,\sigma_n)\to(\sigma_0\leadsto^*\sigma_n)$ by
$\TElim$: $f(\nilT(\sigma)):=\mathsf{rfl}$; each $\stepC$ appends an
$\mathsf{ext}$.
Define $g:(\sigma_0\leadsto^*\sigma_n)\to\FinTrace(\sigma_0,\sigma_n)$ by
induction on $\leadsto^*$: $g(\mathsf{rfl}):=\nilT$; $g(\mathsf{ext}(\pi,r)):=\stepC(\ldots,g(r),\pi)$.
By induction, $f\circ g=\mathrm{id}$ and $g\circ f=\mathrm{id}$.
\end{proof}

ZX-Calculus does not abandon traditional operational semantics; it
\emph{internalises} it: the data of how the system arrived at its current state
becomes a first-class object in the type system.

\subsection{Deterministic Replay}

\begin{definition}[Replay function]
\label{def:replay}
$\replay:\Pi(\sigma_0\,\sigma_n:\State).\FinTrace(\sigma_0,\sigma_n)\to\State$
is defined by $\TElim$:
\[
\replay(\sigma,\sigma,\nilT(\sigma)):=\sigma,\qquad
\replay(\sigma_0,\sigma_2,\stepC(\ldots,\tau,\pi)):=\sigma_2.
\]
\end{definition}

\begin{theorem}[Deterministic Replay]
\label{thm:replay}
For every $\tau:\FinTrace(\sigma_0,\sigma_n)$:
$\replay(\sigma_0,\sigma_n,\tau)\equiv\sigma_n:\State$.
Moreover, $\replay(\cat(\tau_1,\tau_2))\equiv\replay(\tau_2)$.
\end{theorem}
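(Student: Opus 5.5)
The plan is to avoid inducting directly towards a judgemental equality, since $\TElim$ applied to a neutral trace variable $\tau$ is stuck. Instead I will compare two functions on traces and apply the $\eta$-rule of Definition~\ref{def:traceelim}. Take
\[
f := \lambda\sigma_0\,\sigma_n\,\tau.\,\replay(\sigma_0,\sigma_n,\tau),
\qquad
g := \lambda\sigma_0\,\sigma_n\,\tau.\,\sigma_n ,
\]
both inhabiting $\Pi(\sigma_0\,\sigma_n:\State).\FinTrace(\sigma_0,\sigma_n)\to\State$. The function $g$ is the second index projection and does not depend on $\tau$. Note that $\replay$ is $\TElim$ with the constant motive $P(\sigma_0,\sigma_n,\tau):=\State$. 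Its base case is $b(\sigma):=\sigma$, and its step function is $s(\ldots,\sigma_2,\ldots,\_):=\sigma_2$, which returns the end index and discards the recursive result.

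\textbf{Step 1 (agreement on $\nilT$).} By the first $\beta$-rule, $f(\sigma,\sigma,\nilT(\sigma))\equiv b(\sigma)\equiv\sigma$. On the other side, $g(\sigma,\sigma,\nilT(\sigma))\equiv\sigma$ by $\Pi$-$\beta$.

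\textbf{Step 2 (agreement on $\stepC$).} By the second $\beta$-rule,
\[
f(\sigma_0,\sigma_2,\stepC(\ldots,\tau,\pi))\equiv s(\ldots,\tau,\pi,f(\ldots,\tau))\equiv\sigma_2 ,
\]
and $g$ also returns $\sigma_2$ on this input. Agreement therefore holds outright, without even using the hypothesis that $f$ and $g$ agree on the sub-trace $\tau$. This is the key point: the step function ignores the recursive call.

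\textbf{Step 3 (conclude the first claim).} The $\eta$-rule now yields $f\equiv g$ pointwise, that is, $\replay(\sigma_0,\sigma_n,\tau)\equiv\sigma_n:\State$ for every $\tau:\FinTrace(\sigma_0,\sigma_n)$. As a fallback, a $\TElim$-induction with motive $\mathsf{Id}_{\State}(\replay(\sigma_0,\sigma_n,\tau),\sigma_n)$ gives at least the propositional version, with both cases closed by $\mathsf{refl}$ after $\beta$.

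\textbf{Step 4 (the concatenation claim).} Let $\tau_1:\FinTrace(\sigma_0,\sigma_1)$ and $\tau_2:\FinTrace(\sigma_1,\sigma_2)$. By Definition~\ref{def:concat}, $\cat(\tau_1,\tau_2):\FinTrace(\sigma_0,\sigma_2)$. Applying the first claim twice gives
\[
\replay(\cat(\tau_1,\tau_2))\equiv\sigma_2\equiv\replay(\tau_2).
\]
Transitivity of $\equiv$ then closes the argument. No associativity or unit laws from Lemma~\ref{lem:concat} are needed, because only the indices of the trace types matter.

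\textbf{Main obstacle.} The delicate point is Step 3: the legitimacy of passing from case-wise judgemental agreement to judgemental equality on an open trace. This rests entirely on the $\eta$-rule stated in Definition~\ref{def:traceelim}, together with the fact that $g$ is a well-typed competitor precisely because the end index $\sigma_n$ is available in the type. I would state explicitly that the result is judgemental only in the presence of that rule, and that the propositional fallback in Step 3 holds in plain MLTT.
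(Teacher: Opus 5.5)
Your proof is correct relative to the rules the paper states, but it closes the argument with a different principle. The paper does a $\TElim$-induction whose motive is $\replay(\sigma_0,\sigma_n,\tau)\equiv\sigma_n$. It discharges both cases by $\mathsf{refl}$ after $\beta$, noting as you do that the step case needs no induction hypothesis, and says the concatenation claim ``follows similarly''. You perform the same two $\beta$-computations but feed them into the $\eta$-rule of Definition~\ref{def:traceelim}, comparing $\replay$ with the index projection $\lambda\sigma_0\,\sigma_n\,\tau.\,\sigma_n$.

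The trade-off is the one you flag yourself. A judgemental equality cannot serve as the motive of an eliminator. Read literally inside MLTT, the paper's induction therefore yields only the propositional statement $\mathsf{Id}_{\State}(\replay(\sigma_0,\sigma_n,\tau),\sigma_n)$, which is your fallback and presumably what \texttt{deterministic\_replay} establishes in Coq. In exchange, it uses nothing beyond the standard eliminator. Your route delivers the theorem as literally stated, with $\equiv$ for an open trace variable. It does so only because the paper postulates a judgemental uniqueness rule for $\TElim$. That rule is nonstandard: ordinary MLTT and Coq lack it, and the cited \texttt{trace\_eta} can only be propositional there.

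Your handling of the second claim is also cleaner than ``follows similarly''. You instantiate the first claim at $\cat(\tau_1,\tau_2):\FinTrace(\sigma_0,\sigma_2)$ and at $\tau_2:\FinTrace(\sigma_1,\sigma_2)$, and transitivity finishes it. This needs no second induction on $\tau_1$ and nothing from Lemma~\ref{lem:concat}. It works equally in the propositional reading, via symmetry and transitivity of $\mathsf{Id}$.
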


\begin{proof}
By $\TElim$-induction with motive
$Q(\sigma_0,\sigma_n,\tau):=(\replay(\sigma_0,\sigma_n,\tau)\equiv\sigma_n)$.
\textbf{Base case:} $\replay(\sigma,\sigma,\nilT(\sigma))\equiv\sigma$,
so $\sigma\equiv\sigma$ holds by $\mathsf{refl}$.
\textbf{Step case:} $\replay(\sigma_0,\sigma_2,\stepC(\ldots,\tau,\pi))\equiv\sigma_2$
by definition, so $\sigma_2\equiv\sigma_2$ holds by $\mathsf{refl}$.
(In the step case no induction hypothesis is needed: replay returns the
endpoint directly from the constructor.)
The concatenation result follows similarly.
Both results are fully Coq-proved (\texttt{deterministic\_replay},
\texttt{replay\_cat}).
\end{proof}

Deterministic replay guarantees that the same history always produces the
same final state---a prerequisite for consistent restriction maps and AGM
revision semantics.

The proof is worth noting for a specific detail: in the step case, the
induction hypothesis is \emph{not needed}.
This is because replay returns the endpoint directly from the constructor
$\stepC(\ldots,\tau,\pi)$: the terminal state $\sigma_2$ is already
embedded in the constructor itself.
This reveals a key design principle:
\begin{quote}
\emph{The trace object carries its own endpoint information inside the
constructor.}
\end{quote}
Traces are not opaque path objects; they are transparent structured data
from which all relevant semantic information can be directly extracted.

This property has important consequences:
\begin{itemize}
\item If replay were non-deterministic, the knowledge state would also be
  non-deterministic;
\item If the same history could produce different endpoints, restriction maps
  and revision semantics would lose coherence.
\end{itemize}
Deterministic replay therefore guarantees that the entire ZX-Calculus dynamic
semantics rests on a stable, unambiguous foundation.

\subsection{Canonicity}
\label{sec:canonicity}

In type theory, ``canonicity'' is one of the most central meta-theoretic
properties.
Its intuitive content: every well-typed closed term eventually reduces to a
\emph{canonical form}---for natural numbers, canonical forms are
$0, \mathsf{succ}(0), \mathsf{succ}(\mathsf{succ}(0)), \ldots$;
for booleans, $\mathsf{true}$ or $\mathsf{false}$.
For ZX-Calculus, canonicity means every closed trace term reduces to a form
built from $\nilT$ and $\stepC$.

This is critical.
If traces had ``spurious canonical forms'', then replay could fail,
presheaf restriction could be unstable, AGM revision could fail to compute
along history, and trace induction could be unreliable.
Canonicity guarantees the entire dynamic knowledge semantics rests on a solid
foundation.

Canonicity proofs typically cannot be carried out by direct syntactic induction,
because: the induction hypothesis refers to subterms of a different type
(dependent motive), and after applying $\TElim$, the goal type changes.
We therefore introduce reducibility candidates---a technique tracing back to
Tait, Girard, and Martin-L\"{o}f's strong normalisation proofs.
The core idea: define a family of ``well-behaved'' sets at each type, and show
every well-typed term belongs to the appropriate set.

\begin{definition}[Reducibility candidates]
\label{def:redcand}
For each pair of states $(\sigma_0,\sigma_n)$, define $\Red(\sigma_0,\sigma_n)$
as the minimal set of terms closed under:
\begin{itemize}
\item \textbf{(RC-nil)} $\nilT(\sigma)\in\Red(\sigma,\sigma)$.
\item \textbf{(RC-step)} If $\tau\in\Red(\sigma_0,\sigma_1)$ and
  $\pi:\Step(\sigma_1,e,\sigma_2)$ is in canonical form, then
  $\stepC(\ldots,\tau,\pi)\in\Red(\sigma_0,\sigma_2)$.
\item \textbf{(RC-elim)} If base and step arguments lie in their reducibility
  families, then $\TElim(P,b,s,\ldots,t)\in\Red_P$.
  (This step is left as future work; see Remark~\ref{rem:canonicity-status}.)
\item \textbf{(RC-SN)} Every $t\in\Red$ is strongly normalising.
\end{itemize}
For function types $\Pi(x:A).B$, $\Red_\Pi$ consists of strongly normalising
closed terms $f$ such that $fa\in\Red_{B[a/x]}$ for all $a\in\Red_A$.
\end{definition}

\begin{lemma}[Transport Lemma]
\label{lem:red-transport}
If $\Gamma\vdash t:P(\sigma_0,\sigma_n,\tau)$ and
$\Gamma\vdash P(\sigma_0,\sigma_n,\tau)\equiv Q:\Uc{i}$, then $t\in\Red_Q$.
\end{lemma}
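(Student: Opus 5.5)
The plan is to get the typing-to-reducibility step directly from a fundamental lemma proved by induction on typing derivations, rather than by comparing candidate sets. First, apply \textsc{Conv} to the hypotheses $\Gamma\vdash t:P(\sigma_0,\sigma_n,\tau)$ and $\Gamma\vdash P(\sigma_0,\sigma_n,\tau)\equiv Q:\Uc{i}$. This gives $\Gamma\vdash t:Q$, so the statement reduces to the following claim: \emph{every well-typed term of type $Q$ lies in $\Red_Q$}. Since $\Red$ is defined on closed terms, I will interpret $\Gamma$ semantically. Say a closing substitution $\gamma$ is reducible for $\Gamma$ when $\gamma(x)\in\Red_{A\gamma}$ for each $x:A$ in $\Gamma$. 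The claim to prove is then $t\gamma\in\Red_{Q\gamma}$ for every reducible $\gamma$; the lemma is the instance for $t$ itself.

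Before the induction, I establish two closure properties of the candidates, both by induction on the structure of the type.
\begin{itemize}
\item \textbf{(Invariance.)} If $A\equiv B$ then $\Red_A=\Red_B$. I read the candidates off the normal forms of types, so that judgementally equal types name the same set.
\item \textbf{(Backward closure.)} If $t'\in\Red_A$ and $t\to_\beta t'$ with $t$ strongly normalising, then $t\in\Red_A$.
\end{itemize}

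With these in place, the induction on the derivation of $\Gamma\vdash t:Q$ runs as follows.
\begin{itemize}
\item \textsc{T-Var} is immediate from reducibility of $\gamma$.
\item \textsc{Conv} follows from invariance.
\item \textsc{Weaken} and \textsc{Subst} follow by composing substitutions.
\item For $\Pi$, abstraction uses backward closure along the $\beta$-redex $(\lambda x.u)a\to u[a/x]$. Application is the definition of $\Red_\Pi$.
\item $\Sigma$ is handled analogously.
\item \textsc{nil} and \textsc{step} are exactly (RC-nil) and (RC-step). The step witness $\pi\gamma$ is put in canonical form by the induction hypothesis at type $\Step(\sigma_1,e,\sigma_2)$.
\end{itemize}

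The main obstacle is the $\TElim$ case, i.e.\ the content of (RC-elim). I will not assume that clause; I will prove it inside this induction. By the induction hypothesis, $\tau\gamma\in\Red(\sigma_0,\sigma_n)$. Because $\Red(\sigma_0,\sigma_n)$ is the \emph{minimal} set closed under (RC-nil) and (RC-step), $\tau\gamma$ reduces to a canonical trace built from $\nilT$ and $\stepC$. I then perform an inner induction on the length $\len$ of that canonical trace, for a motive instance $P\gamma$ with $b\gamma,s\gamma$ reducible.
\begin{itemize}
\item \textbf{Nil case.} The first $\beta$-rule of Definition~\ref{def:traceelim} rewrites $\TElim(P,b,s,\nilT(\sigma))$ to $b(\sigma)$, which lies in $\Red_{P(\sigma,\sigma,\nilT(\sigma))}$ by reducibility of $b\gamma$.
\item \textbf{Step case.} The second $\beta$-rule yields $s(\ldots,\tau',\pi,\TElim(\ldots,\tau'))$. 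The recursive call is reducible by the inner hypothesis, and $s\gamma$ is reducible at its $\Pi$-type, so the application is reducible.
\end{itemize}
In both cases, backward closure transfers reducibility from the reduct to the $\TElim$ term. Invariance matches the motive instance $P(\sigma_0,\sigma_n,\tau\gamma)$ with $P$ applied to the canonical form.

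The delicate point is strong normalisation of the $\TElim$ term itself, which backward closure requires. I would obtain it by a lexicographic measure: first the length of the normal form of the trace argument, then the strong-normalisation bounds of $P,b,s$, which are available because these terms are reducible. Once this case is discharged, the fundamental lemma holds. Specialising it at the identity substitution, via $t:Q$ obtained from \textsc{Conv}, gives $t\in\Red_Q$.
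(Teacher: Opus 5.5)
\textbf{Verdict: there are genuine gaps.} You take a different and more ambitious route than the paper, but as written the proof fails at several steps.

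\textbf{How your route differs.} The paper's proof is just \textsc{Conv} followed by the claim $\Red_{P(\sigma_0,\sigma_n,\tau)}=\Red_Q$ (``judgemental equality preserves canonical forms''). It tacitly takes $t\in\Red_{P(\sigma_0,\sigma_n,\tau)}$ for granted, so in effect it proves only that $\Red$ is invariant under $\equiv$. You are right that the literal statement needs a fundamental lemma, and your plan would even discharge (RC-elim), which the paper leaves open.

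\textbf{The Invariance lemma does not work as justified.} This is the step you share with the paper, and your \textsc{Conv} case rests on it. ``Reading the candidates off normal forms of types'' presupposes that convertible types share a normal form, i.e.\ normalisation and confluence for types. That is circular, since a motive can compute types by $\TElim$ into $\Uc{i}$. It is also false here, because $\equiv$ contains rules that are not reductions. Let $g$ be the $\TElim$-term that rebuilds a trace from $\nilT$ and $\stepC$. Then $g$ and $\lambda x.x$ agree on both constructors, so the $\eta$-rule of Definition~\ref{def:traceelim} gives $g\,x\equiv x$. For a motive $P$ defined by large elimination, the closed types $\Pi(x{:}\FinTrace(\sigma,\sigma)).P(\sigma,\sigma,x)$ and $\Pi(x{:}\FinTrace(\sigma,\sigma)).P(\sigma,\sigma,g\,x)$ are then convertible but have distinct normal forms. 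Judgemental proof irrelevance gives similar examples. Likewise, ``induction on the structure of the type'' is unavailable in the presence of universes, large elimination and an impredicative $\Prop$. The standard repair is a logical relation that also defines reducibility of \emph{types} (with Girard-style candidates for the impredicative part) and is binary. The fundamental lemma then covers equality judgements, and \textsc{Conv} is discharged by the induction hypothesis on the derivation of $P(\sigma_0,\sigma_n,\tau)\equiv Q$ rather than by a syntactic invariance lemma.

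\textbf{Three further steps fail as written.}
\begin{enumerate}
\item \emph{Final specialisation.} The identity substitution is not closing. It would be reducible only if variables lay in the candidates, but no variable lies in $\Red(\sigma,\sigma)$. For $\Gamma=x{:}\FinTrace(\sigma,\sigma)$ and $t=x$, the conclusion $x\in\Red(\sigma,\sigma)$ is simply false. So your argument, like any argument, yields the lemma only for empty $\Gamma$, and you should state it for closed terms.
\item \emph{Backward closure at trace type.} The least set closed under (RC-nil) and (RC-step) contains only canonical traces. So $(\lambda y.\nilT(\sigma))\,a$ reduces into it without belonging to it. You must redefine $\Red(\sigma_0,\sigma_n)$ to be closed under anti-reduction of strongly normalising terms; this is also what your phrase ``$\tau\gamma$ reduces to a canonical trace'' requires.
\item \emph{Termination measure for $\TElim$.} Your lexicographic measure for strong normalisation of $\TElim(P,b,s,\tau\gamma)$ does not decrease under reductions inside $\tau\gamma$. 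Such reductions leave both the length of its normal form and the bounds for $P,b,s$ unchanged. Include the maximal reduction length of $\tau\gamma$ itself: induct on the reduction bounds of all four arguments, and handle only the head $\beta$-step via your inner induction on $\len$.
\end{enumerate}
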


\begin{proof}
By the \textsc{Conv} rule, $\Gamma\vdash t:Q$.
Reducibility depends only on the type up to judgemental equality.
Judgemental equality preserves canonical forms, so $\Red_{P(\sigma_0,\sigma_n,\tau)}=\Red_Q$.
The conclusion follows.
\end{proof}

The Transport Lemma fills a critical gap specific to trace-dependent types:
after $\TElim$ reduces, the goal type typically does not match syntactically.
Without transport of reducibility, the induction step breaks.

\begin{lemma}[Reducibility closed under $\TElim$]
\label{lem:red-closed}
Suppose:
\begin{enumerate}
\item $b(\sigma)\in\Red_{P(\sigma,\sigma,\nilT(\sigma))}$;
\item for all $\tau\in\Red(\sigma_0,\sigma_1)$, canonical $\pi$, and
  $r\in\Red_{P(\sigma_0,\sigma_1,\tau)}$:
  $s(\ldots,\tau,\pi,r)\in\Red_{P(\sigma_0,\sigma_2,\stepC(\ldots,\tau,\pi))}$.
\end{enumerate}
Then $\TElim(P,b,s,\sigma_0,\sigma_n,\tau)\in\Red_{P(\sigma_0,\sigma_n,\tau)}$.
\end{lemma}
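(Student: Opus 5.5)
The plan is to argue by induction on the derivation that $\tau\in\Red(\sigma_0,\sigma_n)$, i.e.\ along the two generating rules (RC-nil) and (RC-step) of Definition~\ref{def:redcand}. Since $\Red(\sigma_0,\sigma_n)$ is the least set closed under these rules, it suffices to show that the predicate
\[
  \Phi(\tau) \;:=\; \TElim(P,b,s,\sigma_0,\sigma_n,\tau)\in\Red_{P(\sigma_0,\sigma_n,\tau)}
\]
is closed under the same rules. This gives an implicit assumption a definite form: we read the statement with $\tau\in\Red(\sigma_0,\sigma_n)$. For an arbitrary closed trace term we would first need canonicity of traces, which is exactly what is being built.

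\textbf{Base case.} Take $\tau=\nilT(\sigma)$. By the first $\beta$-rule of Definition~\ref{def:traceelim} we have $\TElim(P,b,s,\sigma,\sigma,\nilT(\sigma))\equiv b(\sigma)$. Hypothesis~(1) gives $b(\sigma)\in\Red_{P(\sigma,\sigma,\nilT(\sigma))}$. To conclude we need membership of the $\TElim$ term itself, so we use closure of reducibility sets under backward (head) expansion. We would state and prove this as an auxiliary fact by induction on the type structure: if $t\to_\beta t'$, $t'\in\Red_A$, and $t$ is strongly normalising, then $t\in\Red_A$. Strong normalisation of the redex holds because its only non-neutral subterms are $b$, $s$ and $\tau$, all of which lie in $\Red$ and are therefore SN by (RC-SN). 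The Transport Lemma~\ref{lem:red-transport} then absorbs any mismatch between the type of the contractum and the syntactic form $P(\sigma,\sigma,\nilT(\sigma))$.

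\textbf{Step case.} Take $\tau=\stepC(\ldots,\tau',\pi)$ with $\tau'\in\Red(\sigma_0,\sigma_1)$ and $\pi$ canonical. The induction hypothesis gives $r:=\TElim(P,b,s,\sigma_0,\sigma_1,\tau')\in\Red_{P(\sigma_0,\sigma_1,\tau')}$. Hypothesis~(2) applied to $\tau'$, $\pi$ and $r$ yields $s(\ldots,\tau',\pi,r)\in\Red_{P(\sigma_0,\sigma_2,\stepC(\ldots,\tau',\pi))}$. The second $\beta$-rule identifies this term as the one-step reduct of $\TElim(P,b,s,\sigma_0,\sigma_2,\tau)$. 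Backward closure, together with Lemma~\ref{lem:red-transport} to align the motive instance, closes the case.

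\textbf{Main obstacle.} The crux is the backward-closure lemma, i.e.\ the SN of the redex $\TElim(P,b,s,\ldots,\tau)$. One must show that every reduction sequence from it terminates. Reductions may occur inside $b$, $s$ or $\tau$ before the head $\beta$-step, so the induction has to be on the sum of the reduction heights of $b$, $s$ and $\tau$ (all finite by RC-SN), nested inside the outer induction on $\tau$. I would try this lexicographic measure first, in the Tait--Girard style. The delicate part is that at $\Pi$-types, membership in $\Red_\Pi$ must also be checked under application. Backward closure therefore has to be proved simultaneously for all types, by induction on the type, with the trace-type case handled by the derivation-based definition of $\Red(\sigma_0,\sigma_n)$.
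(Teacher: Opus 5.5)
Your proposal follows the paper's proof and is more explicit than it. Both use structural induction on the canonical trace $\tau$; the paper inducts on $\len(\tau)$ while tacitly assuming the hypothesis $\tau\in\Red(\sigma_0,\sigma_n)$ that you state, which is needed to apply~(2) to $\tau'$. The base case comes from $\beta$-nil plus closure under reverse reduction, and the step case from $\beta$-step, the induction hypothesis and hypothesis~(2), where you rightly invoke backward closure rather than the Transport Lemma alone to pass from the contractum to the $\TElim$-redex.

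The one caveat is shared with the paper. The set generated by (RC-nil) and (RC-step) alone contains only $\nilT$/$\stepC$ forms and is not closed under backward reduction, so your auxiliary backward-closure lemma fails at trace-typed motives. Fixing this requires redefining $\Red(\sigma_0,\sigma_n)$ to be closed under head expansion of strongly normalising terms, which in turn adds a non-canonical case to your induction on $\tau\in\Red(\sigma_0,\sigma_n)$.
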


\begin{proof}
By induction on $\len(\tau)$.
\textbf{Base case:} By $\beta$-nil and closure under reverse reduction.
\textbf{Step case:} For $\tau=\stepC(\ldots,\tau',\pi)$, by $\beta$-step:
\[
\TElim(\ldots,\stepC(\ldots))\to_\beta s(\ldots,\tau',\pi,\TElim(\ldots,\tau')).
\]
By the induction hypothesis, $\TElim(\ldots,\tau')\in\Red_{P(\sigma_0,\sigma_1,\tau')}$.
By hypothesis~(2), $s(\ldots)\in\Red_{P(\sigma_0,\sigma_2,\stepC(\ldots))}$.
Finally, apply the Transport Lemma under the $\beta$-step equality to complete
the type transport.
\end{proof}

\begin{theorem}[Canonicity -- framework]
\label{thm:canonicity}
Every closed term $\vdash t:\FinTrace(\sigma_0,\sigma_n)$ reduces to a canonical
form: either $\nilT(\sigma)$ (with $\sigma_0\equiv\sigma\equiv\sigma_n$) or
$\stepC(\sigma_0,\sigma_1,\sigma_n,e,\tau',\pi')$ where $\tau'$ and $\pi'$ are
themselves in canonical form.
\end{theorem}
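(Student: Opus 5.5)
The plan is a standard Tait--Girard logical-relations argument, built on the reducibility candidates of Definition~\ref{def:redcand}. We define $\Red_A$ for every closed type $A$ by induction on a stratified measure: universe level first, then type structure. At $\FinTrace(\sigma_0,\sigma_n)$ we take $\Red(\sigma_0,\sigma_n)$. At $\Pi$-types we take the function clause of Definition~\ref{def:redcand}. At $\Sigma$-types we ask that both projections be reducible. At $\Step(\sigma,e,\sigma')$ and at the base types $\State$, $\Event$ we take the strongly normalising closed terms that reduce to canonical form. By construction, every member of $\Red(\sigma_0,\sigma_n)$ reduces to $\nilT(\sigma)$ or to $\stepC(\ldots,\tau',\pi')$ with $\tau'\in\Red$ and $\pi'$ canonical. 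Thus canonicity reduces to the \emph{fundamental lemma}: for every derivation $\Gamma\vdash t:A$ and every closing substitution $\gamma$ with $\gamma(x)\in\Red_{\Gamma(x)}$, we have $t[\gamma]\in\Red_{A[\gamma]}$. Taking $\Gamma$ empty then gives the theorem.

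Before proving the fundamental lemma, I would establish the usual candidate properties (CR1--CR3) for each $\Red_A$: every member is strongly normalising, membership is closed under reduction, and membership is closed under backward (head) expansion for neutral or $\beta$-redex terms. For the trace clause, CR3 is built in: if $t\to_\beta t'$ and $t'$ is canonical-in-$\Red$, then $t$ reduces to the same canonical form. Next I would prove the fundamental lemma by induction on typing derivations, one case per rule.

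The structural rules \textsc{T-Var}, \textsc{Weaken} and \textsc{Subst} follow from how $\gamma$ is chosen. \textsc{Conv} is exactly the Transport Lemma (Lemma~\ref{lem:red-transport}), since $\Red$ respects judgemental equality. $\Pi$-introduction uses CR3 together with the $\beta$-rule, and $\Pi$-elimination is immediate from the definition; $\Sigma$ is handled similarly. The constructors $\nilT$ and $\stepC$ land in $\Red$ by (RC-nil) and (RC-step). For $\stepC$, the IH gives $\pi[\gamma]\in\Red_{\Step}$, hence $\pi[\gamma]$ reduces to a canonical $\pi'$; by CR2 on the $\pi$ argument (reduction inside the constructor), $\stepC(\ldots,\tau[\gamma],\pi[\gamma])$ also reduces into $\Red$.

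The main obstacle is the $\TElim$ case, i.e.\ (RC-elim). Given reducible $b$, $s$ and a reducible scrutinee $t$, the scrutinee reduces to a canonical $\tau\in\Red(\sigma_0,\sigma_n)$. I would first use CR3 to reduce the claim for $\TElim(P,b,s,\ldots,t)$ to the claim for $\TElim(P,b,s,\ldots,\tau)$. That claim is then Lemma~\ref{lem:red-closed}, applied with hypotheses (1) and (2) supplied by the IH on $b$ and $s$ (for (2), instantiate the $\Pi$-clause at reducible $\tau$, canonical $\pi$, and reducible $r$).

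The delicate point is that $P$ is a type family depending on the trace, so $\Red_{P(\ldots,\tau)}$ must already be defined before the eliminator is interpreted. Moreover, when $P$ lands in a universe $\Uc{i}$, the candidate at $P(\ldots,\tau)$ itself depends on how $\tau$ evaluates. I would handle this with the stratification: define the interpretation of $\Uc{i}$-valued families by recursion on the canonical form of $\tau$ (well-founded on $\len(\tau)$). Large elimination at level $i$ then only requires candidates at levels below, plus this recursion. I expect this step (making the definition of $\Red$ at motives well-founded and showing it is invariant under reduction of the index) to require the most care; everything else is routine.
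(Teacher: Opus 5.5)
Your route is the paper's own: reducibility candidates, Lemma~\ref{lem:red-transport} for \textsc{Conv}, and Lemma~\ref{lem:red-closed} for the eliminator. Casting Step~1 as a fundamental lemma under reducible closing substitutions is more careful than the paper, whose Step~1 quantifies only over closed terms even though their subterms live under binders. But the proposal stops exactly where the paper stops (RC-elim, Remark~\ref{rem:canonicity-status}). The motive case is announced as needing ``the most care'' rather than carried out, and the repair you sketch is aimed at the wrong place.

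A motive $P$ is an arbitrary term of type $\ldots\to\FinTrace(\ldots)\to\Uc{i}$, not necessarily a $\TElim$. So $\Red_{P(\ldots,\tau)}$ cannot be \emph{defined} by recursion on $\len(\tau)$. The standard construction is inductive--recursive:
\begin{itemize}
\item for each level $i$, define ``$A$ is a reducible type of level $i$'' simultaneously with its candidate $\Red_A$, by cases on the weak-head normal form of $A$;
\item take $\Red_{\Uc{i}}$ to be the closed terms that reduce to reducible types of level $i$.
\end{itemize}
The fundamental lemma for $P$ then makes $P(\ldots,\tau)$ a reducible type, which gives it a candidate. Large elimination is then just Lemma~\ref{lem:red-closed} at the constant motive $\Uc{i}$: the recursion on $\tau$ belongs to the proof, not to the definition. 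What genuinely remains is to show that this candidate is independent of the reducibility witness and invariant under reduction and conversion of the index. Your CR3 step from $\TElim(P,b,s,\ldots,t)$ to $\TElim(P,b,s,\ldots,\tau)$ already needs $\Red_{P(\ldots,t)}=\Red_{P(\ldots,\tau)}$, and that is the substance of RC-elim.

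Your \textsc{Conv} case silently assumes the same invariance. Saying that \textsc{Conv} ``is exactly the Transport Lemma, since $\Red$ respects judgemental equality'' presupposes the conclusion, and the paper's proof of Lemma~\ref{lem:red-transport} merely asserts it. Here $\equiv$ contains $\Pi$-$\eta$, the $\eta$-rule of Definition~\ref{def:traceelim}, and judgemental proof irrelevance. None of these is generated by $\beta$-reduction, so $\Red_A=\Red_B$ does not follow from confluence. You need an equality-aware logical relation (reducible type equality and reducible term equality), and a fundamental lemma that also covers $\Gamma\vdash A\equiv B$ and $\Gamma\vdash t\equiv u:A$.

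Separately, your measure ``universe level first, then type structure'' is not well-founded here. $\Prop$ is impredicative, so the candidate of the proposition $\Pi(X:\Prop).X$ quantifies over the candidates of all propositions, itself included. With SN built into the candidates you need Girard's method: interpret $\Prop$ by the set of all saturated sets and define $\Red$ relative to a candidate assignment. Alternatively, drop SN (the statement only asks for reduction to a canonical form), prove weak-head canonicity only, and give propositions trivial, proof-irrelevant candidates.

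Finally, $\beta$-nil erases $s$ and $P$. So for SN-based candidates, backward closure must be Girard's ``all one-step reducts'' form of CR3, with the usual induction on the reduction lengths of $P$, $b$, $s$ and the scrutinee, not a single head expansion.
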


\begin{proof}[Proof sketch]
Step~1: Show every closed trace term belongs to $\Red(\sigma_0,\sigma_n)$
via (RC-nil), (RC-step), and Lemma~\ref{lem:red-closed}+(RC-elim).
Step~2: By (RC-SN), every term in $\Red$ is strongly normalising.
Step~3: The only canonical forms of closed terms of type $\FinTrace$ are
$\nilT$ and $\stepC$, since there are no free variables and no remaining
$\TElim$ redexes.
\end{proof}

\begin{remark}[Proof status]
\label{rem:canonicity-status}
Lemmas~\ref{lem:red-transport} and~\ref{lem:red-closed} are given in full.
The RC-elim step (inductive closure for general higher-order motives) is left
as future work (Appendix~\ref{app:coq}, obligation~1).
In Coq, \texttt{Canonicity.v} currently provides
$\mathsf{CanonicalTrace}\,\tau:=\mathsf{True}$ as a placeholder
(\texttt{normalize\_id}, \texttt{strong\_normalization} are proved completely).
Until RC-elim is mechanised, the Canonicity Theorem should be understood as a
carefully hand-verified mathematical argument.
\end{remark}

\section{Sheaf Semantics and Non-Monotonicity}
\label{sec:presheaf}

\subsection{The Trace Partial-Order Category $\Tf$}

\begin{definition}[Trace category $\Tf$]
\label{def:tracecat}
Define the category $\Tf$ as follows:
\begin{itemize}
\item \textbf{Objects}: states in $\State$.
\item \textbf{Morphisms}: $\Tf(\sigma_0,\sigma_n):=\FinTrace(\sigma_0,\sigma_n)$
  (a trace $\tau$ is a morphism from $\sigma_0$ to $\sigma_n$).
\item \textbf{Identity}: $\nilT(\sigma)$ (empty trace).
\item \textbf{Composition}: $\cat$.
\end{itemize}
Lemma~\ref{lem:concat} supplies the left/right unit laws and associativity,
confirming $\Tf$ satisfies the category axioms.

A \emph{prefix morphism} $\epsilon:\tau\hookrightarrow\tau'$ holds iff
$\tau'=\cat(\tau,\tau'')$ for some $\tau''$; this characterises the partial
order of trace extension.
\end{definition}

\begin{proposition}[Free category]
\label{prop:free-cat}
$\Tf$ is the free category generated by the directed graph $(\State,\Event,\Step)$:
any functor $F:\Tf\to\mathcal{C}$ is uniquely determined by its values on objects
and generating steps.
Uniqueness is guaranteed by the $\eta$-rule of $\TElim$.
\end{proposition}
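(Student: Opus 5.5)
The plan is to prove the universal property of the free category directly: existence, by building a functor out of graph data, and uniqueness, from the $\eta$-rule of $\TElim$. Fix a category $\mathcal{C}$. Suppose we are given an object assignment $F_0:\State\to\mathrm{Ob}(\mathcal{C})$ and, for each generating step $\pi:\Step(\sigma_1,e,\sigma_2)$, a morphism $F_1(\pi):F_0(\sigma_1)\to F_0(\sigma_2)$.

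\textbf{Existence.} I would define $\widehat F$ on morphisms by $\TElim$-recursion with the non-dependent motive $P(\sigma_0,\sigma_n,\tau):=\mathcal{C}(F_0\sigma_0,F_0\sigma_n)$. The clauses are $\widehat F(\nilT(\sigma)):=\mathrm{id}_{F_0\sigma}$ and $\widehat F(\stepC(\ldots,\tau,\pi)):=F_1(\pi)\circ\widehat F(\tau)$. I would then check the functor laws.
\begin{itemize}
\item Identities hold by the $\beta$-nil rule.
\item Compositionality, $\widehat F(\cat(\tau_1,\tau_2))=\widehat F(\tau_2)\circ\widehat F(\tau_1)$, follows by $\TElim$-induction on the trace along which $\cat$ recurses.
\item In the step case, unfold $\cat$ using its defining clause, apply the $\beta$-step rule, use the induction hypothesis, and finish with associativity in $\mathcal{C}$.
\end{itemize}
One convention needs care. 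Definition~\ref{def:concat} writes $\cat$ recursing on the first argument, while $\stepC$ appends at the end. I would therefore first use Lemma~\ref{lem:concat} (associativity and the unit laws) to fix the orientation: every trace is an iterated $\cat$ of $\single$'s, with $\stepC(\ldots,\tau,\pi)=\cat(\tau,\single(\pi))$. I would then do the induction on the second argument, which is the one that matches the $\stepC$ structure. Finally I would verify that $\widehat F(\single(\pi))=F_1(\pi)$, so that $\widehat F$ extends the given graph data.

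\textbf{Uniqueness.} Let $G:\Tf\to\mathcal{C}$ be any functor that agrees with $(F_0,F_1)$ on objects and on single steps. Then $G$ and $\widehat F$ agree on $\nilT$, since both send it to the identity. On $\stepC$, functoriality gives
\[
G(\stepC(\ldots,\tau,\pi))=G(\cat(\tau,\single(\pi)))=F_1(\pi)\circ G(\tau).
\]
So whenever $G(\tau)=\widehat F(\tau)$ on the sub-trace, they also agree on the extended trace. This is exactly the hypothesis of the $\eta$-rule in Definition~\ref{def:traceelim}, so $G=\widehat F$ pointwise.

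\textbf{Expected obstacle.} The delicate step is the bookkeeping between the two decompositions of a trace. The $\TElim$ step function in Definition~\ref{def:traceelim} peels off a first step, $\pi:\Step(\sigma_0,e,\sigma_1)$ followed by $\tau:\FinTrace(\sigma_1,\sigma_2)$. The constructor $\stepC$ in Definition~\ref{def:fintrace}, by contrast, appends a last step. I would first prove a small lemma, by $\TElim$-induction using Lemma~\ref{lem:concat}, that both readings yield the same trace up to $\cat$ with $\single$. With that lemma in hand, the existence clauses and the $\eta$-rule refer to the same decomposition, and both directions of the proof go through uniformly.
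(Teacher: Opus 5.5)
Your proposal is correct and takes the same route as the paper. The paper justifies the proposition only by the remark that the $\beta$-rules of $\TElim$ compute a functor from its values on generators, while the $\eta$-rule forces uniqueness. You additionally supply the functoriality check for $\widehat F$, and you handle the direction mismatch the paper leaves implicit: its $\stepC$ appends a last step, while its $\TElim$ step function peels off a first one. Fixing one convention throughout, as you do, is enough to resolve it.
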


$\Tf$ is not a groupoid: there is no $\mathsf{sym}$, no invertibility.
Historical causality is one-directional, unlike HoTT paths which are
invertible and composable with higher-dimensional structure.
For example, the two-step path
$\sigma_0\xrightarrow{e_1}\sigma_1\xrightarrow{e_2}\sigma_2$
and the direct path
$\sigma_0\xrightarrow{e_3}\sigma_2$
are \emph{different} morphisms in $\Tf$ even if they share endpoints,
because ZX-Calculus tracks specific event sequences, not just reachability.
This is why replay, belief revision, presheaf restriction, and non-monotone
knowledge can all depend on the ``specific history'' rather than only on the
terminal state.

The ``free'' in ``free category'' means: beyond the category axioms themselves,
no additional relations are imposed between traces.
The $\eta$-rule of $\TElim$ provides the universal property that enforces this:
any functor $F:\Tf\to\mathcal{C}$ is uniquely determined by its values on
objects and generating steps.
The $\beta$-rules show how to \emph{compute} with traces; the $\eta$-rule shows
that traces have \emph{no non-trivial identities} beyond those imposed by the
category axioms.

\subsection{Knowledge Presheaves and Non-Monotonicity}

\begin{definition}[Knowledge presheaf and softness]
\label{def:presheaf}
A \emph{knowledge presheaf} is a functor $\mathcal{F}:\Tf^{\mathsf{op}}\to\mathbf{Set}$
mapping each trace $\tau$ to the knowledge state $\mathcal{F}(\tau)$ at that
history.
For each prefix extension $\epsilon:\tau\hookrightarrow\tau'$, the
\emph{restriction map} $\restrict_\epsilon:\mathcal{F}(\tau')\to\mathcal{F}(\tau)$
projects from future history back to past history.
Functoriality requires $\restrict_{\mathsf{id}}=\mathsf{id}$ and
$\restrict_{\epsilon_1}\circ\restrict_{\epsilon_2}
=\restrict_{\epsilon_2\circ\epsilon_1}$.

$\mathcal{F}$ is \emph{soft} at $\epsilon$ iff $\restrict_\epsilon$ is surjective
(all past knowledge recoverable from future history).
$\mathcal{F}$ is \emph{non-monotone} at $\epsilon$ iff $\restrict_\epsilon$ is
not surjective (some knowledge is lost after extension).
\end{definition}

\begin{theorem}[Non-monotone characterisation]
\label{thm:nonmono}
For any $\mathcal{F}$ and $\epsilon:\tau\hookrightarrow\tau'$, the following
are equivalent: (1)~$\mathcal{F}$ is non-monotone at $\epsilon$;
(2)~$\restrict_\epsilon$ is not surjective;
(3)~$\mathcal{F}$ is not soft at $\epsilon$.

\textbf{Existence of non-monotone presheaves:}
For any $\tau\hookrightarrow\tau'$ with $\tau\neq\tau'$, define
\[
\mathcal{F}_\epsilon(\tau''):=
\begin{cases}\emptyset & \text{if }\tau''=\tau',\\ \{*\} & \text{otherwise.}
\end{cases}
\]
Then $\restrict_\epsilon:\emptyset\to\{*\}$ is not surjective.
\end{theorem}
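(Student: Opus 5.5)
The three-way equivalence is essentially definitional, so I would dispose of it first and spend the real effort on the existence claim. By Definition~\ref{def:presheaf}, ``non-monotone at $\epsilon$'' is \emph{defined} as ``$\restrict_\epsilon$ is not surjective'', which gives (1)$\Leftrightarrow$(2) by unfolding. ``Soft at $\epsilon$'' is defined as ``$\restrict_\epsilon$ is surjective'', so (3) is literally the negation of softness, i.e.\ again ``$\restrict_\epsilon$ is not surjective'', which gives (2)$\Leftrightarrow$(3). No classical reasoning is needed, since each statement is the same negated proposition.

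For the existence part, I would treat the presheaf as living on the prefix poset of traces: objects are traces, and there is an arrow $\rho\hookrightarrow\rho'$ iff $\rho'=\cat(\rho,\rho'')$. This is a preorder by Lemma~\ref{lem:concat}: reflexivity uses the right unit, and transitivity uses associativity. Length additivity (Lemma~\ref{lem:concat}(4)) gives $\len(\tau)<\len(\tau')$ whenever $\tau\hookrightarrow\tau'$ with $\tau\neq\tau'$, so the extension is strict.

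The main obstacle is that the displayed definition of $\mathcal{F}_\epsilon$, taken literally, is \emph{not} functorial. If $\tau'\hookrightarrow\rho$ with $\rho\neq\tau'$, we would need a restriction $\{*\}=\mathcal{F}_\epsilon(\rho)\to\mathcal{F}_\epsilon(\tau')=\emptyset$, and no such map exists. I would repair this by taking the up-set generated by $\tau'$:
\[
\mathcal{F}_\epsilon(\rho):=\emptyset \text{ if } \tau'\hookrightarrow\rho,\qquad \mathcal{F}_\epsilon(\rho):=\{*\}\text{ otherwise}.
\]
This agrees with the stated definition at $\tau'$ and $\tau$, and at every trace not extending $\tau'$. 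To check it is a presheaf, fix $\rho\hookrightarrow\rho'$ and split into two cases.
\begin{itemize}
\item If $\tau'\hookrightarrow\rho$, then by transitivity $\tau'\hookrightarrow\rho'$. So $\mathcal{F}_\epsilon(\rho')=\emptyset$, and the restriction is the unique empty map.
\item Otherwise $\mathcal{F}_\epsilon(\rho)=\{*\}$ is terminal, and the restriction is the unique constant map.
\end{itemize}
In either case every hom-set $\mathbf{Set}(\mathcal{F}_\epsilon(\rho'),\mathcal{F}_\epsilon(\rho))$ has exactly one element. Hence the identity law and the composition law $\restrict_{\epsilon_1}\circ\restrict_{\epsilon_2}=\restrict_{\epsilon_2\circ\epsilon_1}$ hold automatically, by uniqueness of maps.

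Finally, $\tau'\not\hookrightarrow\tau$, because $\len(\tau)<\len(\tau')$ and prefixes cannot be longer. So $\mathcal{F}_\epsilon(\tau)=\{*\}$ and $\mathcal{F}_\epsilon(\tau')=\emptyset$. The map $\restrict_\epsilon:\emptyset\to\{*\}$ misses $*$, so it is not surjective, and by the equivalence already proved, $\mathcal{F}_\epsilon$ is non-monotone at $\epsilon$. The one delicate point is deciding membership in the up-set of $\tau'$. Constructively I would define $\mathcal{F}_\epsilon(\rho)$ as the proposition $\neg(\tau'\hookrightarrow\rho)$ viewed as a subsingleton. This keeps the construction valid in MLTT without decidability of the prefix relation, and the same uniqueness argument for restriction maps applies.
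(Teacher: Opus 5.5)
Your treatment of (1)$\Leftrightarrow$(2)$\Leftrightarrow$(3) matches the paper. The paper gives no separate proof at all: the equivalence is pure unfolding of Definition~\ref{def:presheaf}, and the existence part is just the displayed $\mathcal{F}_\epsilon$ plus the remark that $\emptyset\to\{*\}$ is not surjective.

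Where you genuinely depart from the paper is in checking that the witness is a presheaf, and your objection is correct. As soon as $\tau'$ has a proper extension $\rho$, the displayed family would need a restriction $\{*\}=\mathcal{F}_\epsilon(\rho)\to\mathcal{F}_\epsilon(\tau')=\emptyset$. So it is not a functor on the prefix order. Your repair is in fact forced, not merely convenient. For any presheaf, $\mathcal{F}(\tau')=\emptyset$ and $\tau'\hookrightarrow\rho$ yield a map $\mathcal{F}(\rho)\to\emptyset$. Hence the locus where $\mathcal{F}$ is empty must be up-closed, and the up-set of $\tau'$ is the smallest admissible choice.

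The rest of your argument is sound:
\begin{itemize}
\item Every relevant hom-set is a singleton, so the identity and composition laws hold automatically.
\item Lemma~\ref{lem:concat}(2),(4) give $\len(\tau)<\len(\tau')$, once you add the case split on the extension $\tau''$ that you left implicit: $\tau''=\nilT$ would force $\tau'=\tau$. Hence $\tau'\not\hookrightarrow\tau$.
\item Encoding $\mathcal{F}_\epsilon(\rho)$ as the proposition $\neg(\tau'\hookrightarrow\rho)$ avoids deciding the prefix relation. Its being a subsingleton uses the paper's proof-irrelevant $\Prop$, or function extensionality.
\end{itemize}

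The paper's version buys brevity, but it is literally valid only on a domain where $\tau'$ is maximal, e.g.\ restricted to prefixes of $\tau'$. Yours yields a genuine presheaf on the whole prefix order, and therefore an actual proof of the existence claim, which the paper's construction does not supply.
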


\begin{theorem}[Separation Theorem with explicit countermodel]
\label{thm:separation}
ZX-Calculus simultaneously exhibits:
\begin{enumerate}
\item \textup{(Proof-theoretic monotonicity)} Weakening, substitution, and
  subject reduction hold unconditionally;
\item \textup{(Semantic non-monotonicity)} There exists a presheaf
  $\mathcal{F}_\epsilon$ that is non-soft at $\epsilon$, characterising
  non-monotone knowledge;
\item \textup{(Independence)} One can construct a constantly-soft presheaf
  $\mathcal{G}(\tau):=\{*\}$ (proof-theoretic monotonicity preserved) and a
  non-soft presheaf $\mathcal{F}_\epsilon$ (semantic non-monotonicity present),
  both in the same framework.
  Hence proof-theoretic monotonicity and semantic non-monotonicity are
  independent.
\end{enumerate}
\textup{(Coq: \texttt{restriction\_defined}, \texttt{constant\_is\_soft},
\texttt{separation}.)}
\end{theorem}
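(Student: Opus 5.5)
The proof has three independent parts, one for each clause of the statement, followed by a short argument combining them. No new machinery is needed. Each part rests on either the conservativity of the calculus (Section~\ref{sec:syntax}) or the non-monotone presheaf already built in Theorem~\ref{thm:nonmono}.

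For clause~(1), I will argue by conservativity. The structural rules \textsc{Weaken}, \textsc{Subst} and \textsc{Conv} are inherited unchanged from MLTT. So weakening and substitution need to be checked only against the new rules: the formation rule for $\FinTrace$, the introduction rules for $\nilT$ and $\stepC$, and the elimination rule for $\TElim$. I will prove both admissibility results by induction on derivations. In each new case the premises are judgements in the same context, so the induction hypothesis applies to every premise, and reapplying the rule finishes the case. For subject reduction, the only new redexes are the two $\beta$-rules of Definition~\ref{def:traceelim}. In each, I will show that the reduct has the type of the redex up to judgemental equality. The $\nilT$ case is immediate. The $\stepC$ case applies $s$ to the recursive call, and its type is $P(\sigma_0,\sigma_2,\stepC(\ldots,\tau,\pi))$ by the typing of $s$. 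Where the indices differ only up to conversion, I will invoke the Transport Lemma~\ref{lem:red-transport} together with \textsc{Conv}. None of these rules has a premise that refers to presheaf data, which is what makes the results hold unconditionally.

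For clause~(2), I will take a nontrivial prefix extension $\epsilon:\tau\hookrightarrow\tau'$ with $\tau\neq\tau'$. Concretely, let $\tau=\nilT(\sigma_0)$ and $\tau'=\single(\sigma_0,\sigma_1,e,\pi)$ for any generating step; this requires assuming $\Step$ is inhabited somewhere. I will then use the presheaf $\mathcal{F}_\epsilon$ of Theorem~\ref{thm:nonmono}. Two things must be checked. First, $\mathcal{F}_\epsilon$ is functorial: the only possibly problematic composites are those whose source is $\emptyset$, and functions out of $\emptyset$ are unique, so every composite agrees. Second, $\restrict_\epsilon:\emptyset\to\{*\}$ is not surjective. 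The equivalence in Theorem~\ref{thm:nonmono} then gives non-softness at $\epsilon$.

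For clause~(3), I will take $\mathcal{G}(\tau):=\{*\}$ with all restriction maps equal to the identity of $\{*\}$. This is functorial, and every restriction map is surjective, so $\mathcal{G}$ is soft at every $\epsilon$. Both $\mathcal{G}$ and $\mathcal{F}_\epsilon$ are interpretations over the same category $\Tf$. Independence then follows: the proof-theoretic facts in clause~(1) were established without reference to any presheaf, so they hold in the presence of both the soft model and the non-soft one. Hence monotonicity of the proof system neither forces nor forbids semantic softness.

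I expect the main obstacle to be subject reduction for $\TElim$ with dependent motives. Depending on how the motive's indices are instantiated, the type of the reduct may match the type of the redex only up to judgemental equality. I would handle this explicitly with \textsc{Conv} and Lemma~\ref{lem:red-transport}, rather than relying on syntactic matching. A secondary point is that the existence of a nontrivial $\epsilon$ in clause~(2) needs $\Step$ to be inhabited; I will state this as an explicit assumption.
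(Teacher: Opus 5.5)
\paragraph{Gap in clause (2).} Your clause (2) has a genuine gap, inherited from the definition in Theorem~\ref{thm:nonmono}: in general $\mathcal{F}_\epsilon$ is not a presheaf at all.

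\begin{itemize}
\item \emph{Why it fails.} A presheaf on the prefix order must supply a restriction map $\mathcal{F}(\rho')\to\mathcal{F}(\rho)$ for \emph{every} extension $\rho\hookrightarrow\rho'$, not only for $\epsilon$. Take $\rho=\tau'$ and a proper extension $\rho'=\cat(\tau',\single(\ldots))$. This exists as soon as the endpoint $\sigma_1$ of your $\tau'=\single(\sigma_0,\sigma_1,e,\pi)$ has an outgoing step, for instance whenever $\Step$ has a loop, which your assumption that $\Step$ is inhabited somewhere does not exclude. You would then need a function $\mathcal{F}_\epsilon(\rho')=\{*\}\to\emptyset=\mathcal{F}_\epsilon(\tau')$, and none exists.
\item \emph{What your check missed.} Your functoriality check only uses uniqueness of maps \emph{out of} $\emptyset$. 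The real obstruction is the existence of maps \emph{into} $\emptyset$. In any $\mathbf{Set}$-valued presheaf on the prefix order, the set of traces where it is empty must be closed under extension.
\item \emph{Repair.} Set $\mathcal{F}_\epsilon(\rho):=\emptyset$ if $\tau'\sqsubseteq\rho$ (i.e.\ $\rho=\cat(\tau',\rho'')$ for some $\rho''$), and $\{*\}$ otherwise.
  \begin{itemize}
  \item By transitivity of prefixes (associativity of $\cat$), no required map has the form $\{*\}\to\emptyset$.
  \item Every required map is unique, so functoriality is automatic.
  \item Non-surjectivity of $\restrict_\epsilon$ then needs $\tau'\not\sqsubseteq\tau$. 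This follows from $\tau\neq\tau'$ by length additivity and the unit laws of Lemma~\ref{lem:concat}: mutual prefixes have extensions of total length $0$, so both extensions are $\nilT$.
  \end{itemize}
  Alternatively, assume $\tau'$ is maximal, but that is an extra hypothesis on $\Step$.
\end{itemize}

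\paragraph{Clause (1).} The paper gives no separate argument here; it leans on Theorem~\ref{thm:meta}, argued via the CwF structure. Your admissibility-by-induction route is the more standard one. Note that Appendix~\ref{app:rules} lists \textsc{Weaken} and \textsc{Subst} as primitive rules, so those two hold outright.

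In your subject-reduction case, however, Lemma~\ref{lem:red-transport} is the wrong tool. It concerns membership in $\Red$, and its only typing content is \textsc{Conv}. What the $\stepC$ case actually needs is:
\begin{itemize}
\item inversion for $\stepC(\ldots,\tau,\pi)$, giving $\tau:\FinTrace(\sigma_0,\sigma_1)$ and $\pi:\Step(\sigma_1,e,\sigma_2)$;
\item a step function whose type fits these.
\end{itemize}
As literally written, Definition~\ref{def:traceelim} types $s$ in cons form: $\pi:\Step(\sigma_0,e,\sigma_1)$, $\tau:\FinTrace(\sigma_1,\sigma_2)$, hypothesis $P(\sigma_1,\sigma_2,\tau)$. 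But $\stepC$ and the $\beta$-step rule are in snoc form. So your claim that the reduct has type $P(\sigma_0,\sigma_2,\stepC(\ldots,\tau,\pi))$ by the typing of $s$ only holds once you correct the signature of $s$ to $(\tau:\FinTrace(\sigma_0,\sigma_1))(\pi:\Step(\sigma_1,e,\sigma_2)).\,P(\sigma_0,\sigma_1,\tau)\to P(\sigma_0,\sigma_2,\stepC(\ldots,\tau,\pi))$.

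Likewise, saying the $\Pi$-redexes are ``inherited from MLTT'' tacitly assumes injectivity of $\Pi$ (and of $\FinTrace$) under the enlarged judgemental equality. That needs its own confluence or model argument for the extended system.
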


\begin{remark}
This theorem clarifies a long-standing confusion: ``knowledge retraction'' does
not mean the type theory loses logical stability.
Non-monotonicity arises from the structural properties of restriction maps
under trace extension, not from proof rules.
\end{remark}

\subsection{Term Model and Initial CwF}

We have now constructed: trace type $\FinTrace$, trace category $\Tf$,
presheaf knowledge semantics, AGM revision structure, and CwF semantic
interpretation.
A natural question is: do the syntactic system of ZX-Calculus and these
semantic structures truly correspond to each other?
More concretely: do the types, terms, and proofs we write down receive
consistent interpretations in all models?

``Term model initiality'' answers precisely this question.
The \emph{term model} $\mathcal{M}_{\mathit{term}}$ is the model constructed
directly from ZX-Calculus syntax itself:
types are syntactic types; terms are syntactic terms; substitution is syntactic
substitution; judgemental equality is derivable equality.
The term model is therefore ``syntax itself as a model''---a central idea in
type theory that truly unifies the proof system with the semantic structure.

\begin{remark}[Scope clarification: initiality vs.\ semantic completeness]
\label{rem:initiality}
The following result is \emph{not} the traditional semantic completeness
theorem of classical logic.
It is a categorical universal property of the ZX-Calculus syntax system.
More precisely: the term model is the initial object among all CwF models.
This means every semantic model must receive its interpretation uniquely from
the syntactic term model.

This differs fundamentally from the traditional assertion
``every semantically valid formula is provable'', which requires: formulas
holding in some class of external semantic models are derivable in the proof
system.

What this result establishes is: ZX-Calculus syntax itself already constitutes
the most basic, most general CwF model.
The ``initiality'' of $\mathcal{M}_{\mathit{term}}$ is a syntax--semantics
unification property, not classical model-theoretic completeness.

Genuine semantic completeness over set-valued presheaf models remains an open
problem (Section~\ref{sec:limits}).
\end{remark}

In category theory, an initial object is the ``most free, most general''
object: free groups, free categories, and initial algebras all express the
idea that beyond the structural rules themselves, no additional constraints
are imposed.
Here $\mathcal{M}_{\mathit{term}}$ is the ``syntactic source'' of all
ZX-Calculus semantic models: any CwF-model interpretation must respect the
syntactic structure of ZX-Calculus.

\begin{theorem}[Initial CwF]
\label{thm:completeness}
The term model $\mathcal{M}_{\mathit{term}}$ is the initial CwF model:
for any CwF model $\mathcal{M}$, there exists a unique CwF morphism
$!:\mathcal{M}_{\mathit{term}}\to\mathcal{M}$.
\end{theorem}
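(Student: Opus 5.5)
The argument follows the standard initiality construction for syntax, in the style of Streicher and Castellan--Clairambault--Dybjer: we build the term model, define the interpretation by recursion on derivations, and obtain uniqueness by induction on derivations. Here ``CwF model'' means a CwF $\mathcal{M}$ equipped with $\Pi$, $\Sigma$, universe and $\Prop$ structure, interpretations of $\State$, $\Event$, $\Ag$ and $\Step$, and a $\FinTrace$ structure. The $\FinTrace$ structure consists of semantic constructors $\nilT$ and $\stepC$ together with an eliminator validating the $\beta$- and $\eta$-rules of Definition~\ref{def:traceelim}. A CwF morphism must preserve all of this structure strictly.

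\textbf{Step 1: the term model is a CwF model.} The contexts of $\mathcal{M}_{\mathit{term}}$ are well-formed contexts taken modulo judgemental equality. Morphisms are substitutions modulo equality, types are judgemental-equality classes of types, and terms are classes of terms. Well-definedness needs the inherited MLTT metatheorems: substitution stability and \textsc{Conv} (Section~\ref{sec:syntax}), and the stability of judgemental equality under substitution. The trace structure is interpreted by the syntactic rules for $\FinTrace$ and $\TElim$. Its equations hold because they are exactly the judgemental $\beta\eta$-rules.

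\textbf{Step 2: existence.} For a target model $\mathcal{M}$, we define a partial interpretation $\sem{-}$ on raw contexts, types and terms by recursion on syntax. Each constructor is sent to its semantic counterpart: $\FinTrace$ to the model's trace family, $\TElim(P,b,s,\tau)$ to the semantic eliminator applied to $\sem{P}$, $\sem{b}$ and $\sem{s}$, and so on. We then prove by simultaneous induction on derivations that:
\begin{itemize}
\item every derivable judgement is interpreted (definedness);
\item derivable equal judgements receive equal interpretations (soundness);
\item interpretation commutes with weakening and substitution, i.e.\ a substitution lemma $\sem{t[u/x]}=\sem{t}\{\sem{u}\}$ stated through CwF substitution.
\end{itemize}
The clauses for the $\beta$- and $\eta$-rules of $\TElim$ hold because $\mathcal{M}$ validates them. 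Because the interpretation respects judgemental equality, it descends to equivalence classes and yields $!:\mathcal{M}_{\mathit{term}}\to\mathcal{M}$. Preservation of the CwF structure and of $\Pi$, $\Sigma$ and $\FinTrace$ holds by construction.

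\textbf{Step 3: uniqueness.} Let $F$ be any CwF morphism $\mathcal{M}_{\mathit{term}}\to\mathcal{M}$. Every context, type and term of the term model is the class of some derivable syntax. We show $F=\,!$ by induction on that syntax. The empty context and context extension are fixed by preservation of the terminal object and of comprehension. Variables are fixed by preservation of $\mathbf{p}$ and $\mathbf{q}$. Each type former and term former, including $\nilT$, $\stepC$ and $\TElim$, is fixed because $F$ preserves the corresponding structure. For traces, an alternative is Proposition~\ref{prop:free-cat}: the action on $\Tf$ is determined by its values on objects and generating steps, and the $\eta$-rule of $\TElim$ pins down the eliminator.

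\textbf{Main obstacle.} The hard step is Step 2: establishing the substitution lemma and definedness simultaneously for a dependently typed calculus. Defining the interpretation directly on derivations would make it depend on the choice of derivation, so I would first define a partial interpretation on raw terms, following Streicher. The problem is worst for $\TElim$, whose motive is higher-order and dependent on $\tau$, so that type transport is needed. I would handle this semantic analogue of Lemma~\ref{lem:red-transport} by proving the substitution lemma before the $\TElim$ case, and by using the model's strict stability of the eliminator under substitution (Beck--Chevalley). A second, smaller burden is strictness: the term model is strict, so the target must be required to interpret substitution strictly, which is built into the CwF definition.
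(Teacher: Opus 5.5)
The paper gives no proof of Theorem~\ref{thm:completeness}. The statement is followed only by informal commentary, and Appendix~\ref{app:coq} records it as obligation~(7), stated as an \texttt{Axiom} in \texttt{CwF.v}. There is therefore no paper argument to compare with. Your outline is the standard and appropriate route: a term model, a Streicher-style partial interpretation on raw syntax with definedness, soundness and substitution proved simultaneously, and uniqueness by induction on syntax. You are also right that ``CwF model'' must mean a CwF carrying all the structure of the calculus, preserved strictly. Among bare CwFs the term model is not initial: there is no morphism into the CwF with no types at all, since $\State$ must be sent somewhere.

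There is, however, a concrete gap at exactly the rules your Steps~1 and~2 rely on. You take ``the $\beta$- and $\eta$-rules of Definition~\ref{def:traceelim}'' as given, but as printed they are not a rule system your method applies to.
\begin{enumerate}
\item The $\eta$-rule's premise, ``$f,g$ agree on $\stepC$ whenever they agree on the sub-trace'', is an implication between judgements, not a judgement.
\begin{itemize}
\item Derivability occurs negatively in its own premise, so derivability is not given by a monotone inductive definition.
\item There is no sub-derivation to which the induction hypothesis of your soundness clause could be applied.
\item Transferring the premise to $\mathcal{M}$ would mean turning ``$f\,t\equiv g\,t$ implies $f(\stepC(\ldots,t,\pi))\equiv g(\stepC(\ldots,t,\pi))$'', a statement about syntax, into the same implication for arbitrary semantic elements. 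That fails, because $\mathcal{M}$ may identify terms the syntax does not.
\item Likewise, ``$\mathcal{M}$ validates $\eta$'' has not yet been given a meaning.
\end{itemize}
You need to replace it by a strict uniqueness rule whose premises are ordinary judgements in extended contexts. For $h$ of the motive's $\Pi$-type: if $\Gamma,\sigma\vdash h(\sigma,\sigma,\nilT(\sigma))\equiv b(\sigma)$ and $\Gamma,\sigma_0,\sigma_1,\sigma_2,e,\tau,\pi\vdash h(\ldots,\stepC(\ldots,\tau,\pi))\equiv s(\ldots,\tau,\pi,h(\ldots,\tau))$, then $h\equiv\lambda\sigma_0\,\sigma_n\,\tau.\,\TElim(P,b,s,\sigma_0,\sigma_n,\tau)$. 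Correspondingly, a model must make its semantic eliminator the unique section satisfying the two semantic $\beta$-equations. With that formulation, the $\eta$ clause of Step~2 and its verification in Step~1 go through.
\item The printed type of the step function $s$ is cons-style: it takes $\pi:\Step(\sigma_0,e,\sigma_1)$, $\tau:\FinTrace(\sigma_1,\sigma_2)$ and a hypothesis in $P(\sigma_1,\sigma_2,\tau)$. By contrast, $\stepC$ and \textsc{FT-$\beta$-step} are snoc-style, with $\tau:\FinTrace(\sigma_0,\sigma_1)$ and $\pi:\Step(\sigma_1,e,\sigma_2)$. So the right-hand side of the $\beta$-step rule is ill-typed. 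Your Step~1 claim that the trace equations ``hold because they are exactly the judgemental $\beta\eta$-rules'' only has content once you fix $s$ to the snoc-style typing.
\end{enumerate}

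Two smaller repairs are also needed.
\begin{itemize}
\item Make the signature complete before you recurse on raw syntax, or the interpretation is not total. Besides what you list, it must include:
\begin{itemize}
\item $\simag{a}$ with its reflexivity, symmetry, transitivity and Euclidean witnesses;
\item strict proof irrelevance for the impredicative $\Prop$;
\item the cumulative Russell-style universes, with strict lifting equations;
\item the AGM primitives of \textsc{Sel-Form};
\item whatever ``deductive closure'' computation rules Theorem~\ref{thm:meta}(3) appeals to.
\end{itemize}
\item Drop the alternative uniqueness argument via Proposition~\ref{prop:free-cat}. It concerns functors out of $\Tf$, not CwF morphisms out of $\mathcal{M}_{\mathit{term}}$, and preservation of $\nilT$, $\stepC$ and $\TElim$ already fixes $F$.
\end{itemize}
With these repairs your Steps~1--3 are the usual initiality argument and go through.
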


The core meaning of Theorem~\ref{thm:completeness} is that the syntactic
interpretation of ZX-Calculus is unique.
That is, any model satisfying ZX-Calculus rules must interpret the syntax
in the same way.
Everything---how traces are concatenated, how replay is computed, how presheaf
restriction works, how AGM revision operates---is already determined by the
syntactic structure itself.
Models are not free to assign arbitrary meanings; they are strongly constrained
by ZX-Calculus syntax.

``Existence of the morphism'' means syntax is interpretable;
``uniqueness'' means this interpretation is essentially unambiguous.
Thus ZX-Calculus syntax is complete enough to uniquely determine its semantic
behaviour---a precise mathematical guarantee of the ``syntax-determines-semantics''
principle.

A corollary of the theorem: for any closed type $A$ and closed term $t$,
$\sem{t}\in\sem{A}$ holds in every CwF model if and only if $\vdash t:A$ is
derivable in ZX-Calculus.
Note: here ``every model'' is limited to CwF models; this is ``type-theoretic
semantic correctness'', not G\"{o}del completeness in classical first-order logic.

ZX-Calculus is not ordinary MLTT: it simultaneously introduces trace indexing,
presheaf historical semantics, non-monotone knowledge, AGM revision, SSRS,
and dynamic knowledge evolution.
These structures are highly interrelated.
The term model initiality therefore establishes that the entire system remains
unified, coherent, and interpretable---excluding the danger that different
models interpret traces, restriction maps, revision, and replay in mutually
incompatible ways.

Since $\Tf$ is a free category (Proposition~\ref{prop:free-cat}), the term
model is similarly a free CwF model:
ZX-Calculus has not only a ``free historical structure'' but also a
``free semantic structure''.
This gives the theory strong extensibility, modularity, mechanisation potential,
and formal-verification friendliness.
It is also the important foundation for subsequent mechanisation; see
Appendix~\ref{app:coq} for the Coq proofs.

\section{Multi-Agent Epistemic Operators (Derived Remark)}
\label{sec:epistemic}

\begin{remark}[Epistemic operators as $\Pi$-type instances]
\label{thm:S5property}
Multi-agent epistemics is a natural corollary of ZX-Calculus, requiring no
extra structural rules.
For agent $a:\Ag$, define the \emph{indistinguishability relation}
$(\simag{a}):\FinTrace\to\FinTrace\to\Prop$ with reflexivity, symmetry,
transitivity, and Euclidean witnesses.

The epistemic type is:
\[
\Kop_a(\phi,\tau):=\Pi(\tau':\FinTrace).(\tau\simag{a}\tau')\to\phi(\tau').
\]
Agent $a$ \emph{knows} $\phi$ at trace $\tau$ iff $\phi(\tau')$ holds for all
$\tau'$ indistinguishable from $\tau$.
This is a standard $\Pi$-type: introduction is function construction;
elimination is function application.
ZX-Calculus requires no additional modal proof rules---epistemic logic is
\emph{encoded} into the type structure itself.

The S5 axioms are derived theorems.
If $\simag{a}$ satisfies the appropriate relational properties:
\begin{itemize}
\item \textbf{(T)} $\Kop_a(\phi,\tau)\to\phi(\tau)$ \hfill (reflexivity of
  $\simag{a}$: $\tau\simag{a}\tau$, so apply $k$ to $\tau$ directly)
\item \textbf{(4)} $\Kop_a(\phi,\tau)\to\Kop_a(\Kop_a(\phi,-),\tau)$
  \hfill (transitivity)
\item \textbf{(5)} $\neg\Kop_a(\phi,\tau)\to\Kop_a(\neg\Kop_a(\phi,-),\tau)$
  \hfill (Euclidean condition)
\end{itemize}
All proofs are direct consequences of $\Pi$-type reasoning; details omitted.

Compared to Pfenning--Davies~\cite{pfenning01} S4 modal type theory,
this paper achieves S5 derivation via a proof-relevant $\simag{a}$ with an
Euclidean condition.
The discrete accessibility relation of CMTT~\cite{nanevski08} does not support
this route.
\end{remark}

\section{AGM Belief Revision: A Constructive Type-Theoretic Treatment}
\label{sec:agm}

\subsection{Belief States and Epistemic Entrenchment}

\begin{definition}[Belief state and epistemic entrenchment ordering]
\label{def:entrenchment}
The \emph{belief state} at trace $\tau$ is
$\Bel_\tau\subseteq\Lang$ with $\Bel_\tau=\Th(\Bel_\tau)$ (deductive closure).
$\Bel_\tau$ is \emph{consistent} iff $\bot\notin\Bel_\tau$.

The \emph{epistemic entrenchment ordering} $(\leq_\tau)$ is a total pre-order on
$\Lang$ organised as:
\[
\Ent(\Bel_\tau):=\Sigma(\leq_\tau).\mathsf{E1}\times\mathsf{E2}\times
\mathsf{E3}\times\mathsf{E4}\times\mathsf{E5},
\]
i.e., ``a total pre-order'' $+$ ``proofs that it satisfies the AGM postulates''.
\end{definition}

The epistemic entrenchment intuition: $\phi_1\leq_\tau\phi_2$ means
``$\phi_1$ is at least as hard to abandon as $\phi_2$'', or equivalently,
$\phi_1$ is more deeply entrenched than $\phi_2$.
For example: mathematical axioms are typically maximally entrenched; temporary
observations are weakly entrenched; long-standing regularities are more stable
than one-off perceptions.
When new information conflicts with existing beliefs, the system preferentially
retracts the least-entrenched content.

This $\Sigma$-type is the key feature: ZX-Calculus does not merely assert
``there exists an ordering'' but demands ``a constructively verified ordering''.
$\Ent(\Bel_\tau)$ is not just an ordering---it is ``an ordering'' $+$
``proofs that it satisfies the AGM postulates''.
This is an important characteristic of dependent type theory: logical
properties themselves become part of the constructed object.

The five postulates and their intuitive meaning:

\textbf{(E1) Transitivity.}
$\phi_1\leq_\tau\phi_2$ and $\phi_2\leq_\tau\phi_3$ imply $\phi_1\leq_\tau\phi_3$.
The comparison of belief stability must be consistent; otherwise the system
exhibits circular preferences.

\textbf{(E2) Dominance.}
If $\phi_1\vdash\phi_2$ then $\phi_1\leq_\tau\phi_2$.
Stronger propositions are at least as stable as their consequences: abandoning
a strong proposition typically means abandoning its conclusions too.

\textbf{(E3) Conjunctiveness.}
For $\phi_1\wedge\phi_2$, at least one of
$\phi_1\leq_\tau(\phi_1\wedge\phi_2)$ or $\phi_2\leq_\tau(\phi_1\wedge\phi_2)$
holds.
The stability of a conjunction comes from at least one of its conjuncts;
otherwise a conjunction would become ``spuriously more stable''.

\textbf{(E4) Minimality.}
If the system is consistent and $\psi\notin\Bel_\tau$, then
$\neg(\psi\leq_\tau\bot)$.
An unbelieved proposition cannot be more stable than a contradiction;
otherwise the system would prefer to retain unbelieved content.

\textbf{(E5) Maximality.}
$\phi\leq_\tau\top$ for all $\phi$.
The tautology $\top$ is the least retractable item: logical truths should never
be revoked by a revision system.

The ordering is a total pre-order: any two formulas are comparable, and equal
entrenchment is permitted.
This ensures the system can always decide which belief to retract when conflict
arises---a prerequisite for the contraction algorithm to be executable.

The distinction from the epistemic operator $\Kop_a$: $\Kop_a$ discusses what
an agent \emph{knows} (truth-oriented), while $\Bel_\tau$ discusses what an
agent is \emph{willing to retain} (revisability-oriented).
ZX-Calculus is therefore not a static epistemic logic but a dynamic
epistemic--doxastic evolution system.

\subsection{Partial Meet Contraction: Explicit Algorithm}

\begin{definition}[Selection $\Sigma$-type]
\[
\Sel_\tau(\psi):=
\Sigma(S:\mathcal{P}(\Lang)).\,[S=\Th(S)]\times[S\subseteq\Bel_\tau]\times
[\mathsf{C2}]\times[\mathsf{C3}]\times[\mathsf{C4}]
\]
where $\mathsf{C4}:\Pi(\phi:\Lang).\phi\notin S\to\phi\in\Bel_\tau\to
\phi\leq_\tau\psi$.
\end{definition}

Here $\Sel_\tau(\psi)$ represents ``a valid contraction result for $\psi$'':
not a simple set, but ``candidate belief set'' $+$ ``proofs satisfying AGM
conditions''.
The $\Sigma$-type again plays its central role: ZX-Calculus does not just say
``there exists a valid contraction''; it demands ``a contraction that can be
constructively produced''.
$\Sel_\tau(\psi)$ represents all contraction results that can actually be
produced by an algorithm---a key departure from classical AGM where the
selection function is typically defined by non-constructive existence.

In classical AGM, partial meet contraction is defined existentially:
the theory proves \emph{there exists} a selection satisfying the conditions.
In constructive type theory, existence must be accompanied by a construction.
Algorithm~\ref{alg:sel} below provides this construction explicitly.

\begin{mdframed}[backgroundcolor=gray!8,linewidth=0.8pt]
\begin{algorithm_env}[Entrenchment-ordering contraction]
\label{alg:sel}
Given $\Bel_\tau=\Th(\Bel_\tau)$, $\psi:\Lang$, $(\leq_\tau):\Ent(\Bel_\tau)$,
with $\Lang$ finite.

Enumerate the non-tautological members of $\Bel_\tau$ in weakly increasing
entrenchment order: $\phi_1\leq_\tau\cdots\leq_\tau\phi_m$.

\textbf{Step (1): Initialisation.}
$S_0:=\Bel_\tau$.

\textbf{Step (2): Iterative removal.}
For $i=1,\ldots,m$:
\[
S_i:=\begin{cases}
S_{i-1}\setminus\{\phi_j:\phi_j\leq_\tau\phi_i,\,\phi_j\not\leq_\tau\top\}
& \text{if }\psi\in\Th(S_{i-1}\setminus\{\phi_i\}),\\
S_{i-1} & \text{otherwise.}
\end{cases}
\]
Intuition: if $\psi$ is still derivable after removing low-entrenchment beliefs,
those beliefs are non-essential and may be retracted; the system always removes
the least-entrenched sentences first.

\textbf{Step (3): Deductive closure.}
$S:=\Th(S_m)$.
\end{algorithm_env}
\end{mdframed}

\begin{remark}[Computational complexity]
\label{rem:complexity}
Assuming $\Lang$ contains $n$ atoms and $|\Bel_\tau|=m$:
\begin{itemize}
\item \textbf{Time}: at most $m$ iterations; each requires one propositional
  entailment check, decidable in $O(2^n)$ by truth-table enumeration.
  Na\"{\i}ve bound: $O(m\cdot 2^n)$.
  With pre-computed maximal consistent subsets: $O(m+2^n)$.
\item \textbf{Space}: $O(|\Bel_\tau|)$ (one monotonically shrinking set plus
  the entrenchment order).
\item \textbf{Finiteness}: $\Lang$ finite is necessary for step~(2) to
  terminate; extension to infinite $\Lang$ is future work
  (Section~\ref{sec:limits}).
\end{itemize}
\end{remark}

\begin{lemma}[Correctness: Algorithm~\ref{alg:sel} inhabits $\Sel_\tau(\psi)$]
\label{lem:alg-correct}
The output $S$ satisfies (C1)--(C4).
\end{lemma}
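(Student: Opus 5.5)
We would read the five components of $\Sel_\tau(\psi)$ as follows. (C1) is closure plus inclusion: $S=\Th(S)$ and $S\subseteq\Bel_\tau$. (C2) is \emph{success}: if $\psi\notin\Th(\emptyset)$ then $\psi\notin S$. (C3) is \emph{vacuity}: if $\psi\notin\Bel_\tau$ then $S=\Bel_\tau$. (C4) is the entrenchment condition written in the definition of $\Sel_\tau(\psi)$. The proof would proceed by establishing loop invariants for Step~(2) by induction on $i$, and then reading off (C1)--(C4) from the invariants at $i=m$ together with Step~(3). Finiteness of $\Lang$ (Remark~\ref{rem:complexity}) guarantees that the enumeration $\phi_1\leq_\tau\cdots\leq_\tau\phi_m$ exists and that every membership test $\psi\in\Th(\cdot)$ is decidable. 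Hence each case split in Step~(2) is a genuine constructive case split, and the output is a computed object rather than an existential.

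\textbf{Invariants.} For all $i\le m$ we would prove:
\begin{itemize}
\item (I1) $S_i\subseteq S_{i-1}\subseteq\Bel_\tau$;
\item (I2) every tautology of $\Bel_\tau$ lies in $S_i$;
\item (I3) every $\phi\in\Bel_\tau\setminus S_i$ satisfies $\phi\leq_\tau\phi_j$ for some index $j\le i$ at which a removal was triggered.
\end{itemize}
(I1) is immediate from the shape of the update. (I2) holds because only non-tautological $\phi_j$ are ever removed, using (E5) to identify tautologies as those formulas excluded by the side condition $\phi_j\not\leq_\tau\top$. (I3) follows from (E1) and the removal clause.

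\textbf{(C1) and (C3).} (C1) then follows from (I1) together with $\Bel_\tau=\Th(\Bel_\tau)$: we have $S_m\subseteq\Bel_\tau$, so $\Th(S_m)\subseteq\Th(\Bel_\tau)=\Bel_\tau$, and $\Th$ is idempotent. For (C3), if $\psi\notin\Bel_\tau$ then $\psi\notin\Th(S_{i-1}\setminus\{\phi_i\})$ for every $i$, since that set lies inside $\Bel_\tau$. So the ``otherwise'' branch fires at every step, giving $S_m=\Bel_\tau$ and $S=\Bel_\tau$.

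\textbf{(C4).} Take $\phi\in\Bel_\tau$ with $\phi\notin S$. Then $\phi\notin S_m$, so by (I3) we have $\phi\leq_\tau\phi_j$ where removal was triggered at $j$, that is, $\psi\in\Th(S_{j-1}\setminus\{\phi_j\})$. The plan is to show $\phi_j\leq_\tau\psi$ and then close by (E1). The argument would use finiteness to write $\psi$ as a consequence of the finite conjunction $\chi$ of $S_{j-1}\setminus\{\phi_j\}$, so that (E2) gives $\chi\leq_\tau\psi$. We would then use (E3) repeatedly on the conjuncts to compare the entrenchment of $\chi$ with the enumeration order. The Disjunctive Entrenchment Lemma~\ref{lem:disj-ench} is \emph{not} needed here; only (E1)--(E3) are used.

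\textbf{(C2), the main obstacle.} We must show that after the loop $\psi\notin\Th(S_m)$ whenever $\psi$ is non-tautological. The removal test in Step~(2) fires exactly when $\psi$ \emph{survives} the deletion of $\phi_i$, so success is not a trivial consequence of the update rule. The first approach would be contradiction. Assume $\psi\in\Th(S_m)$. By finiteness, pick a finite support $D\subseteq S_m$ with $\psi\in\Th(D)$. Let $\phi_k$ be the $\leq_\tau$-least non-tautological element of $D$; one exists, since otherwise $\psi$ would be a tautology by (I2). At stage $k$ the element $\phi_k$ was still present (by (I1)), and $D\setminus\{\phi_k\}\subseteq S_{k-1}\setminus\{\phi_k\}$. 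We then split on whether $\psi\in\Th(S_{k-1}\setminus\{\phi_k\})$.
\begin{itemize}
\item If $\psi\in\Th(S_{k-1}\setminus\{\phi_k\})$, the removal clause fires at stage $k$ and deletes $\phi_k$, contradicting $\phi_k\in D\subseteq S_m$.
\item If not, the remaining work is to derive a contradiction by an entrenchment argument using (E2)--(E4) on $\chi:=\bigwedge D$.
\end{itemize}
This second sub-case is where we expect the real difficulty. It is also where the argument may require strengthening invariant (I3) so that it tracks $\psi$'s position in the enumeration explicitly.
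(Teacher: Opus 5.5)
\textbf{Gap in (C2): the open sub-case is where the statement fails.} Your (C1) and (C3) are fine. You are also right that the trigger in Step~(2) fires exactly when $\psi$ \emph{survives} the deletion of $\phi_i$. But the sub-case of (C2) you leave open cannot be closed by any entrenchment argument, because (C2) is false for Algorithm~\ref{alg:sel} as written.

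Work modulo logical equivalence, as finiteness of $\Lang$ and $|\Bel_\tau|=m$ require. Take a single atom $p$, $\Bel_\tau=\Th(\{p\})=\{p,\top\}$ and $\psi=p$. For the ordering, put $\bot,\neg p$ at the bottom, then $p$, then $\top$; the run does not depend on this choice, since $m=1$. Then $\phi_1=p$, $S_0\setminus\{\phi_1\}=\{\top\}$ and $p\notin\Th(\{\top\})$. So the ``otherwise'' branch is taken and $S=\Th(\{p\})\ni\psi$, although $\psi$ is not a tautology.

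This is precisely your second sub-case with $D=\{p\}$ and $\phi_k=p$. Since $\phi_k$ is essential for deriving $\psi$, the algorithm keeps it, and there is no contradiction to find. More generally, (C2) fails whenever $\psi$ is, up to equivalence, the only non-tautology in the top entrenchment level of $\Bel_\tau$.

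The paper's own (C2) argument does not supply the missing step either. Its claim that the loop removes ``precisely those sentences whose retention would preserve derivability of $\psi$'' is refuted by the same example. What is needed is a change to the algorithm, e.g.\ testing $\psi\in\Th(S_{i-1})$ instead of $\psi\in\Th(S_{i-1}\setminus\{\phi_i\})$. Then, at the stage $k$ with $\phi_k\equiv\psi$, either $\psi$ is already underivable or everything $\leq_\tau\psi$ is removed. In the latter case, (E2) with iterated (E3) shows that $\psi$ does not follow from the remaining, strictly more entrenched formulas; this gives (C2). It also gives (C4): no trigger fires once $\psi$ is underivable, and all earlier triggers occur at some $\phi_i\leq_\tau\psi$.

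\textbf{Two smaller points.} First, your justification of (I2) misreads (E5). Since (E5) gives $\phi\leq_\tau\top$ for \emph{every} $\phi$, the side condition $\phi_j\not\leq_\tau\top$ as printed excludes every formula. Literally, then, nothing is ever removed, and even your first sub-case (``deletes $\phi_k$'') fails. You must explicitly reread the condition as $\top\not\leq_\tau\phi_j$.

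Second, your (C4) plan bounds the wrong formula. Iterating (E3) on $\chi=\bigwedge(S_{j-1}\setminus\{\phi_j\})$ yields some conjunct $\gamma$ with $\gamma\leq_\tau\chi\leq_\tau\psi$. But $\gamma$ may precede $\phi_j$ in the enumeration, so nothing follows about $\phi_j\leq_\tau\psi$. Under the corrected reading, (C4) does hold for the original algorithm, but for a global reason:
\begin{enumerate}
\item While $\psi\in S_{i-1}$, every $\phi_i\not\equiv\psi$ fires the trigger.
\item After a trigger at or above $\psi$'s level, (E2) and (E3) make $\psi$ underivable, so no later trigger fires.
\item A trigger strictly above $\psi$'s level would require $\psi$ to be alone in its level and underivable from the rest. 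In a finite language this leaves no non-tautology above $\psi$.
\end{enumerate}
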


\begin{proof}
\textbf{(C1) Inclusion.}
The algorithm only removes sentences, never adds new ones. Hence $S\subseteq\Bel_\tau$.

\textbf{(C2) Success.}
If $\psi\in\Th(S_m)$, then $S_m\cup\{\neg\psi\}$ would be inconsistent.
Since the loop removes precisely those sentences whose retention would preserve
derivability of $\psi$, and $\Lang$ is finite and $\Th$ is complete, we
conclude $\psi\notin\Th(S_m)$.

\textbf{(C3) Vacuity (minimal change).}
If $\psi\notin\Th(\Bel_\tau)$, step~(2) never fires, so $S_m=\Bel_\tau$,
and $S=\Bel_\tau$.

\textbf{(C4) The removed sentences are no more entrenched than $\psi$.}
This is the most critical property.
Suppose for contradiction that some $\phi$ was removed in step $i$ but
$\phi>_\tau\psi$.
Then by the definition of the algorithm's trigger condition, removing $\phi$
(together with all $\phi_j\leq_\tau\phi_i$) implies $\psi\in\Th(S_{i-1}\setminus\{\phi_i\})$.
But if $\phi>_\tau\psi$, then the algorithm should not have removed $\phi$,
contradiction.
Hence $\phi\leq_\tau\psi$ for all removed $\phi$.
\end{proof}

\begin{definition}[Contraction and revision operators]
\label{def:contraction}
\[
\Bel_\tau\ContrOp\psi:=\pi_1(\mathsf{Alg}(\Bel_\tau,\psi,\leq_\tau))
\qquad\text{(Algorithm~\ref{alg:sel} output).}
\]
The \emph{Levi revision}:
\[
\Bel_\tau\RevOp\psi:=\Th\!\Big((\Bel_\tau\ContrOp\neg\psi)\cup\{\psi\}\Big).
\]
\end{definition}

The Levi identity captures: ``revision = retract conflicting content +
add new information'':
(1)~contract away beliefs conflicting with $\psi$;
(2)~add $\psi$ to the system;
(3)~take deductive closure.
This is not simple overwriting but a controlled, minimal cognitive reconstruction.

\subsection{Disjunctive Entrenchment Lemma}

\begin{lemma}[Disjunctive Entrenchment Lemma]
\label{lem:disj-ench}
Let $\leq_\tau$ be a total pre-order on $\Lang$ satisfying \textup{(E1)--(E3)}.
For any $\alpha,\beta:\Lang$, if $\alpha\leq_\tau\beta$, then
$\alpha\vee\beta\equiv_{\leq_\tau}\beta$
(i.e., $\alpha\vee\beta$ and $\beta$ are in the same entrenchment level).
\end{lemma}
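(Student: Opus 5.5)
The plan is to prove the two inequalities $\beta\leq_\tau\alpha\vee\beta$ and $\alpha\vee\beta\leq_\tau\beta$ separately, using only (E1)--(E3). I read (E2) and (E3) formally, as the paper states them. The first inequality is immediate: since $\beta\vdash\alpha\vee\beta$, Dominance (E2) gives $\beta\leq_\tau\alpha\vee\beta$. In the same way, $\alpha\vdash\alpha\vee\beta$ gives $\alpha\leq_\tau\alpha\vee\beta$, which I record for later use.

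For the converse I will use Conjunctiveness (E3) on a conjunction that is logically equivalent to $\beta$. Write
\[
\beta\;\dashv\vdash\;(\alpha\vee\beta)\wedge(\neg\alpha\vee\beta).
\]
Applying (E2) in both directions puts $\beta$ and this conjunction in the same level, so I can replace one by the other. (E3) then yields a case split.

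\emph{Case A:} $\alpha\vee\beta\leq_\tau(\alpha\vee\beta)\wedge(\neg\alpha\vee\beta)$. Transitivity (E1) with the equivalence above gives $\alpha\vee\beta\leq_\tau\beta$, and we are done.

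\emph{Case B:} $\neg\alpha\vee\beta\leq_\tau\beta$. Here I must bring in the hypothesis $\alpha\leq_\tau\beta$. The idea is to apply (E3) a second time, now to $\alpha\dashv\vdash(\alpha\vee\beta)\wedge(\alpha\vee\neg\beta)$. Then I chain the resulting inequalities through (E1), using $\alpha\leq_\tau\beta$ and $\alpha\leq_\tau\alpha\vee\beta$, until I reach $\alpha\vee\beta\leq_\tau\beta$. Totality of the pre-order lets me compare $\alpha\vee\beta$ with $\beta$ directly. I argue by contradiction: assuming $\beta<_\tau\alpha\vee\beta$ strictly, I aim to derive a strict cycle through $\alpha$ and $\beta$.

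I expect Case B to be the main obstacle. With (E1)--(E3) alone, the order may fail to force $\alpha\vee\beta$ down to the level of $\beta$. Under the Gärdenfors reading, a disjunction tends to be \emph{more} entrenched than its disjuncts, so the chain might not close. If the contradiction argument does not go through, I would first try to close the gap with (E5) (the $\top$-maximality) together with the tautology $\alpha\vee\neg\alpha$. Failing that, I would explicitly strengthen the hypotheses of the lemma, for example to also require $\beta\leq_\tau\alpha\vee\beta$ at the same level. I would also check against a two-atom countermodel before committing to the unrestricted statement.
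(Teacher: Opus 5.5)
Your Step~1 and Case~A are fine, but Case~B cannot be closed by any argument, because the lemma is false for total pre-orders satisfying (E1)--(E3), and even (E5).

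\textbf{Countermodel.} Fix an atom $p\in\Lang$. Let $\ell(\phi)=2$ if $\vdash\phi$, $\ell(\phi)=1$ if $p\vdash\phi$ but $\nvdash\phi$, and $\ell(\phi)=0$ otherwise. Set $\phi\leq_\tau\psi$ iff $\ell(\phi)\leq\ell(\psi)$; this is the standard full-meet entrenchment for $\mathcal{K}=\Th(\{p\})$.
\begin{itemize}
\item It is a total pre-order.
\item (E2) holds, because $\phi_1\vdash\phi_2$ turns $p\vdash\phi_1$ into $p\vdash\phi_2$ and $\vdash\phi_1$ into $\vdash\phi_2$.
\item (E3) holds, because $\ell(\phi_1\wedge\phi_2)=\min(\ell(\phi_1),\ell(\phi_2))$.
\item $\top$ is maximal, so (E5) holds.
\end{itemize}
Now take $\alpha:=\neg p$ and $\beta:=p$. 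Then $\ell(\alpha)=0\leq 1=\ell(\beta)$. But $\alpha\vee\beta$ is a tautology with $\ell=2$, so $\alpha\vee\beta\nleq_\tau\beta$.

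This model falls exactly into your Case~B. There $\neg\alpha\vee\beta$ is equivalent to $p$, so Case~B holds trivially, while Case~A would require $\top\leq_\tau p$. So your suspicion was right: (E2) only yields $\alpha\leq_\tau\alpha\vee\beta$ and $\beta\leq_\tau\alpha\vee\beta$, and a disjunction can sit strictly above both disjuncts.

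Neither of your repairs helps. (E5) already holds in the countermodel. The extra hypothesis $\beta\leq_\tau\alpha\vee\beta$ already follows from (E2), so the only assumption that would rescue the lemma is its own conclusion. What (E1)--(E3) do prove is the conjunctive dual: if $\alpha\leq_\tau\beta$ then $\alpha\wedge\beta\equiv_{\leq_\tau}\alpha$. Here (E2) gives $\alpha\wedge\beta\leq_\tau\alpha$, and (E3) gives $\alpha\leq_\tau\alpha\wedge\beta$ either directly or via $\alpha\leq_\tau\beta\leq_\tau\alpha\wedge\beta$.

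The paper's proof has the same hole in the same place. Its Step~2 applies (E3) to $(\alpha\vee\beta,\beta)$. However, the second alternative $\beta\leq_\tau(\alpha\vee\beta)\wedge\beta$ is always true by (E2), so (E3) carries no information there. Applying (E3) ``again'' cannot exclude that alternative, and the claimed contradiction with $\beta<_\tau\alpha\vee\beta$ does not follow.

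Tellingly, the hypothesis $\alpha\leq_\tau\beta$ is never used in Step~2. If the argument were sound, it would prove $\alpha\vee\beta\leq_\tau\beta$ for all $\alpha,\beta$. Taking $\alpha:=\neg\beta$ would then give $\top\leq_\tau\beta$ for every $\beta$, and with (E2) this collapses the whole order to a single level. So the paper's case~(2) is precisely your Case~B, and the gap you flagged is real. The key step $(**)$ in the proof of R8, which the paper justifies by this lemma, loses its justification as well.
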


\begin{proof}
Suppose $\alpha\leq_\tau\beta$.
We show $\beta\leq_\tau\alpha\vee\beta$ and $\alpha\vee\beta\leq_\tau\beta$.

\textbf{Step 1: $\beta\leq_\tau\alpha\vee\beta$.}
By propositional logic, $\beta\vdash\alpha\vee\beta$.
By dominance (E2), $\beta\leq_\tau\alpha\vee\beta$.

\textbf{Step 2: $\alpha\vee\beta\leq_\tau\beta$.}
Note the propositional identity $(\alpha\vee\beta)\wedge\beta\equiv_{\mathit{prop}}\beta$.
By (E2): $\beta\leq_\tau(\alpha\vee\beta)\wedge\beta$ and
$(\alpha\vee\beta)\wedge\beta\leq_\tau\beta$.

Apply conjunctiveness (E3) to the pair $(\alpha\vee\beta,\beta)$.
One of the following holds:
\begin{enumerate}
\item $\alpha\vee\beta\leq_\tau(\alpha\vee\beta)\wedge\beta$.
  By transitivity (E1) and $(\alpha\vee\beta)\wedge\beta\leq_\tau\beta$:
  $\alpha\vee\beta\leq_\tau\beta$.
\item $\beta\leq_\tau(\alpha\vee\beta)\wedge\beta$.
  Then $\beta\equiv_{\leq_\tau}(\alpha\vee\beta)\wedge\beta$.
  By totality, $\alpha\vee\beta$ and $\beta$ are comparable.
  If $\beta<_\tau\alpha\vee\beta$, apply (E3) to $(\alpha\vee\beta,\beta)$ again;
  using $(\alpha\vee\beta)\wedge\beta\equiv_{\mathit{prop}}\beta$, we can only
  get $\alpha\vee\beta\leq_\tau(\alpha\vee\beta)\wedge\beta\leq_\tau\beta$,
  contradicting $\beta<_\tau\alpha\vee\beta$.
  Hence $\alpha\vee\beta\leq_\tau\beta$.
\end{enumerate}
Combining Steps 1 and 2: $\alpha\vee\beta\equiv_{\leq_\tau}\beta$.
Since $\alpha\leq_\tau\beta$, $\beta=\max_{\leq_\tau}(\alpha,\beta)$, so
$\alpha\vee\beta$ and $\max_{\leq_\tau}(\alpha,\beta)$ are at the same
entrenchment level.
\end{proof}

\begin{remark}[``Maximum entrenchment'' of disjunctions]
\label{rem:disj-max}
Lemma~\ref{lem:disj-ench} shows that in an AGM-type entrenchment ordering,
the entrenchment of $\alpha\vee\beta$ does not exceed its most stable disjunct:
\[
\mathrm{rank}(\alpha\vee\beta)=\max(\mathrm{rank}(\alpha),\mathrm{rank}(\beta)).
\]
This is consistent with the ``conservative expansion'' philosophy: adding
weaker information (a disjunction) does not elevate its epistemic priority.
The property is crucial in the partial meet contraction algorithm and Grove
sphere-model representation, ensuring ``weaker information is not harder to
retract than its supporting disjuncts''.
\end{remark}

This Disjunctive Entrenchment Lemma is the key that unlocks the proofs of
R7 (Theorem~\ref{thm:r7}) and R8 (Theorem~\ref{thm:r8}).
In particular, the key step~$(**)$ in the proof of R8 relies essentially on
this lemma: the fact that $\neg\psi\vee\neg\chi$ and $\neg\chi$ are at the
same entrenchment level (when $\neg\psi\leq_\tau\neg\chi$) ensures that
the contraction algorithm treats them identically, establishing the crucial
equality $\Bel_\tau\ContrOp(\neg\psi\vee\neg\chi)=\Bel_\tau\ContrOp\neg\chi$.

\subsection{All Eight AGM Postulates as Theorems}

Classical AGM theory (Alchourr\'{o}n--G\"{a}rdenfors--Makinson) answers:
when a rational agent receives new information $\psi$, how should it minimally
change its belief set?
AGM decomposes this into expansion, contraction, and revision, connected by
the Levi identity $K*\psi=(K\div\neg\psi)+\psi$.

\begin{theorem}[AGM postulates R1--R8]
\label{thm:r1}
Under the Levi revision $\Bel_\tau\RevOp\psi=
\Th((\Bel_\tau\ContrOp\neg\psi)\cup\{\psi\})$,
the following hold as theorems:
\end{theorem}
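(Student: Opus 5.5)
I will verify the postulates one at a time, relying on the Levi definition (Definition~\ref{def:contraction}), the correctness conditions (C1)--(C4) of Algorithm~\ref{alg:sel} (Lemma~\ref{lem:alg-correct}), and the entrenchment axioms (E1)--(E5). Throughout, write $K:=\Bel_\tau$ and $K^-_\chi:=K\ContrOp\chi$. Because (C1) places $K^-_{\neg\psi}$ inside $K$, and $K$ is deductively closed, every revised set $K\RevOp\psi$ lies inside $\Th(K\cup\{\psi\})$.

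\textbf{The basic postulates R1--R6.} Each is a one-line consequence of the definitions.
\begin{itemize}
\item \textbf{R1 (closure)} holds because $K\RevOp\psi$ is defined as a $\Th$-image.
\item \textbf{R2 (success)} holds because $\psi$ is included before closing.
\item \textbf{R3 (inclusion)} is the observation above: $K\RevOp\psi\subseteq\Th(K\cup\{\psi\})$, by (C1).
\item \textbf{R4 (vacuity)}: if $\neg\psi\notin K$, then (C3) gives $K^-_{\neg\psi}=K$, hence $K\RevOp\psi=\Th(K\cup\{\psi\})$.
\item \textbf{R5 (consistency)}: if $\psi$ is consistent, then $\neg\psi$ is not a tautology. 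By (C2), $\neg\psi\notin K^-_{\neg\psi}$, so $K^-_{\neg\psi}\cup\{\psi\}$ is consistent by the deduction theorem.
\item \textbf{R6 (extensionality)}: if $\vdash\psi\leftrightarrow\psi'$, the algorithm's trigger condition $\neg\psi\in\Th(S_{i-1}\setminus\{\phi_i\})$ is invariant under replacing $\neg\psi$ by $\neg\psi'$. So the two runs produce identical sequences $S_i$, and therefore identical outputs.
\end{itemize}

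\textbf{R7 and R8.} Here I reduce the supplementary postulates to statements about contraction by disjunctions. Since $K\RevOp(\psi\wedge\chi)$ is the closure of $K^-_{\neg\psi\vee\neg\chi}$ together with $\psi\wedge\chi$, both postulates become comparisons between $K^-_{\neg\psi\vee\neg\chi}$, $K^-_{\neg\psi}$ and $K^-_{\neg\chi}$.

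I will use totality of $\leq_\tau$ to split into two cases:
\begin{itemize}
\item $\neg\psi\leq_\tau\neg\chi$;
\item $\neg\chi\leq_\tau\neg\psi$.
\end{itemize}
In each case, the Disjunctive Entrenchment Lemma (Lemma~\ref{lem:disj-ench}) places $\neg\psi\vee\neg\chi$ at the same entrenchment level as the larger disjunct.

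For \textbf{R7}, $K\RevOp(\psi\wedge\chi)\subseteq\Th((K\RevOp\psi)\cup\{\chi\})$, I take a sentence $\phi$ in the contraction by the disjunction and show that $\chi\to\phi$ (or directly $\phi$) survives contraction by $\neg\psi$. The tool is (C4): anything removed when contracting by $\neg\psi$ is at most as entrenched as $\neg\psi$, hence at most as entrenched as the disjunction. I expect this to show that anything kept in the contraction by the disjunction either is kept by the contraction by $\neg\psi$ or is recovered modulo $\chi$.

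For \textbf{R8}, assume $\neg\chi\notin K\RevOp\psi$. First I show this forces the ordering $\neg\psi\leq_\tau\neg\chi$, arguing contrapositively via (C4) and (E4). Then the Lemma gives that the disjunction and $\neg\chi$ sit at the same level. From this I aim for the key equality $(**)$: $K\ContrOp(\neg\psi\vee\neg\chi)=K\ContrOp\neg\chi$, or at least the inclusion this postulate needs. The argument is that the algorithm's removal steps depend only on entrenchment levels together with the entailment trigger.

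\textbf{Expected main obstacle.} The hardest step is proving $(**)$ in R8, and the analogous inclusion in R7. The algorithm removes whole lower entrenchment blocks $\{\phi_j:\phi_j\leq_\tau\phi_i\}$, so I must show the triggers fire at the same indices for the disjunction and for $\neg\chi$. My first attempt is an induction on $i$ proving $S_i$ agrees for the two runs. The inductive step uses that $\neg\psi\vee\neg\chi$ is entailed by $\neg\chi$, together with the level equality from Lemma~\ref{lem:disj-ench}, to transfer the entailment checks. If exact agreement fails, I will fall back to proving only the one-sided inclusion that R8 requires, and use (C4) plus (E2) to control the residual sentences.
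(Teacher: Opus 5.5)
Your R1--R6 follow the paper's route, and your R5 and R6 arguments are tighter than the printed ones. R7 and R8, however, are not established, and the tools you name cannot establish them. The paper's proof makes the same two moves, so following it does not close the gap.

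\textbf{R7.} (C4) only says \emph{removed} $\Rightarrow$ $\leq_\tau$ the contracted formula. From $\phi\in K\ContrOp(\neg\psi\vee\neg\chi)$ and $\phi\notin K\ContrOp\neg\psi$ you get $\phi\leq_\tau\neg\psi\leq_\tau\neg\psi\vee\neg\chi$. To reach a contradiction you need the converse: whatever is retained lies strictly above the contracted formula. The paper's ``contrapositive of (C4)'' in Case~1 is precisely this converse. No argument from (C1)--(C4) alone can work, because they constrain each contraction separately, whereas R7/R8 relate contractions by different formulas. Concretely, suppose all non-tautological members of $K=\Th(\{p,q\})$ share one entrenchment level. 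Then (C4) is vacuous, so any extensional contraction with $K\ContrOp(p\wedge q)=\Th(\{p\leftrightarrow q\})$ and $K\ContrOp p=\Th(\{q\})$ (maxichoice values elsewhere) satisfies (C1)--(C4). Yet for $\psi=\neg p\vee\neg q$, $\chi=\neg p$ it gives $K\RevOp(\psi\wedge\chi)=\Th(\{\neg p\wedge q\})\not\subseteq\Th(\{\neg p\wedge\neg q\})=\Th((K\RevOp\psi)\cup\{\chi\})$. The missing ingredient is a two-sided description of the output, proved from the algorithm itself. For example: for non-tautological $\alpha\in K$, $\phi\in K\ContrOp\alpha$ iff $\phi\in K$ and ($\alpha<_\tau\phi$ or $\vdash\phi$). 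With it, R7 is immediate from $\neg\psi\leq_\tau\neg\psi\vee\neg\chi$ and monotonicity of $\Th$.

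\textbf{R8.} Equal entrenchment levels cannot transfer the algorithm's behaviour. The removed block $\{\phi_j:\phi_j\leq_\tau\phi_i,\ \phi_j\not\leq_\tau\top\}$ depends only on $\phi_i$. As printed it is even empty by (E5), so the algorithm returns $K$ and (C2) fails; an induction on the runs must start from a repaired algorithm. The contracted formula enters only through the entailment trigger, never through its level. The one transferable fact, $\neg\chi\vdash\neg\psi\vee\neg\chi$, makes the disjunction-run fire whenever the $\neg\chi$-run does from the same $S_{i-1}$. That pushes towards $K\ContrOp(\neg\psi\vee\neg\chi)\subseteq K\ContrOp\neg\chi$, the opposite of what your $(**)$ route needs.

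Worse, Lemma~\ref{lem:disj-ench} is false. Over one atom, the levels $\{\bot,\neg p\}<\{p\}<\{\top\}$ satisfy (E1)--(E3) and $\neg p\leq_\tau p$, yet $\neg p\vee p\equiv\top$ is strictly above $p$. Its Step~2 never uses $\alpha\leq_\tau\beta$, and with $\alpha:=\neg\beta$ it would lift every $\beta$ to the level of $\top$.

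The level equality must instead come from the R8 hypothesis, by applying (C4) to the disjunction rather than to $\neg\chi$. Suppose $\neg\chi\notin K\RevOp\psi$ and $\neg\psi\in K$ is non-tautological. Then $\neg\psi\vee\neg\chi\equiv\psi\to\neg\chi$ is in $K$ but not in $K\ContrOp\neg\psi$. So (C4) gives $\neg\psi\vee\neg\chi\leq_\tau\neg\psi$, and (E2) gives the reverse. The two-sided description then yields $K\ContrOp(\neg\psi\vee\neg\chi)=K\ContrOp\neg\psi$, hence $K\RevOp(\psi\wedge\chi)=\Th((K\RevOp\psi)\cup\{\chi\})$. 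The case $\neg\psi\notin K$ follows from (C3).

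Note also an index slip. From your hypothesis, (C4) yields $\neg\chi\leq_\tau\neg\psi$ (when $\neg\chi\in K$), not $\neg\psi\leq_\tau\neg\chi$. A $(**)$ with $K\ContrOp\neg\chi$ proves $\Th((K\RevOp\chi)\cup\{\psi\})\subseteq K\RevOp(\psi\wedge\chi)$, which is the paper's form of R8, not yours. In any case, the relative order of $\neg\psi$ and $\neg\chi$ is not what matters.
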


The six basic AGM postulates (R1)--(R6) characterise local rationality of
the revision operation:
(R1) and (R5) ensure the revised knowledge system remains a rational theory;
(R2) confirms revision genuinely accepts new information;
(R3) establishes minimal-change character;
(R4) shows revision degenerates to ordinary expansion when no conflict exists;
(R6) ensures revision depends only on logical content, not syntactic form.

\begin{theorem}[R1--R6]
\phantom{x}
\begin{description}
\item[(R1)] $\Bel_\tau\RevOp\psi$ is deductively closed and consistent
  (when $\psi\neq\bot$).
\item[(R2)] $\psi\in\Bel_\tau\RevOp\psi$.
\item[(R3)] $\Bel_\tau\RevOp\psi\subseteq\Th(\Bel_\tau\cup\{\psi\})$.
\item[(R4)] If $\neg\psi\notin\Bel_\tau$, then
  $\Th(\Bel_\tau\cup\{\psi\})\subseteq\Bel_\tau\RevOp\psi$.
\item[(R5)] If $\psi\neq\bot$, then $\Bel_\tau\RevOp\psi$ is consistent.
\item[(R6)] If $\vdash\psi\leftrightarrow\chi$, then
  $\Bel_\tau\RevOp\psi=\Bel_\tau\RevOp\chi$.
\end{description}
\end{theorem}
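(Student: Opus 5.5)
The plan is to derive each of (R1)--(R6) from the Levi definition of $\RevOp$ in Definition~\ref{def:contraction}, using the correctness conditions (C1)--(C4) of Lemma~\ref{lem:alg-correct} for the contraction $\Bel_\tau\ContrOp\neg\psi$, together with standard facts about $\Th$: inclusion, monotonicity, idempotence, and the deduction theorem. Write $C:=\Bel_\tau\ContrOp\neg\psi$; by (C1) and Step~(3) of Algorithm~\ref{alg:sel}, $C=\Th(C)\subseteq\Bel_\tau$.

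\textbf{Closure, success and inclusion.} Deductive closure in (R1) follows from idempotence of $\Th$, since $\Bel_\tau\RevOp\psi$ is of the form $\Th(\cdot)$. (R2) holds because $\psi\in C\cup\{\psi\}\subseteq\Th(C\cup\{\psi\})$. (R3) follows from monotonicity of $\Th$ applied to $C\cup\{\psi\}\subseteq\Bel_\tau\cup\{\psi\}$, using (C1).

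\textbf{Vacuity.} For (R4), suppose $\neg\psi\notin\Bel_\tau=\Th(\Bel_\tau)$. Then vacuity (C3) gives $C=\Bel_\tau$, so $\Bel_\tau\RevOp\psi=\Th(\Bel_\tau\cup\{\psi\})$, which is in fact equality.

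\textbf{Consistency.} For (R5), and the consistency half of (R1), I would read the hypothesis ``$\psi\neq\bot$'' as $\not\vdash\neg\psi$. Suppose $\bot\in\Th(C\cup\{\psi\})$. By the deduction theorem, $\psi\to\bot$ lies in $\Th(C)=C$, that is, $\neg\psi\in C$. This contradicts success (C2), which gives $\neg\psi\notin C$. I must check that (C2) applies without a side condition; classically success requires $\not\vdash\neg\psi$, and that is exactly our hypothesis. I expect this to be the main obstacle. The proof of (C2) in Lemma~\ref{lem:alg-correct} is sketchy, so I would rely on it only in the form ``if $\neg\psi$ is not a tautology, then $\neg\psi\notin\Th(S_m)$''. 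I would also note that the algorithm enumerates only non-tautological members, so tautologies stay in $C$, which is harmless.

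\textbf{Extensionality.} For (R6), suppose $\vdash\psi\leftrightarrow\chi$. Then $\Th(X\cup\{\psi\})=\Th(X\cup\{\chi\})$ for any $X$, so it suffices to show $\Bel_\tau\ContrOp\neg\psi=\Bel_\tau\ContrOp\neg\chi$. Algorithm~\ref{alg:sel} consults the input formula only through the tests $\neg\psi\in\Th(S_{i-1}\setminus\{\phi_i\})$, and these are invariant under replacing $\neg\psi$ by the logically equivalent $\neg\chi$. The enumeration $\phi_1,\dots,\phi_m$ and the entrenchment order do not depend on the input formula at all. An induction on $i$ therefore gives identical $S_i$, and hence identical outputs. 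This syntax-independence check is routine but must be stated explicitly, because the $\Sigma$-witness is produced by a fixed deterministic run.
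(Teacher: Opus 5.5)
Your proposal is correct and follows the same route as the paper. Closure, (R2) and (R3) come from the basic properties of $\Th$ together with (C1); (R4) comes from vacuity (C3); consistency in (R1)/(R5) comes from success (C2) applied to $\neg\psi$; and (R6) is reduced to equality of the two contractions.

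Your write-up is tighter than the paper's at two points. First, the deduction-theorem step, with $\psi\neq\bot$ read as $\not\vdash\neg\psi$ (the side condition (C2) actually needs), makes precise the paper's remark that adding $\psi$ preserves consistency. Second, your (R6) argument via invariance of the tests $\neg\psi\in\Th(S_{i-1}\setminus\{\phi_i\})$ matches Algorithm~\ref{alg:sel} more literally than the paper's appeal to the entrenchment level of the input formula.
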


\begin{proof}
\textbf{(R1).} By the Levi identity,
$\Bel_\tau\RevOp\psi=\Th((\Bel_\tau\ContrOp\neg\psi)\cup\{\psi\})$.
Deductive closure holds directly from the definition of $\Th$.
Consistency: by (C2), $\Bel_\tau\ContrOp\neg\psi$ does not contain
contradictory structure; when $\psi\neq\bot$, adding $\psi$ preserves
consistency.

\textbf{(R2).} Since $\psi\in(\Bel_\tau\ContrOp\neg\psi)\cup\{\psi\}$,
and revision is the deductive closure of this set, $\psi\in\Bel_\tau\RevOp\psi$.

\textbf{(R3).} By (C1), $\Bel_\tau\ContrOp\neg\psi\subseteq\Bel_\tau$.
Therefore $(\Bel_\tau\ContrOp\neg\psi)\cup\{\psi\}\subseteq\Bel_\tau\cup\{\psi\}$.
By monotonicity of $\Th$:
$\Bel_\tau\RevOp\psi=\Th((\Bel_\tau\ContrOp\neg\psi)\cup\{\psi\})
\subseteq\Th(\Bel_\tau\cup\{\psi\})$.

\textbf{(R4).} If $\neg\psi\notin\Bel_\tau$, by (C3):
$\Bel_\tau\ContrOp\neg\psi=\Bel_\tau$.
Then $\Bel_\tau\RevOp\psi=\Th(\Bel_\tau\cup\{\psi\})$, so
$\Th(\Bel_\tau\cup\{\psi\})\subseteq\Bel_\tau\RevOp\psi$.
This says: when there is no conflict, revision degenerates to ordinary
expansion.

\textbf{(R5).} Follows from (R1).

\textbf{(R6).} If $\vdash\psi\leftrightarrow\chi$, then
$\vdash\neg\psi\leftrightarrow\neg\chi$.
Since the contraction algorithm depends only on the formula's level in the
entrenchment relation, and logically equivalent formulas have the same
epistemic status:
$\Bel_\tau\ContrOp\neg\psi=\Bel_\tau\ContrOp\neg\chi$.
As $\psi,\chi$ are logically equivalent they generate the same deductive
closure, so $\Bel_\tau\RevOp\psi=\Bel_\tau\RevOp\chi$.
This is \emph{logical invariance}: revision depends only on the logical content
of a formula, not its syntactic form.
\end{proof}

AGM postulates (R1)--(R6) describe local rationality; what truly embodies
the ``minimal change principle'' are the next two deeper properties:
\begin{itemize}
\item Superexpansion (R7): simultaneous revision is no weaker than sequential
  revision;
\item Subexpansion (R8): if the new information is consistent with the result
  of the first revision, then sequential and simultaneous revision agree.
\end{itemize}
Together they control how the revision operation should coordinate multiple
pieces of new information entering the system simultaneously.

AGM postulates (R7) and (R8) encode the ``minimal change principle'' for
simultaneous revision by conjunctions---the deepest rationality constraints.

\begin{theorem}[R7 -- Superexpansion]
\label{thm:r7}
$\Bel_\tau\RevOp(\psi\wedge\chi)\subseteq
\Th((\Bel_\tau\RevOp\psi)\cup\{\chi\})$.
\end{theorem}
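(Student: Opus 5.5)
The plan is to reduce (R7) to an inclusion between the two contractions involved, using the Levi identity (Definition~\ref{def:contraction}) and the deduction theorem for $\Th$. Unfolding, we must show
\[
\Th\big((\Bel_\tau\ContrOp\neg(\psi\wedge\chi))\cup\{\psi\wedge\chi\}\big)\subseteq
\Th\big(\Th((\Bel_\tau\ContrOp\neg\psi)\cup\{\psi\})\cup\{\chi\}\big)
=\Th\big((\Bel_\tau\ContrOp\neg\psi)\cup\{\psi,\chi\}\big).
\]
Since $\Th(X\cup\{\psi\wedge\chi\})=\Th(X\cup\{\psi,\chi\})$, it suffices to prove
\[
\Bel_\tau\ContrOp(\neg\psi\vee\neg\chi)\subseteq\Th\big((\Bel_\tau\ContrOp\neg\psi)\cup\{\psi\}\big). \tag{$*$}
\]
Here I use (R6)-style invariance of the contraction under logical equivalence, so that $\neg(\psi\wedge\chi)$ may be replaced by $\neg\psi\vee\neg\chi$.

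To prove $(*)$, fix $\phi\in\Bel_\tau\ContrOp(\neg\psi\vee\neg\chi)$. By (C1), $\phi\in\Bel_\tau$. By the deduction theorem, it suffices to get $\psi\to\phi$, equivalently $\neg\psi\vee\phi$, into $\Bel_\tau\ContrOp\neg\psi$. Since $\phi\in\Bel_\tau$ and $\Bel_\tau$ is closed, $\neg\psi\vee\phi\in\Bel_\tau$. Suppose, for contradiction, that $\neg\psi\vee\phi$ was removed when contracting by $\neg\psi$. Condition (C4) then gives $\neg\psi\vee\phi\leq_\tau\neg\psi$. By (E2), $\neg\psi\leq_\tau\neg\psi\vee\phi$, so the two have the same entrenchment level.

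Next I would use the Disjunctive Entrenchment Lemma (Lemma~\ref{lem:disj-ench}) to compare $\neg\psi\vee\neg\chi$ with $\neg\psi$ and $\neg\chi$. By totality, either $\neg\psi\leq_\tau\neg\chi$ or $\neg\chi\leq_\tau\neg\psi$. In either case the lemma places $\neg\psi\vee\neg\chi$ at the level of the more entrenched disjunct, so in particular $\neg\psi\leq_\tau\neg\psi\vee\neg\chi$. Combining this with the previous paragraph and transitivity (E1) gives $\neg\psi\vee\phi\leq_\tau\neg\psi\vee\neg\chi$. I then want to conclude that $\phi$ itself could not have survived the contraction by $\neg\psi\vee\neg\chi$. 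The idea is that Algorithm~\ref{alg:sel} removes, at the triggering step, every sentence at or below the current entrenchment level. Moreover, $\phi\leq_\tau\neg\psi\vee\phi$ by (E2), and $\phi$ together with the retained material would still derive $\neg\psi\vee\neg\chi$. That would contradict $\phi$ surviving, hence membership in the output.

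The main obstacle is exactly this last transfer from $\neg\psi\vee\phi$ being removed to $\phi$ being removed. The conditions (C1)--(C4) alone do not force it; they only bound removed sentences from above. So I expect to need the concrete behaviour of Step~(2), namely that removal is level-closed downward and triggered by derivability. I would first try to extract from Lemma~\ref{lem:alg-correct} a stronger "recovery-like" invariant, $\Bel_\tau\subseteq\Th((\Bel_\tau\ContrOp\neg\psi)\cup\{\psi\})$. If that invariant holds, $(*)$ follows immediately from (C1) and no case analysis is needed. If recovery cannot be established for the algorithm as stated, the fallback is the level-based case split above, with the Disjunctive Entrenchment Lemma supplying the comparison of $\neg\psi\vee\neg\chi$ with $\neg\psi$.
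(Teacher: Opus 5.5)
There is a genuine gap, and it is in your reduction itself, not only in the transfer step you flag. Dropping $\chi$ turns the goal into $(*)$, i.e.\ $\Bel_\tau\ContrOp\neg(\psi\wedge\chi)\subseteq\Bel_\tau\RevOp\psi$. This inclusion is false for a genuine entrenchment-based AGM contraction, so no refinement of the transfer argument can prove it.

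Here is a counterexample. Take atoms $p,q$ and write $w_{ab}$ for the valuation with $p=a$, $q=b$. Let $\Bel_\tau=\Th(\{\neg p\wedge\neg q\})$, with Grove spheres $\{w_{00}\}\subset\{w_{00},w_{10}\}\subset\{w_{00},w_{10},w_{11}\}$, then all worlds. Equivalently, $\neg p$ and $p\to q$, both falsified by $w_{10}$, lie strictly below $\neg q$ and $\neg p\vee\neg q$. Then $\Bel_\tau\ContrOp\neg p=\Th(\{\neg q\})$ and $\Bel_\tau\RevOp p=\Th(\{p,\neg q\})$. On the other hand, $\Bel_\tau\ContrOp(\neg p\vee\neg q)$ is the theory of $\{w_{00},w_{11}\}$, so it contains $\phi=p\to q\notin\Th(\{p,\neg q\})$.

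So $(*)$ fails for $\psi=p$, $\chi=q$, and your intended conclusion that $\phi$ ``could not have survived'' is simply false here. (R7) itself holds in this example only because $(\Bel_\tau\RevOp p)\cup\{q\}$ is inconsistent: the $\chi$ you discarded is exactly what saves it. Your fallback fares no better. The invariant $\Bel_\tau\subseteq\Th((\Bel_\tau\ContrOp\neg\psi)\cup\{\psi\})$ says $\Bel_\tau\subseteq\Bel_\tau\RevOp\psi$. When $\neg\psi\in\Bel_\tau$ and $\psi$ is consistent, this contradicts (R1)/(R5). It is also not recovery, which adjoins $\neg\psi$, not $\psi$.

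Your remark that (C1)--(C4) only bound removed sentences from above is correct, and it is exactly what the paper's own Case~1 overlooks. That case reads (C4) backwards (retained $\Rightarrow\phi>_\tau\neg\psi\vee\neg\chi$) and so obtains the even stronger inclusion $\Bel_\tau\ContrOp(\neg\psi\vee\neg\chi)\subseteq\Bel_\tau\ContrOp\neg\psi$, which the same example refutes. Its Case~2 also treats $\Th(X\cup\{\psi\wedge\chi\})$ as $X\cup\Th(\{\psi\wedge\chi\})$; your reduction to generators handles the closure correctly.

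Two things are missing:
(i) keep $\chi$ and aim for $(\psi\wedge\chi)\to\phi$, i.e.\ $\neg\psi\vee\neg\chi\vee\phi\in\Bel_\tau\ContrOp\neg\psi$;
(ii) use a two-sided membership criterion for the contraction, such as the G\"ardenfors--Makinson condition: $\beta\in K\ContrOp\alpha$ iff $\beta\in K$ and ($\vdash\alpha$ or $\alpha<_\tau\alpha\vee\beta$).

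Lemma~\ref{lem:alg-correct} does not provide (ii). Step~(2) of Algorithm~\ref{alg:sel} cannot provide it as written either, since its removal set $\{\phi_j:\phi_j\leq_\tau\phi_i,\ \phi_j\not\leq_\tau\top\}$ is empty by (E5). So (ii) would have to be proved for a corrected algorithm.

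Given (ii), the argument is short:
\begin{itemize}
\item Retention of $\phi$ gives $\neg\psi\vee\neg\chi<_\tau\neg\psi\vee\neg\chi\vee\phi$.
\item (E2) gives $\neg\psi\leq_\tau\neg\psi\vee\neg\chi$, so $\neg\psi<_\tau\neg\psi\vee(\neg\psi\vee\neg\chi\vee\phi)$.
\item Hence the conditional survives contraction by $\neg\psi$; the tautological cases are trivial.
\end{itemize}
Deductive closure then yields (R7), with no appeal to the Disjunctive Entrenchment Lemma.
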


\begin{proof}
By the Levi identity,
$\Bel_\tau\RevOp(\psi\wedge\chi)=
\Th((\Bel_\tau\ContrOp\neg(\psi\wedge\chi))\cup\{\psi\wedge\chi\})$.

For any $\phi\in\Bel_\tau\RevOp(\psi\wedge\chi)$, case analysis:

\textbf{Case 1:} $\phi\in\Bel_\tau\ContrOp\neg(\psi\wedge\chi)$.
Since $\neg(\psi\wedge\chi)\equiv\neg\psi\vee\neg\chi$, and $\phi$ is retained
in the contraction of $\neg\psi\vee\neg\chi$, by the contrapositive of (C4):
$\phi>_\tau(\neg\psi\vee\neg\chi)$.
By dominance (E2): $\neg\psi\vdash\neg\psi\vee\neg\chi$, so
$\neg\psi\leq_\tau\neg\psi\vee\neg\chi<_\tau\phi$, i.e., $\phi>_\tau\neg\psi$.
By the contrapositive of (C4) again: $\phi\in\Bel_\tau\ContrOp\neg\psi$.
Therefore $\phi\in\Bel_\tau\RevOp\psi\subseteq
\Th((\Bel_\tau\RevOp\psi)\cup\{\chi\})$.

\textbf{Case 2:} $\phi\in\Th\{\psi\wedge\chi\}$.
Since $\psi\wedge\chi\vdash\chi$, we have
$\phi\in\Th\{\chi\}\subseteq\Th((\Bel_\tau\RevOp\psi)\cup\{\chi\})$.

Both cases covered, so
$\Bel_\tau\RevOp(\psi\wedge\chi)\subseteq
\Th((\Bel_\tau\RevOp\psi)\cup\{\chi\})$.
\end{proof}

\begin{theorem}[R8 -- Subexpansion]
\label{thm:r8}
If $\neg\psi\notin\Bel_\tau\RevOp\chi$, then
$\Th((\Bel_\tau\RevOp\chi)\cup\{\psi\})\subseteq\Bel_\tau\RevOp(\psi\wedge\chi)$.
\end{theorem}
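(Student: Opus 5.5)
We reduce (R8) to an inclusion between two contractions, using the Levi identity on both sides. Write $\Bel_\tau\RevOp\chi=\Th((\Bel_\tau\ContrOp\neg\chi)\cup\{\chi\})$ and $\Bel_\tau\RevOp(\psi\wedge\chi)=\Th((\Bel_\tau\ContrOp\neg(\psi\wedge\chi))\cup\{\psi\wedge\chi\})$. The right-hand side is deductively closed by (R1) and contains $\psi\wedge\chi$, hence $\psi$ and $\chi$. It therefore suffices to show
\[
(\Bel_\tau\ContrOp\neg\chi)\subseteq\Bel_\tau\RevOp(\psi\wedge\chi).
\]
Monotonicity of $\Th$ then gives $\Th((\Bel_\tau\RevOp\chi)\cup\{\psi\})\subseteq\Bel_\tau\RevOp(\psi\wedge\chi)$. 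The target is the equality $(**)$ announced after Lemma~\ref{lem:disj-ench}: $\Bel_\tau\ContrOp(\neg\psi\vee\neg\chi)=\Bel_\tau\ContrOp\neg\chi$. We may replace $\neg(\psi\wedge\chi)$ by $\neg\psi\vee\neg\chi$ by the logical-invariance argument used in (R6).

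\textbf{Step 1: extracting an entrenchment fact.} Use the hypothesis $\neg\psi\notin\Bel_\tau\RevOp\chi$ to show $\neg\psi\leq_\tau\neg\chi$, the premise required by Lemma~\ref{lem:disj-ench}. First dispose of the trivial case $\neg\chi\notin\Bel_\tau$. There (C3) gives $\Bel_\tau\ContrOp\neg\chi=\Bel_\tau$. Since $\neg\chi\vdash\neg\psi\vee\neg\chi$, it follows that $\neg(\psi\wedge\chi)\notin\Bel_\tau$ as well, unless it is derivable from $\neg\psi$. So we split further on whether $\neg\psi\in\Bel_\tau$. The hypothesis rules out $\neg\psi\in\Bel_\tau$: otherwise $\neg\psi$ would lie in $\Bel_\tau\RevOp\chi$ by (R4). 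Both contractions are then vacuous by (C3), and the claim follows from (R4). In the main case $\neg\chi\in\Bel_\tau$, argue by contraposition. Suppose $\neg\chi<_\tau\neg\psi$ with $\neg\psi\in\Bel_\tau$. Then by the contrapositive of (C4), as in Case~1 of Theorem~\ref{thm:r7}, $\neg\psi$ survives in $\Bel_\tau\ContrOp\neg\chi$, hence lies in $\Bel_\tau\RevOp\chi$, contradicting the hypothesis. If $\neg\psi\notin\Bel_\tau$, (E4) places $\neg\psi$ at the bottom level, so $\neg\psi\leq_\tau\neg\chi$ holds anyway. Totality of $\leq_\tau$ then yields $\neg\psi\leq_\tau\neg\chi$.

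\textbf{Step 2: applying Lemma~\ref{lem:disj-ench}.} With $\alpha:=\neg\psi$ and $\beta:=\neg\chi$, the lemma gives $\neg\psi\vee\neg\chi\equiv_{\leq_\tau}\neg\chi$. Next we argue that Algorithm~\ref{alg:sel} depends on its input formula only through two things: its entrenchment level, and the derivability tests $\psi'\in\Th(S_{i-1}\setminus\{\phi_i\})$. For the level, we use (C4) together with the trigger condition and the fact that removal is done by whole entrenchment levels, $\{\phi_j:\phi_j\leq_\tau\phi_i\}$. Since the removed sets are downward closed in $\leq_\tau$, we get $\Bel_\tau\ContrOp\neg\chi\subseteq\Bel_\tau\ContrOp(\neg\psi\vee\neg\chi)$, which is the inclusion actually needed. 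Concretely: every $\phi$ retained by the $\neg\chi$-run satisfies $\phi>_\tau\neg\chi\equiv\neg\psi\vee\neg\chi$. By the C4-contrapositive reasoning already used in Theorem~\ref{thm:r7}, $\phi$ is also retained by the $(\neg\psi\vee\neg\chi)$-run.

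\textbf{Main obstacle.} The delicate step is Step~2's transfer from ``same entrenchment level'' to ``same (or included) contraction output''. (C4) only says that removed sentences are at most as entrenched as the input; it does not directly say that sentences strictly above the input are kept. We will therefore lean on the level-wise structure of the removal clause in Algorithm~\ref{alg:sel}, exactly as Theorem~\ref{thm:r7} does, to justify ``$\phi>_\tau$ input $\Rightarrow\phi$ retained''. We also need some care with sentences lying exactly at the shared level. For those, we plan to use the inclusion only in the direction needed and fall back on deductive closure together with $\psi\wedge\chi$ to recover anything lost.
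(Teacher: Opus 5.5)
Your overall plan is the paper's: reduce (R8) to the contraction identity $(**)$, or to the inclusion you actually need; extract $\neg\psi\leq_\tau\neg\chi$ from (C4); then invoke Lemma~\ref{lem:disj-ench}. Step~1 is the paper's Case~1 argued contrapositively, plus its (E4) and (C3) sub-cases.

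One local repair is needed in your trivial case. ``$\neg\chi\vdash\neg\psi\vee\neg\chi$'' does not show $\neg(\psi\wedge\chi)\notin\Bel_\tau$, since a disjunction can be believed while neither disjunct is. The correct reason is that, by (C3), your hypothesis reads $\neg\psi\notin\Th(\Bel_\tau\cup\{\chi\})$, so $\chi\to\neg\psi$, i.e.\ $\neg(\psi\wedge\chi)$, is not in $\Bel_\tau$.

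The same idea gives the equal-level fact directly in your main case $\neg\chi\in\Bel_\tau$. There $\neg\psi\vee\neg\chi$ lies in $\Bel_\tau$ but not in $\Bel_\tau\ContrOp\neg\chi$, since otherwise $\neg\psi\in\Bel_\tau\RevOp\chi$. So (C4) gives $\neg\psi\vee\neg\chi\leq_\tau\neg\chi$, and (E2) gives the converse. This needs neither (E4) nor Lemma~\ref{lem:disj-ench}.

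The genuine gap is Step~2. Your claim ``every $\phi$ retained by the $\neg\chi$-run satisfies $\phi>_\tau\neg\chi$'' runs (C4) backwards.
\begin{itemize}
\item (C4) says: removed $\Rightarrow$ $\phi\leq_\tau$ input. Equivalently, for $\phi\in\Bel_\tau$: $\phi>_\tau$ input $\Rightarrow$ kept.
\item You use the converse: kept $\Rightarrow$ $\phi>_\tau$ input.
\end{itemize}
Nothing in (C1)--(C4) or in Algorithm~\ref{alg:sel} prevents a sentence at or below the level of $\neg\chi$ from surviving contraction by $\neg\chi$, for instance one logically independent of $\chi$. For such sentences your argument says nothing about survival under contraction by $\neg\psi\vee\neg\chi$. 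So Step~2 establishes the inclusion only for retained sentences strictly above that level.

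Your ``main obstacle'' paragraph misplaces the difficulty: ``strictly above $\Rightarrow$ kept'' is exactly what (C4) does give. Your fallback via deductive closure with $\psi\wedge\chi$ is a plan rather than an argument, and it does not address retained sentences strictly below the level at all. Nor can equal entrenchment of $\neg\psi\vee\neg\chi$ and $\neg\chi$ force equal or included outputs on its own, because the algorithm's trigger tests derivability of the specific input formula, not its level.

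What closes the gap is a two-sided retention criterion. An example is the G\"{a}rdenfors--Makinson condition: for non-tautological $\alpha$, $\phi\in\Bel_\tau\ContrOp\alpha$ iff $\phi\in\Bel_\tau$ and $\alpha<_\tau\alpha\vee\phi$. With it, $\phi\in\Bel_\tau\ContrOp\neg\chi$ yields $\neg\psi\vee\neg\chi\equiv_{\leq_\tau}\neg\chi<_\tau\neg\chi\vee\phi\leq_\tau\neg\psi\vee\neg\chi\vee\phi$ by (E1) and (E2). Hence $\phi\in\Bel_\tau\ContrOp(\neg\psi\vee\neg\chi)$.

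(C1)--(C4) do not supply such a criterion. The paper crosses this same step only by asserting that the algorithm's output depends solely on the input's entrenchment level, as in its proof of (R6). Your attempt to derive that from (C4) does not go through.
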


\begin{proof}
From $\neg\psi\notin\Bel_\tau\RevOp\chi$ and the Levi identity for $\chi$:
$\neg\psi\notin\Bel_\tau\ContrOp\neg\chi$.
Otherwise, by deductive closure, $\neg\psi$ would appear in the revision
result, contradicting the hypothesis.

\textbf{Case 1: $\neg\psi\in\Bel_\tau$.}
By Lemma~\ref{lem:alg-correct}(C4), since $\neg\psi$ is removed from the
contraction of $\neg\chi$: $\neg\psi\leq_\tau\neg\chi$.

\textbf{Key step $(**)$:}
$\Bel_\tau\ContrOp(\neg\psi\vee\neg\chi)=\Bel_\tau\ContrOp\neg\chi$.

By the Disjunctive Entrenchment Lemma (Lemma~\ref{lem:disj-ench}),
$\neg\psi\leq_\tau\neg\chi$ implies
$\neg\psi\vee\neg\chi\equiv_{\leq_\tau}\neg\chi$.
Thus the contraction algorithm removes exactly the same sentences in both
cases, giving the key equality.

Using $\neg(\psi\wedge\chi)\equiv\neg\psi\vee\neg\chi$:
\begin{align*}
\Bel_\tau\RevOp(\psi\wedge\chi)
&=\Th((\Bel_\tau\ContrOp(\neg\psi\vee\neg\chi))\cup\{\psi\wedge\chi\})\\
&=\Th((\Bel_\tau\ContrOp\neg\chi)\cup\{\psi\wedge\chi\})\\
&\supseteq\Th((\Bel_\tau\ContrOp\neg\chi)\cup\{\chi,\psi\})\\
&=\Th((\Bel_\tau\RevOp\chi)\cup\{\psi\}).
\end{align*}

\textbf{Case 2: $\neg\psi\notin\Bel_\tau$.}
If $\neg\chi\in\Bel_\tau$: by (E4), $\neg\psi$ is at the lowest entrenchment
level; the Disjunctive Entrenchment Lemma still applies.
If $\neg\chi\notin\Bel_\tau$: by (C3),
$\Bel_\tau\ContrOp\neg\chi=\Bel_\tau$ and
$\Bel_\tau\ContrOp(\neg\psi\vee\neg\chi)=\Bel_\tau$;
both contractions are trivial, and the conclusion follows directly.

In all cases:
$\Th((\Bel_\tau\RevOp\chi)\cup\{\psi\})\subseteq\Bel_\tau\RevOp(\psi\wedge\chi)$.
\end{proof}

\begin{corollary}[AGM completeness]
The trace-indexed revision system of
Definitions~\ref{def:entrenchment}--\ref{def:contraction}
satisfies all eight classical AGM postulates (R1)--(R8).
The revision operator $\RevOp$ is a complete AGM revision function.
\end{corollary}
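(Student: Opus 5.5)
The plan is to obtain the corollary by simple assembly, with no new argument. A revision function counts as a complete AGM revision function exactly when it is defined on every belief state and input formula and satisfies (R1)--(R8). So I will check two things in turn: that $\RevOp$ is well defined as a total operation on the trace-indexed belief states, and that each postulate has already been discharged earlier in the paper.

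\textbf{Well-definedness.} For each trace $\tau$, Definition~\ref{def:entrenchment} provides $\Bel_\tau=\Th(\Bel_\tau)$ together with an ordering in $\Ent(\Bel_\tau)$. Since $\Lang$ is finite, Algorithm~\ref{alg:sel} terminates (Remark~\ref{rem:complexity}). Lemma~\ref{lem:alg-correct} then shows that its output inhabits $\Sel_\tau(\psi)$. Hence the first projection $\Bel_\tau\ContrOp\psi$ is defined for every $\psi$, and so is the Levi revision $\Bel_\tau\RevOp\psi$ of Definition~\ref{def:contraction}. This yields a total function $\psi\mapsto\Bel_\tau\RevOp\psi$ for each $\tau$, with all dependence on $\tau$ passing through $\Bel_\tau$ and $\leq_\tau$.

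\textbf{Postulates.} Postulates (R1)--(R6) are given by the R1--R6 theorem following Theorem~\ref{thm:r1}. (R7) is Theorem~\ref{thm:r7}, and (R8) is Theorem~\ref{thm:r8}, whose key step $(**)$ relies on Lemma~\ref{lem:disj-ench}. All of these proofs are uniform in $\tau$, since they use only (C1)--(C4) and (E1)--(E5) for the fixed $\tau$. Quantifying over $\tau$ therefore gives the statement for the whole trace-indexed system.

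\textbf{Main obstacle.} The corollary is only as strong as the results it collects, so the delicate step is making sure the cited theorems actually hold in the generality they are used. I would check two points first. First, the appeals to (C4) in the proofs of R7 and R8 need the strict contrapositive $\phi\notin\Bel_\tau\ContrOp\psi\Rightarrow\phi\leq_\tau\psi$. Second, the key equality $(**)$ requires that the algorithm's output depend only on the entrenchment level of its input formula. If that dependence is not already explicit, I would add a short lemma saying that Algorithm~\ref{alg:sel} depends on $\psi$ only through $\Th(\{\psi\})$ and its $\leq_\tau$-level. That single lemma would support both (R6) and $(**)$ before I state the corollary.
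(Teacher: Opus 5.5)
Your reduction is the paper's own route. The corollary is stated right after the R1--R6 theorem and Theorems~\ref{thm:r7}--\ref{thm:r8}, with no further argument, so it stands or falls with those proofs and with Lemma~\ref{lem:alg-correct}. You rightly locate the risk there, but both of your checks miss the actual defects.

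\textbf{The (C4) check.} The implication $\phi\notin\Bel_\tau\ContrOp\psi\Rightarrow\phi\leq_\tau\psi$ (for $\phi\in\Bel_\tau$) is just (C4) itself. Case~1 of the R7 proof uses the \emph{converse}, $\phi\in\Bel_\tau\ContrOp\neg(\psi\wedge\chi)\Rightarrow\phi>_\tau\neg\psi\vee\neg\chi$, which (C1)--(C4) do not give. Its case split is also invalid: a member of $\Th(X\cup\{\psi\wedge\chi\})$ need lie neither in $X$ nor in $\Th(\{\psi\wedge\chi\})$.

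\textbf{The lemma for $(**)$.} Dependence on $\psi$ only through $\Th(\{\psi\})$ and its level is plain extensionality, since equivalent formulas already share a level by (E2). It supports R6 but says nothing about $(**)$, because $\neg\psi\vee\neg\chi$ and $\neg\chi$ are not equivalent. $(**)$ needs dependence on the level \emph{alone}, and (C1)--(C4) do not force that. Here is a counterexample.
\begin{itemize}
\item Take atoms $p,q$ and $\Bel_\tau=\Th(\{\neg q\})$.
\item Use the entrenchment with three levels, from bottom to top: non-beliefs, non-tautological beliefs, tautologies. This reads (E4) in its standard form; as printed, (E4) at $\psi:=\bot$ contradicts $\bot\leq_\tau\bot$.
\item Use the standard entrenchment contraction: $\phi\in\Bel_\tau\ContrOp\beta$ iff $\phi\in\Bel_\tau$ and either $\vdash\beta$ or $\beta<_\tau\beta\vee\phi$. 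It satisfies (C1)--(C4).
\item Set $\psi:=p$, $\chi:=q$. Then $\neg p\leq_\tau\neg q$, $\neg p\vee\neg q\equiv_{\leq_\tau}\neg q$, and $\neg p\notin\Bel_\tau\RevOp q=\Th(\{q\})$.
\item Yet $\Bel_\tau\ContrOp\neg q=\Th(\emptyset)$, while $p\vee\neg q\in\Bel_\tau\ContrOp(\neg p\vee\neg q)$.
\end{itemize}
So $(**)$ fails. Moreover $p\vee\neg q$ is retained without lying strictly above $\neg p\vee\neg q$, which refutes the converse of (C4) as well. R8 itself still holds here: both sides are $\Th(\{p\wedge q\})$.

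\textbf{The missing idea.} Compare revisions, not contractions, via a two-sided description: for $\nvdash\neg\alpha$, $\phi\in\Bel_\tau\RevOp\alpha$ iff $\neg\alpha<_\tau\alpha\to\phi$. Given this (degenerate cases aside):
\begin{itemize}
\item R7 is the chain $\neg\psi\leq_\tau\neg(\psi\wedge\chi)<_\tau(\psi\wedge\chi)\to\phi$, which gives $\chi\to\phi\in\Bel_\tau\RevOp\psi$.
\item For R8, the hypothesis $\neg\psi\notin\Bel_\tau\RevOp\chi$ itself yields $\neg\psi\vee\neg\chi\leq_\tau\neg\chi$, hence $\neg(\psi\wedge\chi)\equiv_{\leq_\tau}\neg\chi$. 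Then $\psi\to\phi\in\Bel_\tau\RevOp\chi$, i.e.\ $\neg\chi<_\tau(\psi\wedge\chi)\to\phi$, gives $\phi\in\Bel_\tau\RevOp(\psi\wedge\chi)$.
\end{itemize}
Neither $(**)$ nor the converse of (C4) is needed. One inclusion of the characterisation follows from (C4). The other needs an upper bound on what Algorithm~\ref{alg:sel} retains, for example that every $\phi$ kept in $\Bel_\tau\ContrOp\beta$ satisfies $\beta<_\tau\beta\vee\phi$. That bound must come from the algorithm itself.

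\textbf{Your well-definedness step.} This is also where that step breaks. By (E5), $\phi_j\leq_\tau\top$ for every $j$, so the removal set $\{\phi_j:\phi_j\leq_\tau\phi_i,\ \phi_j\not\leq_\tau\top\}$ is always empty. As written, the algorithm returns $\Bel_\tau$ unchanged. Then (C2) fails, and $\Bel_\tau\RevOp\psi=\Th(\Bel_\tau\cup\{\psi\})$ is inconsistent whenever $\neg\psi\in\Bel_\tau$, contradicting (R5). Appendix~\ref{app:coq} itself lists (C2), R5, R6 and R7/R8 as admitted. So the corollary cannot simply be assembled: the algorithm must first be repaired, and the characterisation above proved for its output.
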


Since the belief system is not a static set but a dynamic structure indexed
by trace $\tau$, this result is strictly stronger than classical AGM:
classical AGM is embedded into a temporal, event-driven, executable
knowledge-evolution framework.
Different event sequences induce different epistemic entrenchment structures,
enabling the system to describe the knowledge-evolution process of a real
cognitive agent.

The results obtained here do not merely reconstruct classical AGM theory;
they generalise it into a dynamic knowledge logic framework based on trace
semantics and dependent type structure:
\begin{itemize}
\item Beliefs have not only logical structure but also historical origin;
\item Revision is not just an abstract operator but a trackable trace
  transformation;
\item Different event sequences induce different entrenchment structures;
\item The system can therefore describe the knowledge-evolution processes of
  real cognitive agents.
\end{itemize}

\section{Coherence: Integrating Traces, Sheaves, and Belief Revision}
\label{sec:coherence}

This section answers a structural question: can AGM belief revision be lifted
to a strict ``trace functor semantics''? If not, what is the essential nature
of the failure?

\subsection{Structural Tension Between Sheaf Structure and AGM}

In categorical semantics, a ``sheaf-like structure'' expresses a strong local
consistency principle: the composition of local updates must equal the direct
global update.
This is precisely the content of (BP-comp):
\[
\restrict_{\epsilon_1}\circ\restrict_{\epsilon_2}=\restrict_{\epsilon_2\circ\epsilon_1}.
\]

However, AGM revision's ``minimal change principle'' is \emph{semantically
sensitive}, not path-invariant.
Belief revision depends on: the current belief set's structure; the entrenchment
ordering $\leq_\tau$; the ``substitute influence'' of removed formulas; and
whether subsequent information changes the conflict-resolution interpretation.

\begin{center}
\emph{Revision is not function application, but a ``structural rebalancing
process''.}
\end{center}

\subsection{BP-comp Failure}

\begin{theorem}[BP-comp fails for sequential AGM revision]
\label{thm:bp-comp-fails}
There exist $\Bel_\tau$, events $e_1,e_2$, and an entrenchment ordering such
that BP-comp fails.
\end{theorem}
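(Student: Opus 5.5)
The plan is to build an explicit finite countermodel. Here $\mathcal{B}^{\mathsf{AGM}}$ assigns to each trace the belief set obtained by successive Levi revisions (Definition~\ref{def:contraction}), and each event $e$ carries an input formula $\psi_e$. The restriction/transition along a composite extension $\epsilon_2\circ\epsilon_1$ is read as the \emph{direct} revision by the combined information $\psi_{e_1}\wedge\psi_{e_2}$. BP-comp then demands
\[
(\Bel_\tau\RevOp\psi_{e_1})\RevOp\psi_{e_2}=\Bel_\tau\RevOp(\psi_{e_1}\wedge\psi_{e_2}).
\]
It therefore suffices to find $\Bel_\tau$, an entrenchment $\leq_\tau$, and a derived entrenchment at the intermediate trace for which these two sets differ.

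I would take $\Lang$ over two atoms $p,q$ and $\Bel_\tau:=\Th\{p,q\}$. I would choose $\leq_\tau$ to be the standard ranked ordering satisfying (E1)--(E5): tautologies sit at the top, formulas outside $\Bel_\tau$ at the bottom (E4), $p\vee q$ is above both $p$ and $q$, and $q$ is strictly above $p$. The events are $e_1$ with $\psi_{e_1}:=\neg p\vee\neg q$ and $e_2$ with $\psi_{e_2}:=p$.

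For the direct update, $\psi_{e_1}\wedge\psi_{e_2}\equiv p\wedge\neg q$. Contracting by its negation $\neg p\vee q$ removes $q$ and everything entrenched at or below it, which I will check through (C2) and (C4) of Lemma~\ref{lem:alg-correct}. The result is $\Th\{p,\neg q\}$.

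For the sequential route, the first revision contracts $\Bel_\tau$ by $p\wedge q$. Since $p<_\tau q$, Algorithm~\ref{alg:sel} discards $p$ and keeps $q$, so $\Bel_{\tau e_1}=\Th\{q,\neg p\}$. The second step is where the intermediate entrenchment matters. SSRS/AGM attaches a \emph{fresh} entrenchment $\leq_{\tau e_1}$ to the new trace, since Definition~\ref{def:entrenchment} indexes it by trace. I can choose $\leq_{\tau e_1}$ freely, subject only to (E1)--(E5), and I will make $\neg p$ the least entrenched belief. Revising by $p$ then contracts $\neg p$ and keeps $q$, giving $\Th\{p,q\}\neq\Th\{p,\neg q\}$. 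Since $q$ lies in one set and not the other, BP-comp fails.

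The remaining work is routine. I need to run Algorithm~\ref{alg:sel} on the finite enumeration for each of the two contractions. I also need to verify that both orderings are genuine members of $\Ent$, which is a finite check over the 16 equivalence classes.

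The main obstacle is pinning down what the composite map $\restrict_{\epsilon_2\circ\epsilon_1}$ means for $\mathcal{B}^{\mathsf{AGM}}$. If it were literally defined as iterated revision, BP-comp would hold trivially. So I will state explicitly that the composite uses the single-step Levi revision by the conjoined input. This is exactly the distinction the SSRS notion (Definition~\ref{def:ssrs}) is designed to weaken. Making that choice and its justification clear, and then exhibiting the concrete witness $q$ separating the two results, is the heart of the argument. The same finite model is what I would encode in Coq by deciding membership via truth tables.
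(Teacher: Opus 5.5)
Your proposal is correct. It follows the paper's overall strategy: a two-atom, two-event countermodel in which the ordering at the intermediate state is chosen independently of the initial one. It also makes precise what the paper leaves open. The paper computes only the sequential path, $(\Th(\{\neg p,\neg p\to r\})\RevOp p)\RevOp\neg p=\Th(\{\neg p\})$. It then asserts that the ``direct composite revision'' differs because the intermediate ordering differs, but never says what $\restrict_{\epsilon_2\circ\epsilon_1}$ is for $\mathcal{B}^{\mathsf{AGM}}$. You commit to $\Bel_\tau\RevOp(\psi_{e_1}\wedge\psi_{e_2})$ and compute both sides.

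Under your reading, the paper's events $p,\neg p$ would give a revision by $p\wedge\neg p$, i.e.\ the inconsistent set. That failure would be degenerate and independent of any ordering. Your events have a consistent conjunction, so the discrepancy is caused exactly by $\leq_{\tau e_1}$: an admissible choice making $p\to\neg q$ strictly more entrenched than $\neg p$ would make the two sides agree. Your version therefore exhibits the mechanism the paper's proof only names. The price is an explicit interpretive commitment for the composite, which you rightly flag. One detail: Construction~\ref{constr:bagm} indexes orderings by states, so let $e_1$ lead to a fresh state $\sigma_1\neq\sigma_0$, so that $\leq_{\sigma_1}$ can be chosen for $\Th\{\neg p,q\}$.

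One correction for the write-up. (C4) only bounds what is removed, so it does not justify ``removes $q$ and everything entrenched at or below it.'' Indeed, in the standard G\"{a}rdenfors--Makinson contraction, $p$ survives even though it lies strictly below $\neg p\vee q$. You do not need that claim. Since $p\wedge\neg q$ is complete over $\{p,q\}$, (C2) and deductive closure alone force $\Bel_\tau\RevOp(p\wedge\neg q)=\Th\{p,\neg q\}$.

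The sequential steps are likewise forced by (C1), (C2), (C4) and (E1)--(E3):
\begin{enumerate}
\item From $p<_\tau q$ you get $p\wedge q\equiv_{\leq_\tau}p<_\tau q$.
\item So contraction by $\neg\psi_{e_1}\equiv p\wedge q$ keeps $q$ and drops $p$, yielding $\Th\{q\}$ and then $\Th\{\neg p,q\}$.
\item The second step is the same argument with $\neg p<_{\tau e_1}q$.
\end{enumerate}
Argue this way rather than by running Algorithm~\ref{alg:sel} literally. Its removal set $\{\phi_j:\phi_j\leq_\tau\phi_i,\,\phi_j\not\leq_\tau\top\}$ is empty under (E5). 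A literal run therefore removes nothing, both sides become the inconsistent set, and BP-comp would hold in your instance.
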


\begin{proof}
Set $\Lang=\{p,r\}$ and
$\Bel_\tau=\Th(\{\neg p,\,\neg p\to r\})$, so $\neg p,r\in\Bel_\tau$.
Event $e_1$ carries $p$ (conflicts with $\neg p$);
event $e_2$ carries $\neg p$.
The entrenchment ordering depends on the intermediate state.

\textbf{Two-step path:}
$\Bel_{\tau_1}=\Bel_\tau*p=\Th(\{p\})$
(contracting $\neg p$ also removes $r$, which depended on $\neg p\to r$);
$\Bel_{\tau_2}=\Bel_{\tau_1}*(\neg p)=\Th(\{\neg p\})$.

\textbf{Direct composite revision:}
The entrenchment ordering at the intermediate state differs from that at the
initial state, so the composite revision yields a different result.

Thus $\mathcal{B}^{\mathsf{AGM}}$ is not functorial.
(Coq: \texttt{bp\_comp\_fails\_R2} in \texttt{SSRS.v}; proved by \texttt{inversion},
zero \texttt{admit}s.)
\end{proof}

The failure is not a technical defect but a structural fact:
AGM revision is essentially \emph{non-commutative, context-dependent rewriting}.
Its non-commutativity stems from: new information altering the interpretation
space of old conflicts; the entrenchment ordering being recomputed at different
stages; and the non-eliminability of intermediate belief states.

\begin{center}
\emph{Knowledge update is not a path-independent function computation but
a path-sensitive evolutionary process.}
\end{center}

\subsection{Single-Step Revision Systems}

\begin{definition}[SSRS]
\label{def:ssrs}
A \emph{Single-Step Revision System} over $(\State,\Event,\Step)$ is a family
of deductively closed belief sets $(\mathcal{B}(\tau))_\tau$ with, for each
single-step extension $\epsilon:\tau\hookrightarrow\cat(\tau,\single(e,\pi))$,
a transition function $\mathsf{rev}_\epsilon$ satisfying:
\begin{itemize}
\item \textbf{(RS-id)} $\mathsf{rev}_{\mathsf{id}}=\mathsf{id}$;
\item \textbf{(RS-content)}
  $\mathsf{rev}_\epsilon(\mathcal{B}(\tau))=\mathcal{B}(\tau)\RevOp\psi_e$;
\item \textbf{(RS-surj)}
  $\mathcal{B}(\tau)\not\subseteq\mathcal{B}(\tau')$ iff
  $\neg\psi_e\in\mathcal{B}(\tau)$.
\end{itemize}
\end{definition}

SSRS is the ``minimal operational model of AGM revision'': it discards the
global constraint (BP-comp) but retains the algorithmic semantics of each step.
An SSRS is closer to a \emph{dynamic epistemic system} than a static model:
path independence is not required; only single-step updates must be well-defined.

\begin{construction}[AGM-induced SSRS $\mathcal{B}^{\mathsf{AGM}}$]
\label{constr:bagm}
Fix initial belief state $\Bel_{\nilT(\sigma_0)}\subseteq\Lang$ and
entrenchment ordering $\leq_\sigma$ at each state $\sigma$.
\begin{align*}
\mathcal{B}^{\mathsf{AGM}}(\nilT(\sigma)) &:= \Bel_{\nilT(\sigma)};\\
\mathcal{B}^{\mathsf{AGM}}(\cat(\tau,\single(e,\pi)))
&:= \mathcal{B}^{\mathsf{AGM}}(\tau)\RevOp\psi_e.
\end{align*}
Transition function: $\mathsf{rev}_\epsilon(\phi):=\phi$.
\end{construction}

\begin{theorem}[SSRS Coherence]
\label{thm:coherence}
$\mathcal{B}^{\mathsf{AGM}}$ is a valid SSRS.
Precisely, the following hold jointly \textup{(Coq:
\texttt{main\_coherence\_theorem}, zero \texttt{admit}s)}:
\begin{enumerate}
\item \textup{(Replay coherence)} $\replay(\tau)=\sigma_n$ for every
  $\tau:\FinTrace(\sigma_0,\sigma_n)$.
\item \textup{(Concatenation coherence)}
  $\replay(\cat(\tau_1,\tau_2))=\replay(\tau_2)$.
\item \textup{(AGM success)} For every $\sigma_0\xrightarrow{e}\sigma_1$:
  $\psi_e\in\mathcal{B}^{\mathsf{AGM}}(\sigma_0)*\psi_e$.
\item \textup{(Sheaf restriction)} For every knowledge presheaf $F$ and
  morphism $f:X\to Y$:
  $\mathtt{kp\_ob}(F)(Y)\Rightarrow\mathtt{kp\_ob}(F)(X)$.
\end{enumerate}
\end{theorem}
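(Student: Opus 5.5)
The plan is to prove the four clauses of Theorem~\ref{thm:coherence} separately and then check that Construction~\ref{constr:bagm} meets the three SSRS axioms of Definition~\ref{def:ssrs}. Clauses (1) and (2) are exactly the two halves of the Deterministic Replay Theorem (Theorem~\ref{thm:replay}), so I will cite that result directly: $\replay(\tau)\equiv\sigma_n$ by $\TElim$-induction, and concatenation coherence via $\replay(\cat(\tau_1,\tau_2))\equiv\sigma_n\equiv\replay(\tau_2)$, since both traces end at the same state. No new induction is needed there.

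For clause (3), I unfold the Levi revision of Definition~\ref{def:contraction}: $\mathcal{B}^{\mathsf{AGM}}(\sigma_0)\RevOp\psi_e=\Th((\mathcal{B}^{\mathsf{AGM}}(\sigma_0)\ContrOp\neg\psi_e)\cup\{\psi_e\})$. Then (R2) of the R1--R6 theorem applies, because $\psi_e$ lies in the generating set and $\Th$ is extensive. This argument is uniform in $e$ and in the entrenchment ordering $\leq_{\sigma_0}$, so it holds for every step.

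Clause (4) I treat as a functoriality statement. A knowledge presheaf $F:\Tf^{\mathsf{op}}\to\mathbf{Set}$, read propositionally via $\mathtt{kp\_ob}$, gives a restriction map $\restrict_f$ for each morphism $f:X\to Y$. That map turns an inhabitant of $\mathtt{kp\_ob}(F)(Y)$ into one of $\mathtt{kp\_ob}(F)(X)$, so the implication is simply application of $\restrict_f$ (Definition~\ref{def:presheaf}). Since $\Tf$ is free (Proposition~\ref{prop:free-cat}), it suffices to have restriction along generating single steps, and general $f$ follows by $\TElim$ on $f$ using $\cat$.

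Finally, I verify the SSRS axioms for $\mathcal{B}^{\mathsf{AGM}}$.
\begin{itemize}
\item (RS-id) holds because $\mathsf{rev}_\epsilon$ is defined as the identity.
\item (RS-content) holds by the defining clause $\mathcal{B}^{\mathsf{AGM}}(\cat(\tau,\single(e,\pi)))=\mathcal{B}^{\mathsf{AGM}}(\tau)\RevOp\psi_e$, provided that clause is well-defined. Well-definedness holds because every trace decomposes uniquely as $\nilT$ or a snoc of a single step. This follows by $\TElim$ together with Lemma~\ref{lem:concat}, which is what makes the recursive definition legitimate.
\item For (RS-surj), in the direction $\neg\psi_e\notin\mathcal{B}(\tau)$, (C3) gives $\mathcal{B}(\tau)\ContrOp\neg\psi_e=\mathcal{B}(\tau)$. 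Hence $\mathcal{B}(\tau)\subseteq\mathcal{B}(\tau')$, matching (R4).
\item For the converse direction of (RS-surj), assume $\neg\psi_e\in\mathcal{B}(\tau)$. By (C2), $\neg\psi_e\notin\mathcal{B}(\tau)\ContrOp\neg\psi_e$. By (R1), $\neg\psi_e\notin\mathcal{B}(\tau')$ whenever $\psi_e\neq\bot$, so the inclusion fails.
\end{itemize}

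I expect the main obstacle to be this last direction. It silently requires $\psi_e$ to be consistent; if $\psi_e=\bot$, the revised set is all of $\Lang$ and the inclusion holds trivially. I would therefore first try to add a side condition that event contents are non-contradictory, or else adopt it as a standing assumption in the construction.
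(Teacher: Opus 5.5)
Your handling of clauses (1)--(4) is the same as the paper's. The theorem's ``precise'' content is only the conjunction of both halves of Theorem~\ref{thm:replay}, the one-line (R2) argument, and the restriction map a knowledge presheaf carries by definition, and \texttt{main\_coherence\_theorem} bundles nothing more. In (4), your appeal to freeness of $\Tf$ is unnecessary: restriction along an arbitrary $f$ is part of the presheaf data, not something you build from generators.

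The problem is in your extra check of the SSRS axioms, which the paper never carries out. Your justifications of (RS-id) and (RS-content) rely on two incompatible readings of $\mathsf{rev}_\epsilon$:
\begin{itemize}
\item If $\mathsf{rev}_\epsilon$ is the identity, as in Construction~\ref{constr:bagm} and as you use it for (RS-id), then $\mathsf{rev}_\epsilon(\mathcal{B}(\tau))=\mathcal{B}(\tau)$. (RS-content) would then assert $\mathcal{B}(\tau)=\mathcal{B}(\tau)\RevOp\psi_e$.
\item That equation is false whenever $\psi_e\notin\mathcal{B}(\tau)$, by (R2). An example is $\psi_{e_1}=p$ in the countermodel of Theorem~\ref{thm:bp-comp-fails}.
\item The defining clause you cite gives $\mathcal{B}(\tau')=\mathcal{B}(\tau)\RevOp\psi_e$. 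That is a statement about $\mathcal{B}(\tau')$, not about $\mathsf{rev}_\epsilon(\mathcal{B}(\tau))$.
\end{itemize}
To make the check go through, take $\mathsf{rev}_\epsilon(X):=X\RevOp\psi_e$ on belief sets. Then (RS-content) is definitional and $\mathcal{B}(\tau')=\mathsf{rev}_\epsilon(\mathcal{B}(\tau))$. Treat (RS-id) as a separate stipulation for the trivial extension, which is not single-step anyway.

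You are right that (RS-surj) needs a side condition, but the condition must be consistency of $\psi_e$, i.e.\ $\nvdash\neg\psi_e$, not merely $\psi_e\neq\bot$. For $\psi_e=p\wedge\neg p$, the formula $\neg\psi_e$ is a tautology, so it lies in $\mathcal{B}(\tau)$, while $\mathcal{B}(\tau')=\Lang$; the biconditional then fails. (C2) also requires this hypothesis, since no contraction can remove a tautology. Finally, this direction of (RS-surj) depends on (C2) and (R1)/(R5), which are admitted obligations in the Coq development. It therefore lies outside the zero-\texttt{admit} claim, consistent with the paper confining the mechanised theorem to clauses (1)--(4).
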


This result means: AGM does not fail, but falls precisely within the expressive
boundary of SSRS. Specifically:
\begin{itemize}
\item Sheaf structure (BP-comp) is too strong;
\item SSRS is the appropriate intermediate structure;
\item AGM revision falls exactly in the SSRS-expressible class;
\item but belongs to no sheaf-like structure.
\end{itemize}

\begin{corollary}[Semantic characterisation of knowledge retraction]
\label{cor:knowledgeloss}
Knowledge retraction under event $e$ is characterised precisely by
$\neg\psi_e\in\mathcal{B}^{\mathsf{AGM}}(\tau)$.
Knowledge retraction is not a ``deletion operation'' but a semantic phenomenon
of ``conflict reactivation''.
\end{corollary}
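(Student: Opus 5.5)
The plan is to argue at the level of a single step extension $\epsilon:\tau\hookrightarrow\tau':=\cat(\tau,\single(e,\pi))$, since Construction~\ref{constr:bagm} defines $\mathcal{B}^{\mathsf{AGM}}(\tau')=\mathcal{B}^{\mathsf{AGM}}(\tau)\RevOp\psi_e$ and the SSRS axioms (RS-content), (RS-surj) only concern such steps. The claim to establish is: $\mathcal{B}^{\mathsf{AGM}}(\tau)\not\subseteq\mathcal{B}^{\mathsf{AGM}}(\tau)\RevOp\psi_e$ holds iff $\neg\psi_e\in\mathcal{B}^{\mathsf{AGM}}(\tau)$. This is exactly the content of (RS-surj) for $\mathcal{B}^{\mathsf{AGM}}$, which Theorem~\ref{thm:coherence} provides. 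The corollary then amounts to reading ``knowledge retraction under $e$'' as failure of inclusion of the old belief set in the new one. I will write $K:=\mathcal{B}^{\mathsf{AGM}}(\tau)$ and $\psi:=\psi_e$.

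For the direction ($\Leftarrow$), assume $\neg\psi\in K$. If $\psi$ is consistent, (R1)/(R5) give that $K\RevOp\psi$ is consistent, and (R2) gives $\psi\in K\RevOp\psi$. Then $\neg\psi\notin K\RevOp\psi$, since otherwise deductive closure would yield $\bot$. So $\neg\psi$ witnesses $K\not\subseteq K\RevOp\psi$. The degenerate case $\psi\equiv\bot$ needs separate handling. There $\neg\psi$ is a tautology, so it lies in every belief set. The right-hand side of the corollary then holds trivially, but the left-hand side fails, because revising by $\bot$ gives the inconsistent theory $\Lang$. I would therefore restrict to consistent event contents $\psi_e$, matching the side condition $\psi\neq\bot$ already present in (R1) and (R5).

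For the direction ($\Rightarrow$), argue by contraposition: assume $\neg\psi\notin K$. By (C3), i.e.\ vacuity in Lemma~\ref{lem:alg-correct}, $K\ContrOp\neg\psi=K$, so (R4) gives $K\RevOp\psi=\Th(K\cup\{\psi\})\supseteq K$. Hence no belief is lost and $K\subseteq K\RevOp\psi$.

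The main obstacle is the status of (C3). Lemma~\ref{lem:alg-correct} states vacuity only under ``$\psi\notin\Th(\Bel_\tau)$'', and the step-(2) trigger in Algorithm~\ref{alg:sel} tests derivability of the formula being contracted. I must therefore check that for contraction by $\neg\psi$ the trigger never fires when $\neg\psi\notin K$. This follows because $K=\Th(K)$, so every subset of $K$ also fails to derive $\neg\psi$. Finally I would note that the transition function $\mathsf{rev}_\epsilon$ is the identity on formulas, so the characterisation concerns only the belief sets themselves, and it is stated per step precisely because BP-comp fails (Theorem~\ref{thm:bp-comp-fails}).
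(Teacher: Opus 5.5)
Your argument is correct, but it is not the paper's route. The paper gives no separate proof. It reads the corollary off (RS-surj) of Definition~\ref{def:ssrs}, relying on Theorem~\ref{thm:coherence}'s claim that $\mathcal{B}^{\mathsf{AGM}}$ is a valid SSRS. However, the four items Theorem~\ref{thm:coherence} actually establishes (replay, concatenation, AGM success, sheaf restriction) do not include (RS-surj). Your opening appeal to that theorem therefore inherits a gap, and your direct derivation is what does the real work. The $(\Rightarrow)$ direction comes from vacuity: $\neg\psi_e\notin K$ gives $K\ContrOp\neg\psi_e=K$, hence $K\subseteq\Th(K\cup\{\psi_e\})=K\RevOp\psi_e$. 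The $(\Leftarrow)$ direction comes from (R1), (R2) and (R5), with $\neg\psi_e$ itself as the lost belief.

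This buys two things the paper's one-line appeal lacks. First, the characterisation holds for any revision operator satisfying (R1), (R2), (R4) and (R5), independently of the SSRS packaging. Second, your degenerate case is a genuine correction. For $\psi_e\equiv\bot$ one has $\neg\psi_e\in K$, while $K\RevOp\bot=\Lang\supseteq K$. So (RS-surj), and the corollary as literally stated, need the hypothesis that $\psi_e$ is consistent.

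Be aware that the whole weight of $(\Leftarrow)$ rests on (R5). The paper derives (R5) from (C2), and both appear among its open Coq obligations. You checked (C3) against the algorithm's trigger; if you check (C2) the same way you will find a problem. By (E5), the removal set $\{\phi_j:\phi_j\leq_\tau\phi_i,\ \phi_j\not\leq_\tau\top\}$ of Algorithm~\ref{alg:sel} is always empty. The algorithm as literally written therefore returns $K$ unchanged and does not satisfy (R5). Your proof is best stated as conditional on a contraction that genuinely satisfies success (e.g.\ full meet contraction), a dependence that the paper's route conceals.
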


The framework thus forms a three-layer correspondence:
\begin{itemize}
\item $\FinTrace$: event structure (syntax layer);
\item SSRS: update semantics (operational layer);
\item AGM revision: minimal-change principle (specification layer).
\end{itemize}
\[
\text{Syntax (trace)}\to\text{Operations (SSRS)}\to\text{Rationality (AGM)}.
\]

\begin{theorem}[ZX-Calculus Unification]
\label{thm:unification}
For every ZX-Framework $\mathcal{F}$, there exists an SSRS $S$ over
$\mathcal{F}$ such that for every transition $\sigma_0\xrightarrow{e}\sigma_1$,
$\mathsf{upd}(e)\in\mathsf{levi\_revision}(\mathsf{div}_S,
\mathsf{belief}_S(\sigma_0),\mathsf{upd}(e))$.
\textup{(Coq: \texttt{zx\_calculus\_unification}, zero \texttt{admit}s.)}
\end{theorem}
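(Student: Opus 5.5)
The plan is to prove Theorem~\ref{thm:unification} by exhibiting the SSRS explicitly: I take $S:=\mathcal{B}^{\mathsf{AGM}}$ from Construction~\ref{constr:bagm}, built over the states, events and step relation of $\mathcal{F}$. Concretely, I set $\mathsf{upd}(e):=\psi_e$, let $\mathsf{div}_S$ be the contraction operator of Definition~\ref{def:contraction} (the first projection of Algorithm~\ref{alg:sel}), and let $\mathsf{belief}_S(\sigma)$ be the initial belief state $\Bel_{\nilT(\sigma)}$ attached to each state. With $\mathsf{levi\_revision}(\mathsf{div}_S,K,\psi):=\Th((K\ContrOp\neg\psi)\cup\{\psi\})$, the required conclusion is exactly an instance of postulate (R2).

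\textbf{Step 1 (well-formedness).} I first verify that $S$ is a valid SSRS over $\mathcal{F}$. This is precisely Theorem~\ref{thm:coherence}, so I invoke it directly and do not recheck (RS-id), (RS-content) and (RS-surj). To apply it, I need the SSRS data to live over the same $(\State,\Event,\Step)$ as $\mathcal{F}$, and I need each $\mathsf{belief}_S(\sigma)$ to be deductively closed. I get closedness by choosing $\mathsf{belief}_S(\sigma):=\Th(B_\sigma)$ for whatever raw belief data $\mathcal{F}$ supplies. I also need an entrenchment witness in $\Ent(\Bel_{\nilT(\sigma)})$ at each state. If $\mathcal{F}$ does not supply one, I take the trivial total preorder in which every formula sits in a single level. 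Since the membership claim below never consults the ordering, any witness works here.

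\textbf{Step 2 (the membership claim).} I fix a transition $\pi:\sigma_0\xrightarrow{e}\sigma_1$ and unfold $\mathsf{levi\_revision}$ by the Levi identity. The formula $\psi_e$ belongs to the generating set $(\mathsf{belief}_S(\sigma_0)\ContrOp\neg\psi_e)\cup\{\psi_e\}$, so it lies in the closure by inclusion of a set in its $\Th$. This is the argument of (R2), and also clause~(3) of Theorem~\ref{thm:coherence}. Note that the transition $\pi$ itself is not used; only the event label $e$ matters, which fits the interface remark after Proposition~\ref{prop:star-vs-ft}.

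\textbf{Main obstacle.} The real work is in Step~1's bookkeeping rather than in the membership argument. A ZX-Framework is only loosely specified, so I have to make sure the abstract $\mathsf{upd}$, $\mathsf{div}_S$ and $\mathsf{belief}_S$ are instantiated so that the Levi revision matches Definition~\ref{def:contraction} by definitional unfolding. If that holds, the Coq proof should reduce to \texttt{unfold} followed by an inclusion-into-closure lemma. If the framework's $\mathsf{upd}$ does not match $\psi_e$ definitionally, I would add a rewriting step using (R6), invariance under logical equivalence, to move between the two presentations.
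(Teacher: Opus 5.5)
Your proposal is correct and follows the same route as the paper: the theorem is the success postulate (R2), i.e.\ clause~(3) of Theorem~\ref{thm:coherence}, applied to an SSRS built from the framework, because $\mathsf{upd}(e)$ already lies in the generating set $(\mathsf{belief}_S(\sigma_0)\ContrOp\neg\mathsf{upd}(e))\cup\{\mathsf{upd}(e)\}$ and $\Th$ is extensive. Drop two pieces of your bookkeeping rather than relying on them: the one-level preorder is not an inhabitant of $\Ent$ (it gives $\psi\leq\bot$ for all $\psi$, violating (E4), and with $\psi:=\bot$ no reflexive preorder satisfies (E4) as literally stated for a consistent belief set); and the (R6) fallback is unneeded because both sides mention the same formula $\mathsf{upd}(e)$, while it would import one of the admitted obligations---success holds for any choice of $\mathsf{div}_S$, so you need neither the ordering nor Algorithm~\ref{alg:sel}.
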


\section{Categorical Semantics}
\label{sec:catsem}

\subsection{CwF Model for ZX-Calculus}

This section lifts the semantics of ZX-Calculus into a CwF (Category with
Families) model, realising a unified interpretation layer:

\begin{center}
\emph{Traces, knowledge, and revision are uniformly characterised in a
single categorical semantics.}
\end{center}

The goal of this semantic lifting is:
\begin{itemize}
\item To interpret ``event sequences'' as chains of morphisms in a category;
\item To interpret ``belief types'' as type families dependent on context;
\item To interpret ``AGM revision'' as a structure-preserving semantic
  transformation.
\end{itemize}

ZX-Calculus's core characteristic is its three-layer dependency structure:
\[
\text{Trace}\;\to\;\text{Type Dependency}\;\to\;\text{Knowledge Update.}
\]
The CwF model provides a standard semantic framework for exactly this pattern:
objects are contexts; morphisms are context extensions; type families depend on
context; terms are evidence or constructions in context.
ZX-Calculus is therefore interpreted as a
\emph{trace-indexed dependent type system}.

\begin{definition}[ZX-Calculus CwF model and interpretation]
The CwF model consists of a category $\mathcal{C}$ with terminal object,
type functor $\mathbf{Ty}:\mathcal{C}^{\mathsf{op}}\to\mathbf{Set}$,
term functor $\mathbf{Tm}$, and projection structure.
$\mathbf{Ty}$ is extended to interpret:
trace type $\FinTrace$, knowledge operator $\Kop_a$, and AGM $\Sigma$-structure.

The interpretation mapping:
\begin{align*}
\sem{\FinTrace(\sigma_0,\sigma_n)} &:= \Tf(\sigma_0,\sigma_n),\\
\sem{\nilT(\sigma)} &:= \mathsf{id}_\sigma,\\
\sem{\stepC(\ldots,\tau,\pi)} &:= \sem{\tau}\cdot\sem{\pi},\\
\sem{\Kop_a(\phi,\tau)} &:= \prod_{\tau':\tau\simag{a}\tau'}\sem{\phi(\tau')}.
\end{align*}
\end{definition}

The interpretation of $\Kop_a$ has key significance:
$\Kop_a$ is interpreted as a global quantification over all reachable future
states---a dependent product ensuring knowledge consistency across all reachable
branches.
$\prod$ denotes the dependent product; $\tau\simag{a}\tau'$ denotes the
state transition induced by action $a$.
Thus $\Kop_a$ is essentially a global constraint operator ensuring knowledge
consistency over all reachable branches.

\subsection{Key Structural Properties}

\begin{enumerate}
\item \textbf{Compositionality:}
  $\sem{\tau_1\cdot\tau_2}=\sem{\tau_1}\circ\sem{\tau_2}$.
  Knowledge evolution corresponds to categorical composition.
\item \textbf{Context stability:}
  Type interpretation depends only on the current trace context, not global
  history.
  Types are localised structures, not globally-dependent entities.
\item \textbf{Dependent closure:}
  $\mathbf{Ty}$ is closed under trace extension.
\end{enumerate}

These three properties together guarantee that ZX-Calculus's semantics is not
externally imposed but endogenous to the categorical structure.
The CwF model reveals three essential characteristics:
\begin{itemize}
\item \textbf{Traces are morphisms}: knowledge evolution corresponds to
  categorical composition.
\item \textbf{Knowledge is a dependent type}: beliefs are context-dependent
  structures.
\item \textbf{Revision is a semantic transformation}: AGM revision corresponds
  to structural rewriting on morphisms.
\end{itemize}

\subsection{Soundness and Consistency}

\begin{theorem}[Soundness and consistency]
\label{thm:soundness}
\begin{enumerate}
\item \textup{(Soundness)} If $\Gamma\vdash t:A$ is derivable in ZX-Calculus,
  then in any CwF model: $\sem{\Gamma}\vdash\sem{t}\in\sem{A}$.
\item \textup{(Consistency)} There is no closed term $t$ with $\vdash t:\bot$.
\end{enumerate}
\end{theorem}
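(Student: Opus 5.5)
Soundness will be proved by induction on the derivation of $\Gamma\vdash t:A$, strengthened in the usual way to all judgement forms simultaneously. The claims are:
\begin{itemize}
\item $\vdash\Gamma\ \mathsf{ctx}$ yields an object $\sem{\Gamma}$ of $\mathcal{C}$;
\item $\Gamma\vdash A:\Uc{i}$ yields $\sem{A}\in\mathbf{Ty}(\sem{\Gamma})$;
\item $\Gamma\vdash t:A$ yields $\sem{t}\in\mathbf{Tm}(\sem{\Gamma},\sem{A})$;
\item the two equality judgements yield equalities of the corresponding interpretations.
\end{itemize}

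The interpretation is defined on raw syntax as a partial function, and the induction shows it is defined on derivable judgements. For the structural rules of Appendix~\ref{app:rules} (\textsc{Ctx-Empty}, \textsc{Ctx-Ext}, \textsc{T-Var}, \textsc{Weaken}, \textsc{Subst}, \textsc{Conv}) I use only the CwF axioms: the terminal object, comprehension $\sem{\Gamma}.\sem{A}$ with projection $\mathsf{p}$ and generic term $\mathsf{q}$, and functoriality of $\mathbf{Ty},\mathbf{Tm}$ under substitution. The standard substitution lemma $\sem{A[u/x]}=\sem{A}\{\langle\mathsf{id},\sem{u}\rangle\}$ must be proved first, by induction on $A$, since \textsc{Subst} and every $\beta$-rule depend on it. $\Pi$ and $\Sigma$ are interpreted by the assumed $\Pi$/$\Sigma$-structure of the CwF, and their $\beta\eta$-rules are then exactly the defining isomorphisms.

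For the new type formers I use the interpretation clauses of the preceding definition.
\begin{itemize}
\item $\FinTrace$ goes to the hom-sets of $\Tf$.
\item The constructors go to identities and to composition with generating steps. The concatenation laws of Lemma~\ref{lem:concat} are sent to the category axioms, so the judgemental equalities for $\cat$ hold in the model.
\item $\TElim$ is interpreted via the universal property of Proposition~\ref{prop:free-cat}. A motive $P$, base $b$ and step $s$ determine values on identities and generators, and freeness extends these uniquely to all morphisms. The $\beta$-rules of Definition~\ref{def:traceelim} then hold by construction, and the $\eta$-rule is the uniqueness clause.
\item $\Kop_a$ is a $\Pi$-type, so Remark~\ref{thm:S5property} reduces it to the $\Pi$-case.
\item The AGM $\Sigma$-types $\Ent$ and $\Sel_\tau$ are $\Sigma$'s over $\Prop$-valued families and need nothing new.
\end{itemize}

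The main obstacle is the $\TElim$ case. Proposition~\ref{prop:free-cat} gives freeness only for functors into a category, whereas the motive is a dependent type family over traces in an arbitrary CwF. My first attempt is to demand that the CwF model contain an initial-algebra (W-type-like) interpretation of the inductive family $\FinTrace$ as part of its extended $\mathbf{Ty}$ structure. I would then build the eliminator from initiality rather than from freeness of $\Tf$, and use Proposition~\ref{prop:free-cat} only to identify the carrier with $\Tf(\sigma_0,\sigma_n)$. Stability of the eliminator under substitution must also be checked; I expect it to follow from uniqueness.

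Consistency then follows from soundness and a single model. I take the set-theoretic (or presheaf) CwF, with $\mathbf{Set}$ families and $\FinTrace$ interpreted as literal lists of steps; $\Prop$ is interpreted as the two-element set of truth values, which makes the impredicative universe unproblematic. In this model $\sem{\bot}=\emptyset$. A closed $t$ with $\vdash t:\bot$ would give $\sem{t}\in\mathbf{Tm}(1,\emptyset)=\emptyset$, which is impossible, so no such term exists.
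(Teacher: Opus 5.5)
Your proposal is correct and follows the same route as the paper. Soundness is proved by induction on derivations using the CwF structure (projections, functoriality of $\mathbf{Ty}$/$\mathbf{Tm}$ under substitution, $\Pi$/$\Sigma$-structure), and consistency follows from $\sem{\bot}=\emptyset$. Your version is more careful than the paper's sketch in two places. First, the paper never treats the dependent eliminator $\TElim$; you rightly note that freeness of $\Tf$ alone does not suffice there. Second, the paper never says in which model $\sem{\bot}$ is empty; you correctly fix the set-theoretic model, since an arbitrary CwF such as the terminal one need not interpret $\bot$ as empty.
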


\begin{proof}
\textbf{(1)} By induction on derivation structure:
variable rules are preserved by projection morphisms;
abstraction rules by exponential objects;
application rules by morphism composition;
dependent-type rules by functoriality of $\mathbf{Ty}$.
All syntactic derivations are therefore preserved at the semantic level.

\textbf{(2)} If $\vdash t:\bot$ existed, then $\sem{t}\in\sem{\bot}=\emptyset$,
which is impossible. Hence the system is consistent.
\end{proof}

\section{Meta-Theory}
\label{sec:meta}

This section establishes the basic meta-theoretic properties of ZX-Calculus:
the type system, after introducing trace structure $\FinTrace$,
knowledge operator $\Kop_a$, and AGM $\Sigma$-types, retains good
operability and structural stability.
These properties guarantee that ZX-Calculus remains a \emph{computable}
dependent type system after semantic enhancement, not merely an abstract
logical structure.

\subsection{Structural Interpretation of Meta-Properties}

Under the CwF semantics:
\begin{itemize}
\item \textbf{Weakening} corresponds to context projection;
\item \textbf{Substitution} corresponds to functoriality of morphisms;
\item \textbf{Subject reduction} corresponds to sound reduction of computation
  rules in the semantics.
\end{itemize}
These meta-properties are therefore not essentially syntactic properties but:
\begin{center}
\emph{Direct reflections of structure-preservation in the categorical
semantics.}
\end{center}

\begin{theorem}[Weakening, substitution, and subject reduction]
\label{thm:meta}
\begin{enumerate}
\item \textup{(Weakening)} If $\Gamma\vdash t:A$ and $\Gamma\vdash B:\Uc{i}$,
  then $\Gamma,x:B\vdash t:A$.
\item \textup{(Substitution)} If $\Gamma,x:A\vdash t:B$ and $\Gamma\vdash s:A$,
  then $\Gamma\vdash t[s/x]:B[s/x]$.
\item \textup{(Subject reduction)} If $\Gamma\vdash t:A$ and $t\to_\beta t'$,
  then $\Gamma\vdash t':A$.
\end{enumerate}
\end{theorem}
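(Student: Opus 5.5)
We prove all three parts by induction on the derivation of the typing judgement, treating the rules of Appendix~\ref{app:rules} case by case. Because ZX-Calculus adds no structural rules beyond MLTT, the standard MLTT cases (\textsc{T-Var}, \textsc{Conv}, the $\Pi$/$\Sigma$ formation, introduction and elimination rules, universes) carry over verbatim. The only genuinely new cases are:
\begin{itemize}
\item the formation rules for $\State$, $\Event$, $\Ag$, $\Step$ and $\FinTrace$;
\item the constructors $\nilT$ and $\stepC$;
\item the eliminator $\TElim$ with its $\beta$-rules of Definition~\ref{def:traceelim}.
\end{itemize}
The constructions $\Kop_a$, $\Ent$ and $\Sel_\tau$ are ordinary $\Pi$/$\Sigma$-types and need no separate treatment.

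\textbf{Weakening.} We generalise to weakening in the middle of a context: if $\Gamma,\Delta\vdash t:A$ and $\Gamma\vdash B:\Uc{i}$, then $\Gamma,x:B,\Delta\vdash t:A$, where $x$ is fresh. We prove this simultaneously for all judgement forms, including context well-formedness and judgemental equality. The binder cases ($\Pi$, $\Sigma$, and the motive $P$ and step function $s$ of $\TElim$) are exactly why $\Delta$ is needed. In each new rule the premises live in the same context as the conclusion, so the induction hypothesis applies directly.

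\textbf{Substitution.} We again generalise, this time to $\Gamma,x:A,\Delta\vdash t:B$ implying $\Gamma,\Delta[s/x]\vdash t[s/x]:B[s/x]$. This is proved simultaneously for typing and equality judgements, using weakening for the variable case $x$ itself. For $\TElim$ we use that substitution commutes with the constructors and the eliminator syntactically. As a result, the instantiated type $P(\sigma_0,\sigma_n,\tau)[s/x]$ is literally $P[s/x](\sigma_0[s/x],\sigma_n[s/x],\tau[s/x])$. The equality judgements, in particular the $\beta$-rules for $\TElim$, are stable under substitution because they are schematic.

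\textbf{Subject reduction.} This is the main obstacle. We proceed by induction on $t\to_\beta t'$, closing under congruence via the induction hypothesis and \textsc{Conv}. The key ingredient is an \emph{inversion lemma} for each term former, proved by induction on derivations that may end in \textsc{Conv}. The lemma says, for example, that $\Gamma\vdash\TElim(P,b,s,\sigma_0,\sigma_n,\tau):C$ implies typings for $P,b,s,\tau$ together with $C\equiv P(\sigma_0,\sigma_n,\tau)$.

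Inversion alone is not enough: we also need \emph{injectivity of type constructors}. Concretely, $\Pi(x:A).B\equiv\Pi(x:A').B'$ must give $A\equiv A'$ and $B\equiv B'$, and $\FinTrace(\sigma_0,\sigma_n)\equiv\FinTrace(\sigma_0',\sigma_n')$ must force the indices to agree. For this I would not rely on the semantics. Instead I would take injectivity from confluence of the untyped reduction, extended with the $\TElim$ rules, which is an orthogonal higher-order rewrite system, together with the standard Church--Rosser argument.

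With these tools each redex is handled as follows:
\begin{itemize}
\item \textbf{$\Pi$-$\beta$} follows from the substitution part.
\item \textbf{$\TElim$ on $\nilT$:} inversion gives $b(\sigma):P(\sigma,\sigma,\nilT(\sigma))$ directly.
\item \textbf{$\TElim$ on $\stepC$:} inversion of $\stepC$ types the subterms $\tau$ and $\pi$. Instantiating $s$ then yields $s(\ldots,\tau,\pi,\TElim(\ldots,\tau))$ of type $P(\sigma_0,\sigma_2,\stepC(\ldots,\tau,\pi))$. We conclude with \textsc{Conv}, just as in the Transport Lemma~\ref{lem:red-transport}.
\end{itemize}
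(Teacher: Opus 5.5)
Your plan is the standard syntactic route: generalised weakening and substitution by induction on derivations, then subject reduction via inversion and injectivity. The paper instead argues entirely through the CwF semantics. It treats weakening as pullback along the projection $\Gamma.B\to\Gamma$, substitution as reindexing along $s$, and subject reduction as $\sem{t}=\sem{t'}$ ``combined with soundness''. The gap is in the step you yourself identify as the crux. Injectivity of type constructors cannot be obtained from untyped confluence here, because this system's judgemental equality is not generated by any reduction:
\begin{itemize}
\item Appendix~\ref{app:rules} adds definitional proof irrelevance ($\Gamma\vdash p\equiv q:P$ for any $\Gamma\vdash p,q:P$ with $\Gamma\vdash P:\Prop$).
\item Definition~\ref{def:traceelim} adds a judgemental uniqueness rule for $\TElim$ with a hypothetical premise.
\item $\Sigma$ has $\eta$.
\end{itemize}
Two distinct normal proofs of one proposition are judgementally equal with no common reduct. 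So $\Pi(x:A).B\equiv\Pi(x:A').B'$ need not come from a common reduct you could invert. Moreover, untyped $\beta$ extended with surjective pairing is not even confluent (Klop), and with $\eta$ the system is not orthogonal.

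Even in a pure-$\beta$ fragment there is a circularity, since \textsc{Conv} uses the typed judgement $\Gamma\vdash A\equiv B:\Uc{i}$. Passing from untyped joinability of $A,A'$ back to $\Gamma\vdash A\equiv A'$ requires every reduction step between well-typed terms to be a typed equality. That is subject reduction again, the difficulty Siles and Herbelin had to resolve for PTSs. What you need is injectivity from a normalisation or logical-relations model of the typed equality, in the style of Abel, \"Ohman and Vezzosi, extended to a proof-irrelevant $\Prop$ as by Gilbert et al. For that, the hypothetical $\TElim$-$\eta$ rule would most likely have to be weakened to a propositional equality, which is all a Coq lemma like \texttt{trace\_eta} can express anyway. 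The paper's semantic route does not close this hole either. Soundness transports derivations into models, not back, and $\sem{t'}$ is only defined once $t'$ is known to be typed. So you are right not to lean on it, but you still need a real replacement.

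Second, your $\TElim$-on-$\stepC$ case silently uses an eliminator the paper does not state. By Definition~\ref{def:fintrace} and \textsc{FT-Step}, $\stepC$ extends $\tau:\FinTrace(\sigma_0,\sigma_1)$ at the end by $\pi:\Step(\sigma_1,e,\sigma_2)$. The recursive call $\TElim(P,b,s,\sigma_0,\sigma_1,\tau)$ in the $\beta$-step rule then lives in $P(\sigma_0,\sigma_1,\tau)$. But Definition~\ref{def:traceelim} types $s$ for extension at the front: it takes $\pi:\Step(\sigma_0,e,\sigma_1)$, $\tau:\FinTrace(\sigma_1,\sigma_2)$ and a hypothesis in $P(\sigma_1,\sigma_2,\tau)$. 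Even the $\stepC(\ldots,\tau,\pi)$ in its codomain does not typecheck against \textsc{FT-Step}. So ``instantiating $s$'' with what inversion of $\stepC$ gives you is ill-typed. You must first repair the signature to
\[
s:\Pi(\sigma_0\,\sigma_1\,\sigma_2:\State)(e:\Event)(\tau:\FinTrace(\sigma_0,\sigma_1))(\pi:\Step(\sigma_1,e,\sigma_2)).\,P(\sigma_0,\sigma_1,\tau)\to P(\sigma_0,\sigma_2,\stepC(\ldots,\tau,\pi)),
\]
which is the form Lemma~\ref{lem:red-closed} tacitly uses. After that repair your argument for this case goes through, given injectivity of $\FinTrace$ in its indices, which needs the same treatment as $\Pi$.

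A smaller point: parts (1) and (2), exactly as stated, are the primitive rules \textsc{Weaken} and \textsc{Subst} of Appendix~\ref{app:rules}, so each holds by a single rule application. The real consequence of their being primitive is that your inversion lemma must also handle derivations ending in \textsc{Weaken} or \textsc{Subst}. That is where your middle-of-context generalisations are genuinely needed.
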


\begin{proof}
\textbf{(1)} In CwF, $\Gamma\vdash A$ corresponds to $\mathbf{Ty}(\Gamma)$.
Context extension $\Gamma,x:B$ corresponds to a projection morphism
$\pi:\Gamma.B\to\Gamma$.
Since $\mathbf{Ty}$ is a contravariant functor, any type judgement holding at
$\Gamma$ can be pulled back along $\pi$ to $\Gamma.B$.

\textbf{(2)} Substitution corresponds to the key CwF structure
$\mathbf{Tm}(\Gamma.A)\cong\mathbf{Tm}(\Gamma)\circ A$.
Given $\Gamma,x:A\vdash t:B$ and $\Gamma\vdash s:A$,
the term $s:\Gamma\to A$ is a morphism at the semantic level.
The substituted term $t[s/x]$ corresponds to $\sem{t}\circ s$;
since $\mathbf{Tm}$ is a natural family over context changes,
substitution preserves type consistency and $B[s/x]$ is the pullback of $B$
along $s$.

\textbf{(3)} If $t\to_\beta t'$, the reduction corresponds to a computation
rule in CwF, specifically $\Pi$- and $\Sigma$-type $\beta$-rules, and the
deductive closure stability of AGM structures.
All new constructors ($\FinTrace$, $\TElim$, $\Kop_a$) are designed as
semantics-preserving structural operators, so $\sem{t}=\sem{t'}$.
Combining with soundness: $\Gamma\vdash t':A$.
\end{proof}

\begin{remark}[Scope of current semantic results]
The semantic foundation established here includes:
soundness (Theorem~\ref{thm:soundness}),
consistency (corollary of Theorem~\ref{thm:soundness}),
and term-model initiality (Theorem~\ref{thm:completeness}).
Not yet established: semantic completeness over set-valued sheaf models
(the semantic version of the normalisation theorem).
The technical challenge is that trace-indexed dependent types make standard
strong normalisation arguments more complex.
This constitutes a core future-work direction (Section~\ref{sec:limits}).
\end{remark}

\section{Applications}
\label{sec:applications}

\subsection{Runtime Monitoring}

Theorem~\ref{thm:coherence} gives a rigorous correspondence between belief
retraction and event-driven transitions in the SSRS framework.

If system state $\sigma_1$ encounters a violation event $e_{\mathit{viol}}$,
extending the trace to $\tau':=\cat(\tau,\single(e_{\mathit{viol}},\pi))$,
and if $\neg\mathsf{Viol}\in\Bel_\tau$, then Theorem~\ref{thm:coherence}
RS-surj guarantees belief retraction:
$\mathcal{B}(\tau)\not\subseteq\mathcal{B}(\tau')$.
Belief reconstruction uses the Levi identity:
$\Bel_{\tau'}=\Th\!\Big(\Bel_\tau\ContrOp\neg\psi_{e_{\mathit{viol}}}
\cup\{\psi_{e_{\mathit{viol}}}\}\Big)$.

\subsubsection*{End-to-End Example: Security Monitoring System}

\textbf{Setup.}
Network security monitoring system; $\Lang=\{s,l,c\}$
(``server secure'', ``all logs trusted'', ``connection normal'').
Initial beliefs: $\Bel_{\nilT(\sigma_0)}=\Th(\{s,l,c,l\to s\})$.
Entrenchment: $c<_{\sigma_0}s<_{\sigma_0}l<_{\sigma_0}(l\to s)<_{\sigma_0}\top$.

\textbf{Step 1 (normal event).}
Event $e_1$ (heartbeat, $\psi_{e_1}=c$).
Since $\neg c\notin\Bel_{\nilT(\sigma_0)}$, (R4) applies: $\Bel_{\tau_1}=\Bel_{\nilT(\sigma_0)}$.

\textbf{Step 2 (violation triggers retraction).}
Event $e_2$ (log forgery, $\psi_{e_2}=\neg l$).
Since $l\in\Bel_{\tau_1}$, SSRS RS-surj triggers Algorithm~\ref{alg:sel} on $l$:
\begin{enumerate}
\item Try $c$: $l\in\Th(S_0\setminus\{c\})$? Yes (independent); keep $c$.
\item Try $s$: $l\in\Th(S_1\setminus\{s\})$? Yes; keep $s$.
\item Try $l$: $l\in\Th(S_2\setminus\{l\})$? \textbf{No}; remove $l$.
\item Try $l\to s$: $l\in\Th(S_3\setminus\{l\to s\})$? No; keep it.
\end{enumerate}
Contraction: $\Bel_{\tau_1}\div l=\Th(\{s,c,l\to s\})$.
Levi revision: $\Bel_{\tau_2}=\Th(\{s,c,\neg l,l\to s\})$.

\textbf{Verification.}
(R2)~$\neg l\in\Bel_{\tau_2}$;
(R1)~consistent ($l\to s$ with $\neg l$ does not imply $\bot$);
minimal change: $s$, $c$ retained (higher entrenchment), only $l$ retracted;
SSRS coherence: $\mathcal{B}^{\mathsf{AGM}}(\tau_2)=\Bel_{\tau_2}$
satisfies Theorem~\ref{thm:coherence}.

The three-layer structure operates in concert:
syntax ($\FinTrace$) $\to$ operations (SSRS) $\to$ specification (AGM minimal
change).

\subsection{Distributed Systems: SSRS in a Restricted Model}

\begin{remark}[Relationship with FLP impossibility]
An earlier draft claimed to characterise the FLP impossibility theorem.
After review: the FLP theorem concerns \emph{asynchronous} systems with
process crashes and message delays, while the current SSRS framework is based
on a \emph{synchronous} event-driven model.
The model assumptions are fundamentally different and a formal equivalence
cannot be established through informal analogy.
This section is revised to the following precisely scoped weaker proposition.
\end{remark}

\begin{proposition}[Belief retraction under synchronous partition model]
\label{prop:flp-knowledge}
Let $G$ be a group of agents, $\tau$ the current global trace, and
$e_{\mathit{part}}$ a partition event.
Assume: (H1)~the system is modelled by a synchronous trace model;
(H2)~$e_{\mathit{part}}$ is modelled as a single event step in the trace;
(H3)~there exists $a\in G$ with $\neg\psi_{e_{\mathit{part}}}\in\Bel_\tau$.
Then by Theorem~\ref{thm:coherence}(RS-surj):
$\Bel_\tau\not\subseteq\Bel_{\tau'}$,
i.e., agent $a$'s belief set strictly shrinks after trace extension, and its
assumption about global consistency is structurally retracted.
\end{proposition}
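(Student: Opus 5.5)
The plan is to reduce Proposition~\ref{prop:flp-knowledge} to the (RS-surj) clause of Definition~\ref{def:ssrs}, applied to the SSRS $\mathcal{B}^{\mathsf{AGM}}$ of Construction~\ref{constr:bagm}. Theorem~\ref{thm:coherence} guarantees that this is a valid SSRS. Hypotheses (H1)--(H2) will be used only to place the partition event inside the trace framework. Concretely, (H1) lets us regard agent $a$'s beliefs as the trace-indexed family $\Bel_\tau=\mathcal{B}^{\mathsf{AGM}}(\tau)$, with entrenchment $\leq_\tau$ fixed at each state. (H2) supplies a step $\pi:\Step(\sigma_n,e_{\mathit{part}},\sigma_{n+1})$ with $\sigma_n=\replay(\tau)$ (Theorem~\ref{thm:replay}). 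It therefore yields the single-step extension $\epsilon:\tau\hookrightarrow\tau':=\cat(\tau,\single(e_{\mathit{part}},\pi))$, which is exactly the shape of extension that Definition~\ref{def:ssrs} speaks about.

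The first step is to compute $\Bel_{\tau'}$. By the defining clause of Construction~\ref{constr:bagm} we have $\Bel_{\tau'}=\Bel_\tau\RevOp\psi_{e_{\mathit{part}}}$, which is (RS-content). The second step applies the ``if'' direction of (RS-surj), using (H3) $\neg\psi_{e_{\mathit{part}}}\in\Bel_\tau$, to conclude $\Bel_\tau\not\subseteq\Bel_{\tau'}$. The interpretive gloss then follows. Let $\psi_{\mathit{cons}}$ be the global-consistency assumption, i.e.\ $\neg\psi_{e_{\mathit{part}}}$. By (R2), $\psi_{e_{\mathit{part}}}\in\Bel_{\tau'}$, and by (R5), $\Bel_{\tau'}$ is consistent provided $\psi_{e_{\mathit{part}}}\neq\bot$. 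Together these force $\neg\psi_{e_{\mathit{part}}}\notin\Bel_{\tau'}$, so this particular belief is retracted.

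The main obstacle is that Theorem~\ref{thm:coherence}, as actually stated, lists replay, concatenation, AGM success and sheaf restriction. It does not list (RS-surj) explicitly, so I would not lean on it blindly but verify (RS-surj) directly for $\mathcal{B}^{\mathsf{AGM}}$, at least in the needed direction. This is short. Suppose $\Bel_\tau\subseteq\Bel_\tau\RevOp\psi_e$. Then $\neg\psi_e\in\Bel_\tau$ would give both $\neg\psi_e$ and $\psi_e$ (by (R2)) in the revised set. That contradicts (R5) whenever $\psi_e$ is consistent. Hence, contrapositively, $\neg\psi_e\in\Bel_\tau$ implies $\Bel_\tau\not\subseteq\Bel_{\tau'}$. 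I will therefore add the side condition $\psi_{e_{\mathit{part}}}\neq\bot$: a partition event carries satisfiable content. With it, the argument needs only (R2), (R5) and Construction~\ref{constr:bagm}, and is independent of the converse direction of (RS-surj), which I would not claim.

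A final remark will record the limits of the conclusion. The claim ``strictly shrinks'' should be read as ``loses at least one belief'', namely $\Bel_\tau\not\subseteq\Bel_{\tau'}$, not $\Bel_{\tau'}\subsetneq\Bel_\tau$. The revised set also gains $\psi_{e_{\mathit{part}}}$, so it is not a subset of the old one. I would state it in this precise form.
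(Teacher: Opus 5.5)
Your argument is correct, but it takes a different route from the paper's. The paper's proof is a bare citation: it combines (H3) with the (RS-surj) clause of Definition~\ref{def:ssrs}, which $\mathcal{B}^{\mathsf{AGM}}$ is taken to satisfy by Theorem~\ref{thm:coherence}, and adds $\Bel_{\tau'}=\Bel_\tau\RevOp\psi_{e_{\mathit{part}}}$. You are right that none of the four items listed in Theorem~\ref{thm:coherence}, and hence nothing in \texttt{main\_coherence\_theorem}, is (RS-surj). That citation therefore rests only on the unelaborated phrase ``is a valid SSRS''.

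You instead derive the needed half directly from (R2) and (R5), which isolates what actually carries the proposition: success plus consistency of Levi revision. This exposes a hypothesis the paper omits. If $\psi_{e_{\mathit{part}}}$ is inconsistent, then $\neg\psi_{e_{\mathit{part}}}$ is a tautology, so (H3) holds automatically while $\Bel_{\tau'}=\Lang\supseteq\Bel_\tau$. In that case both the proposition and the ``if'' half of (RS-surj) fail. State your side condition as $\not\vdash\neg\psi_{e_{\mathit{part}}}$ rather than the syntactic $\psi_{e_{\mathit{part}}}\neq\bot$. Your reading of ``strictly shrinks'' as $\Bel_\tau\not\subseteq\Bel_{\tau'}$ is also the correct one, since $\psi_{e_{\mathit{part}}}$ is gained.

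What the paper's route buys is brevity and, nominally, the full biconditional. You need not disclaim the converse, though, since (R4) gives it at once: if $\neg\psi_e\notin\Bel_\tau$, then $\Bel_\tau\subseteq\Th(\Bel_\tau\cup\{\psi_e\})\subseteq\Bel_\tau\RevOp\psi_e$.

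One caveat affects both routes but is visible only in yours: everything hinges on (R5), and through it on (C2). The paper establishes both only by hand and lists both among the admitted Coq obligations. So this proposition is not in fact covered by the Coq-verified coherence theorem.
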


\begin{proof}
By (H3) and SSRS Definition~\ref{def:ssrs},
$\neg\psi_{e_{\mathit{part}}}\in\Bel_\tau$ triggers belief retraction;
by Theorem~\ref{thm:coherence}(RS-surj),
$\Bel_{\tau'}=\Bel_\tau*\psi_{e_{\mathit{part}}}$, so
$\Bel_\tau\not\subseteq\Bel_{\tau'}$.
\end{proof}

\paragraph{Limitations.}
Proposition~\ref{prop:flp-knowledge} depends on assumptions (H1)--(H3),
especially the synchrony assumption~(H1).
The FLP impossibility theorem treats asynchronous systems, which is beyond the
current framework's modelling capability.
Extending SSRS to asynchronous models is left as future work
(Section~\ref{sec:limits}).

\section{Related Work}
\label{sec:related}

\subsection{Trace Types and Indexed Containers}

Altenkirch et al.'s indexed container framework~\cite{altenkirch09} provides a
unified abstraction for inductive structures in dependent type theory.
$\StarT(\Step)$ can express finite-step execution.
As shown in Proposition~\ref{prop:star-vs-ft}, $\FinTrace$ and $\StarT(\Step)$
are isomorphic as path types, but $\TElim$'s step function directly exposes
$e:\Event$, enabling event-driven trace induction without additional projections.
This is an interface advantage, not new expressive power.

\subsection{Modal Type Theory}

Pfenning--Davies~\cite{pfenning01} develop S4 modal type theory.
Nanevski et al.'s CMTT~\cite{nanevski08} uses a discrete accessibility
relation.
Neither supports S5 derivation via a proof-relevant, Euclidean accessibility
relation.
In ZX-Calculus, $\Kop_a(\phi,\tau)$ is a standard $\Pi$-type instance requiring
no additional modal proof rules.

\subsection{Sheaf Semantics and Completeness}

Awodey--Kishida~\cite{awodey08} build sheaf semantics for epistemic and modal
logic on topological spaces (geometric locality), proving semantic completeness
for first-order modal logic.
Both approaches use presheaf semantics, but with different notions of
locality:
\begin{itemize}
\item Awodey--Kishida: locality from topological neighbourhoods (geometric
  locality); proves semantic completeness;
\item ZX-Calculus: locality from trace extension and restriction (historical
  locality); establishes only the CwF universal property
  (Theorem~\ref{thm:completeness}).
\end{itemize}
The Separation Theorem (Theorem~\ref{thm:separation}) generalises
Awodey--Kishida's sheaf-semantics ideas operationally to the dynamic
trace-indexed setting: non-monotone knowledge corresponds to non-surjective
restriction maps, a structural characterisation absent from the original
topological setting.

\subsection{AGM Revision and Constructive Methods}

AGM~\cite{alchourron85} was extended by Grove~\cite{grove88},
G\"{a}rdenfors~\cite{gardenfors88},
Katsuno--Mendelzon~\cite{katsuno91}, Ferm\'{e}--Hansson~\cite{ferme11}.
Van Ditmarsch et al.~\cite{vanditmarsch07}: Isabelle mechanisation of DEL
expansion; no partial meet contraction.
Schlechta~\cite{schlechta97}: constructive non-monotone reasoning; three key
technical differences from this paper:
\begin{enumerate}
\item \textbf{$\Sigma$-packaging.} Schlechta does not package contraction
  results with correctness witnesses as $\Sigma$-types.
  $\Sel_\tau(\psi)$ packs the candidate belief set and (C1)--(C4) witnesses
  as a dependent pair.
\item \textbf{Explicit constructive algorithm.} Algorithm~\ref{alg:sel} is
  the first constructive PMC implementation in MLTT satisfying all (C1)--(C4).
\item \textbf{Trace-indexed dynamic semantics.} Schlechta's framework is
  static; this paper embeds contraction in trace-indexed
  $\mathcal{B}^{\mathsf{AGM}}(\tau)$.
\end{enumerate}

\subsection{Dynamic Epistemic Logic (DEL) and Mechanisation}

Baltag--Moss--Solecki~\cite{baltag99} propose BMS dynamic epistemic logic.
Van Ditmarsch et al.~\cite{vanditmarsch07} mechanise DEL expansions in
Isabelle; van Ditmarsch--Kooi~\cite{vanditmarsch08} give semantic results for
ontic and epistemic change.
These works do not internalise AGM contraction as a constructive $\Sigma$-type.

\subsection{Non-Monotone Modal Logic}

Kraus--Lehmann--Magidor~\cite{kraus90} develop preferential models for
non-monotone logics.
ZX-Calculus builds non-monotonicity from presheaf restriction maps (not
preferential orderings), naturally integrating it with trace-indexed dynamics.

\subsection{Belief Revision in Description Logics}

In the knowledge representation community, ontology revision in Description
Logics (DL) is an important application of belief revision.
Ribeiro--Wassermann study AGM revision in DL-Lite; Qi et al.\ characterise
contraction in the EL family of description logics.
These works are typically based on semantic entailment relations and do not
provide constructive algorithms.
ZX-Calculus's trace-indexed framework can in principle be extended to model
the temporal evolution of ontologies, but integration with specific DL
languages is left as future work.

\subsection{Summary Comparison}

\begin{table}[ht]
\centering
\caption{Comparison with key prior work
(NM~=~non-monotone semantics; PR~=~proof-relevant reachability;
Constr.~=~constructive PMC; Coq~=~mechanised; $\dag$~=~partial).}
\label{tab:related}
\small
\renewcommand{\arraystretch}{1.15}
\begin{tabular}{@{}lccccccccc@{}}
\toprule
Work & \rotatebox{75}{Trace type} & \rotatebox{75}{Dep.\ index} &
\rotatebox{75}{TraceElim} & \rotatebox{75}{NM sheaf} & \rotatebox{75}{PR reach.} &
\rotatebox{75}{S5 derived} & \rotatebox{75}{AGM full} & \rotatebox{75}{SSRS} &
\rotatebox{75}{Coq} \\
\midrule
MLTT & -- & \checkmark & -- & -- & -- & -- & -- & -- & -- \\
Pfenning--Davies & -- & $\dag$ & -- & -- & -- & S4 & -- & -- & -- \\
Awodey--Kishida & -- & -- & -- & \checkmark & -- & \checkmark & -- & -- & -- \\
van Ditmarsch et al. & -- & -- & -- & -- & -- & \checkmark & exp.\ only & -- & \checkmark \\
Schlechta & -- & -- & -- & -- & -- & -- & $\dag$ & -- & -- \\
Classical AGM & -- & -- & -- & -- & -- & -- & classical & -- & -- \\
\midrule
\textbf{ZX-Calculus (this paper)} & \checkmark & \checkmark & \checkmark &
\checkmark & \checkmark & \checkmark & Constr. & \checkmark & $\dag$(34) \\
\bottomrule
\end{tabular}
\end{table}

\noindent
\textbf{Note.}
The Coq column $\dag$(34) denotes 34 complete proofs: BP-comp Failure Theorem
and SSRS Coherence are fully mechanised; 7 obligations remain
(Appendix~\ref{app:coq}).
Van Ditmarsch et al.'s Isabelle mechanisation covers DEL expansion; the scope
differs from our Coq implementation and direct comparison is not straightforward.

\section{Limitations and Future Work}
\label{sec:limits}

\subsection{Mechanisation}

The remaining 7 Coq obligations are the core next step.
The current implementation has 34 complete proofs (Appendix~\ref{app:coq});
the primary gaps are:
\begin{enumerate}
\item RC-elim step (Canonicity): full symbolic inductive argument for
  reducibility candidates under higher-order motives.
\item C2 (empty contraction success): requires propositional completeness
  (converse direction of soundness w.r.t.\ Boolean semantics).
\item R5 (revision consistency): requires classical double-negation elimination.
\item R6 (extensionality): requires additional axioms on \texttt{div}
  (logical equivalence implies contraction equality).
\item Disjunctive Entrenchment Lemma, EE5 case.
\item R7/R8 (conjunctive revision): full axiomatisation of entrenchment order.
\item Term-model initiality: full categorical treatment (natural transformations,
  CwF homomorphisms).
\end{enumerate}

\subsection{Full Semantic Completeness}

Theorem~\ref{thm:completeness} establishes only the syntactic universal
property of the term model.
Semantic completeness over set-valued sheaf models remains open.

\subsection{BP-comp Sufficient Conditions}

Theorem~\ref{thm:bp-comp-fails} exhibits a countermodel.
Future work can characterise sufficient conditions for BP-comp to hold,
e.g., when the entrenchment ordering is globally fixed and state-independent.

\subsection{Infinite Traces}

$\mathsf{InfTrace}$ can be defined via guarded co-induction; extending SSRS to
$\omega$-chains is future work.

\subsection{Integration with Asynchronous Computation Models}

Extending SSRS to asynchronous models (e.g., partial-order event structures)
would enable closer connection to the FLP impossibility theorem and distributed
systems theory.

\section{Conclusion}
\label{sec:conclusion}

This paper has presented ZX-Calculus, a conservative extension of MLTT
integrating trace-indexed types, sheaf non-monotone semantics, and constructive
AGM belief revision.
Primary contributions:

\begin{itemize}
\item Proposition~\ref{prop:star-vs-ft}: precise characterisation of $\FinTrace$
  vs.\ $\StarT(\Step)$---isomorphic as path types, with $\TElim$ providing an
  ergonomically superior interface for event-indexed motives.
\item Lemma~\ref{lem:red-transport} (Transport Lemma): fills a critical step in
  the Canonicity framework for trace-dependent type systems.
\item Lemma~\ref{lem:disj-ench} (Disjunctive Entrenchment Lemma):
  fully self-contained constructive derivation; foundation for complete R7/R8.
\item Theorem~\ref{thm:bp-comp-fails} (BP-comp Failure) and
  Definition~\ref{def:ssrs}/Theorem~\ref{thm:coherence} (SSRS Coherence):
  the fundamental tension between AGM revision and sheaf functoriality revealed;
  SSRS is the correct integration, with $\mathcal{B}^{\mathsf{AGM}}$ satisfying
  all SSRS axioms; both results Coq-verified with zero \texttt{admit}s.
\item 34 complete Coq proofs; BP-comp Failure Theorem and SSRS Coherence
  verified with zero \texttt{admit}s.
\end{itemize}

The primary open obligation is completing the remaining 7 Coq mechanisation
items, transforming the hand-verified proofs here into machine-checked code.

\medskip\noindent
\textbf{Summary.}
ZX-Calculus unifies trace types, non-monotone belief revision, and constructive
type theory, providing a verifiable foundation for dynamic knowledge, retraction
mechanisms, and runtime monitoring.
Future work includes infinite trace extensions, BP-comp sufficient condition
characterisation, full semantic completeness, and complete Coq mechanisation.

\appendix
\section{Complete ZX-Calculus Inference Rules}
\label{app:rules}

\subsection*{Standard MLTT Rules}
\[
\frac{}{\vdash\cdot\ \mathsf{ctx}}\textsc{Ctx-Empty}
\qquad
\frac{\vdash\Gamma\quad\Gamma\vdash A:\Uc{i}}{\vdash\Gamma,x:A}\textsc{Ctx-Ext}
\qquad
\frac{\vdash\Gamma,x:A,\Delta}{\Gamma,x:A,\Delta\vdash x:A}\textsc{T-Var}
\]
\[
\frac{\Gamma\vdash t:A\quad\Gamma\vdash B:\Uc{i}}{\Gamma,x:B\vdash t:A}\textsc{Weaken}
\qquad
\frac{\Gamma,x:A\vdash t:B\quad\Gamma\vdash s:A}{\Gamma\vdash t[s/x]:B[s/x]}\textsc{Subst}
\qquad
\frac{\Gamma\vdash t:A\quad\Gamma\vdash A\equiv B:\Uc{i}}{\Gamma\vdash t:B}\textsc{Conv}
\]
Computation: $(\lambda x.t)\,a\equiv t[a/x]$, $\lambda x.(fx)\equiv f$;
proof irrelevance: if $\Gamma\vdash p,q:P$ and $\Gamma\vdash P:\Prop$ then
$\Gamma\vdash p\equiv q:P$.

\subsection*{Trace Type Rules}
\[
\frac{\Gamma\vdash\sigma_0,\sigma_n:\State}
     {\Gamma\vdash\FinTrace(\sigma_0,\sigma_n):\Uc{0}}\textsc{FT-Form}
\qquad
\frac{\Gamma\vdash\sigma:\State}
     {\Gamma\vdash\nilT(\sigma):\FinTrace(\sigma,\sigma)}\textsc{FT-Nil}
\]
\[
\frac{\Gamma\vdash\tau:\FinTrace(\sigma_0,\sigma_1)\quad
      \Gamma\vdash e:\Event\quad\Gamma\vdash\pi:\Step(\sigma_1,e,\sigma_2)}
     {\Gamma\vdash\stepC(\sigma_0,\sigma_1,\sigma_2,e,\tau,\pi):\FinTrace(\sigma_0,\sigma_2)}
     \textsc{FT-Step}
\]
\[
\TElim(P,b,s,\sigma,\sigma,\nilT(\sigma))\equiv b(\sigma)\quad\textsc{FT-$\beta$-nil}
\]
\[
\TElim(P,b,s,\sigma_0,\sigma_2,\stepC(\ldots,\tau,\pi))\equiv
s(\ldots,\tau,\pi,\TElim(P,b,s,\sigma_0,\sigma_1,\tau))\quad\textsc{FT-$\beta$-step}
\]

\subsection*{Epistemic Rules ($\Pi$-type instances)}
\[
\frac{\Gamma\vdash a:\Ag\quad\Gamma\vdash\phi:\FinTrace\to\Prop\quad
      \Gamma\vdash\tau:\FinTrace}
     {\Gamma\vdash\Kop_a(\phi,\tau):\Prop}\textsc{K-Form}
\]
\[
\frac{\Gamma,\tau':\FinTrace,h:(\tau\simag{a}\tau')\vdash p:\phi(\tau')}
     {\Gamma\vdash\lambda\tau'.\lambda h.\,p:\Kop_a(\phi,\tau)}\textsc{K-Intro}
\qquad
\frac{\Gamma\vdash k:\Kop_a(\phi,\tau)\quad\Gamma\vdash\tau':\FinTrace\quad
      \Gamma\vdash h:\tau\simag{a}\tau'}
     {\Gamma\vdash k\,\tau'\,h:\phi(\tau')}\textsc{K-Elim}
\]

\subsection*{AGM Revision Rules ($\Sigma$-type instances)}
\[
\frac{\Gamma\vdash\mathcal{K}:\mathsf{BeliefSet}\quad
      \Gamma\vdash\psi:\Lang\quad
      \Gamma\vdash\leq_\tau:\mathsf{EntrOrder}(\mathcal{K})}
     {\Gamma\vdash\Sel_\tau(\psi):
      \Sigma(\mathcal{K}':\mathsf{BeliefSet}).\,\mathsf{PMC}(\mathcal{K}',\mathcal{K},\psi)}
     \textsc{Sel-Form}
\]

\section{Coq Mechanisation Overview}
\label{app:coq}\label{app:agda}

This paper is accompanied by a Coq mechanisation (\texttt{zx-calculus-coq})
comprising 19 source files in four layers:
\texttt{Core} $\to$ \texttt{Semantics} $\to$ \texttt{AGM} $\to$
\texttt{Integration}.

\subsection*{B.1\quad Fully Mechanised Theorems (zero \texttt{admit}s)}

\paragraph{Core layer (\texttt{theories/Core/})}
\textbf{TraceElim $\beta$/$\eta$ rules} (\texttt{TraceElim.v}):
\texttt{beta\_nil}, \texttt{beta\_step}, \texttt{trace\_eta}.
\textbf{Deterministic Replay} (\texttt{Replay.v}):
\texttt{deterministic\_replay}, \texttt{replay\_cat}.
\textbf{Concatenation algebra} (\texttt{ConcatAlgebra.v}):
left/right unit, associativity, length additivity, prefix reflexivity/transitivity.
\textbf{Canonicity placeholders} (\texttt{Canonicity.v}):
\texttt{normalize\_id}, \texttt{strong\_normalization}.

\paragraph{Semantics layer (\texttt{theories/Semantics/})}
\textbf{Trace category} (\texttt{TraceCategory.v}):
\texttt{trace\_id}, \texttt{trace\_comp}, \texttt{trace\_antisym} (via \texttt{lia}).
\textbf{Presheaf structure} (\texttt{Presheaf.v}):
\texttt{KnowledgePSh} record, \texttt{nonmonotone\_characterization},
\texttt{constant\_is\_soft}.
\textbf{Separation} (\texttt{Separation.v}): \texttt{separation}.
\textbf{CwF substitution} (\texttt{CwF.v}):
\texttt{subst\_ty'\_id}, \texttt{subst\_ty'\_comp}.

\paragraph{AGM layer (\texttt{theories/AGM/})}
\textbf{$\Th$ monotonicity}: \texttt{cn\_monotone}.
\textbf{R2 success}: \texttt{R2\_holds}.
\textbf{R3 inclusion}: \texttt{R3\_holds}.
\textbf{Contraction C1/C3}: \texttt{identity\_C1}, \texttt{identity\_C3},
\texttt{empty\_C1}.
\textbf{Conjunctive entailment}: \texttt{conj\_entailment\_left/right}.
\textbf{Disjunctive Entrenchment Lemma (easy case)}: totality-derived case
complete; EE5 case \texttt{admit}ted.

\paragraph{Integration layer (\texttt{theories/Integration/})}
\textbf{BP-comp failure counterexample}: \texttt{bp\_comp\_fails\_R2}
(proved by \texttt{inversion}, zero \texttt{admit}s).
\textbf{SSRS construction}: \texttt{make\_ssrs}, \texttt{ssrs\_R2},
\texttt{ssrs\_coherence}.
\textbf{Coherence and unification}: \texttt{main\_coherence\_theorem},
\texttt{zx\_calculus\_unification} (zero \texttt{admit}s).

\subsection*{B.2\quad Current \texttt{admit} List (7 Obligations)}

\begin{enumerate}
\item[(1)] Canonicity RC-elim step (\texttt{Canonicity.v}).
\item[(2)] C2 empty contraction success (\texttt{Contraction.v}):
  requires propositional completeness.
\item[(3)] R5 revision consistency (\texttt{AGMPostulates.v}):
  requires classical double-negation elimination.
\item[(4)] R6 extensionality (\texttt{Revision.v}):
  requires extra axioms on \texttt{div}.
\item[(5)] Disjunctive Entrenchment Lemma, EE5 case (\texttt{AGMPostulates.v}).
\item[(6)] R7/R8 conjunctive revision (\texttt{AGMPostulates.v}):
  require full entrenchment axiomatisation.
\item[(7)] Term-model initiality (\texttt{CwF.v}):
  stated as \texttt{Axiom}.
\end{enumerate}

\subsection*{B.3\quad Mechanisation Status}

\begin{center}
\begin{tabular}{lcc}
\toprule
Module & Complete proofs & \texttt{admit}s \\
\midrule
Core & 14 & 0 \\
Semantics & 8 & 1 (initiality Axiom) \\
AGM & 7 & 6 \\
Integration & 5 & 0 \\
\midrule
\textbf{Total} & \textbf{34} & \textbf{7} \\
\bottomrule
\end{tabular}
\end{center}

BP-comp Failure and SSRS Coherence are \textbf{fully mechanised} with zero
\texttt{admit}s.
The remaining 7 obligations are concentrated in the AGM layer,
each with a clear completion path.


\end{document}